\documentclass[conference,fleqn]{IEEEtran}
\usepackage{algorithm}
\usepackage{algpseudocode}
\usepackage{amssymb}
\usepackage{amsmath}
\usepackage{amsthm}

\usepackage[ligature, inference]{semantic}
\usepackage{proof}
\usepackage{soul}

\newcommand{\unfix}[2]{}
\newtheorem{thm}{Theorem}

\newtheorem{lemma}{Lemma}
\newtheorem{definition}{Definition}

\usepackage{xcolor}
\DeclareMathOperator{\ord}{\mathit{ord}}
\DeclareMathOperator{\lt}{\mathit{lt}}

\usepackage{hyperref}

\newcommand{\algeq}{=_E}
\DeclareMathOperator{\dom}{dom}
\DeclareMathOperator{\Ana}{Ana}
\DeclareMathOperator{\Chck}{Check}

\newcommand{\role}[1]{\mathsf{#1}}

\newcommand{\inact}{\mathbf 0}
\newcommand{\lock}{\mathsf{lock}}
\newcommand{\unlock}{\mathsf{unlock}}

\newcommand{\ite}[3]{\mathsf{if}~ #1 ~\mathsf{then}~ #2 ~\mathsf{else}~ #3}

\newcommand{\sem}[1]{[\![#1]\!]}

\newcommand{\A}{\mathcal{A}}

\newcommand{\C}{\mathcal{C}}

\newcommand{\F}{\mathcal{F}}

\newcommand{\LC}{\mathcal{L}}
\newcommand{\X}{\mathcal{X}}
\newcommand{\T}{\mathcal{T}}

\newcommand{\I}{\mathcal{I}}

\newcommand{\kwl}[1]{\mathbf{#1}}
\newcommand{\kwf}[1]{\mathsf{#1}}
\newcommand{\kwc}[1]{\mathsf{#1}}

\newcommand{\kwt}[1]{\mathsf{#1}}
\newcommand{\kwe}[1]{\mathsf{#1}}

\newcommand{\var}[1]{\mathit{#1}}

\newcommand{\pvt}[1]{[\![#1]\!]_\mathsf{pv}}

\newcommand{\NO}{\mathit{NO}}
\newcommand{\NR}{\mathit{NR}}
\newcommand{\Text}{\mathit{Text}}
\newcommand{\ttp}{\role {ttp}}
\newcommand{\pk}{\mathit{pk}}
\newcommand{\ek}{\mathit{ek}}
\newcommand{\sk}{\mathit{sk}}
\newcommand{\timeout}{\mathit{timeout}}
\newcommand{\fresolve}{f_{\mathit{resolve}}}
\newcommand{\fabort}{f_{\mathit{abort}}}
\newcommand{\resolved}{\mathit{resolved}}
\newcommand{\aborted}{\mathit{aborted}}
\newcommand{\ttpmem}{\mathit{ttp\_mem}}

\newcommand{\blank}{\mathit{blank}}
\newcommand{\upd}{\mathit{upd}}
\newcommand{\msg}{\mathit{msg}}

\newcommand{\shk}{\mathit{shk}}
\newcommand{\pair}{\mathit{pair}}
\newcommand{\vpair}{\mathit{vpair}}
\newcommand{\fst}{\mathit{fst}}
\newcommand{\snd}{\mathit{snd}}
\newcommand{\scrypt}{\mathit{scrypt}}
\newcommand{\dscrypt}{\mathit{dscrypt}}
\newcommand{\vscrypt}{\mathit{vscrypt}}
\newcommand{\crypt}{\mathit{crypt}}
\newcommand{\dcrypt}{\mathit{dcrypt}}
\newcommand{\vcrypt}{\mathit{vcrypt}}
\newcommand{\sign}{\mathit{sign}}
\newcommand{\open}{\mathit{open}}
\newcommand{\vsign}{\mathit{vsign}}
\newcommand{\inv}{\mathit{inv}}
\newcommand{\pubk}{\mathit{pub_k}}
\newcommand{\vinv}{\mathit{vinv}}
\newcommand{\cexp}{\mathit{exp}}
\newcommand{\expinv}{\mathit{exp}^{-1}}
\newcommand{\vexp}{\mathit{vexp}}

\newcommand{\niauth}[3]{#1~\mathsf{ni\text{-}authenticates}~#2~\mathsf{on}~#3}
\newcommand{\auth}[3]{#1~\mathsf{authenticates}~#2~\mathsf{on}~#3}
\newcommand{\secret}[2]{#1~\mathsf{secret~between}~#2}

\newcommand{\fv}{\mathsf{fv}}
\newcommand{\send}{\mathsf{send}}
\newcommand{\receive}{\mathsf{receive}}
\newcommand{\event}{\mathsf{event}}
\newcommand{\Ag}{\mathit{Ag}}

\renewcommand{\i}{\role i}
\newcommand{\cont}{\mathit{cont}}

\newcommand{\confver}[1]{}
\newcommand{\diffver}[1]{}
\newcommand{\extver}[1]{#1}
\usepackage[defaultcolor=magenta,final]{changes}

\newcommand{\placeholder}{\url{http://imm.dtu.dk/\~samo/cryptochoreo.pdf}}

\title{Cryptographic Choreographies}

\author{\IEEEauthorblockN{Sebastian M{\"o}dersheim}
\IEEEauthorblockA{\textit{Technical University of}\\\textit{Denmark,}
Denmark\\
0000-0002-6901-8319}
\and
\IEEEauthorblockN{Simon Lund}
\IEEEauthorblockA{\textit{Technical University of}\\\textit{Denmark,}
Denmark\\
0009-0005-2957-3472}
\and
\IEEEauthorblockN{Alessandro Bruni}
\IEEEauthorblockA{\textit{IT-University of 
  }\\
  \emph{Copenhagen}, Denmark\\
0000-0003-2946-9462}
\and
\IEEEauthorblockN{Marco Carbone}
\IEEEauthorblockA{\textit{IT-University of 
  }\\
  \emph{Copenhagen}, Denmark\\
  0000-0001-9479-2632 }
\and
\IEEEauthorblockN{Rosario Giustolisi}
\IEEEauthorblockA{\textit{IT-University of 
  }\\
  \emph{Copenhagen}, Denmark\\
  0000-0002-2917-9601}
}

\begin{document}

\maketitle

\begin{abstract}
  We present CryptoChoreo, a choreography language for the
  specification of cryptographic protocols. Choreographies can be
  regarded as an extension of Alice-and-Bob notation, providing an
  intuitive high-level view of the protocol as a whole (rather than
  specifying each protocol role in isolation). The extensions over
  standard Alice-and-Bob notation that we consider are
  nondeterministic choice, conditional branching, and mutable
  long-term memory. We define the semantics of CryptoChoreo by
  translation to a process calculus. This semantics entails an
  understanding of the protocol: it determines how agents parse and
  check incoming messages and how they construct outgoing messages, in
  the presence of an arbitrary algebraic theory and nondeterministic
  choices made by other agents. While this semantics entails algebraic
  problems that are in general undecidable, we give an implementation
  for a representative theory.  We connect this translation to
  ProVerif and show on a number of case studies that the approach is
  practically feasible.
\end{abstract}

\vspace{1ex}
\noindent \textbf{Acknowledgments:} Partly funded by EU Horizon Europe under Grant Agreement no. 101093006 (TaRDIS).

\section{Introduction}\label{sec:intro}
Specification languages for security protocols can be roughly divided
into three classes. The most low-level one are based on multi-set
rewriting rules, such as the Tamarin input
language~\cite{MeierSchmidtCremersBasin13} and the AVISPA Intermediate
Format~\cite{ArmandoEtAl05}, where each rule describes state
transitions corresponding usually to a pair of protocol steps from the
view of one honest agent: receiving a message, processing and checking
it, and sending the next message. More high-level are languages based
on process calculus such as ProVerif~\cite{Blanchet16}, where
typically each role of the protocol is described like a program, often
as a sequence of sending and receiving steps. The most high-level are
languages based on Alice-and-Bob
notation~\cite{DBLP:conf/nspw/Millen96,DBLP:journals/jcs/Lowe98,JRVP00,Modersheim09,DBLP:journals/ipl/ChevalierR10,sps,DBLP:conf/birthday/BasinKRS15},
which describe the entire
protocol by an ideal run of the protocol as a sequence of
$A\rightarrow B: M$ steps where role $A$ sends message $M$ to role
$B$, thus describing the interplay of all roles.

Alice-and-Bob notation is very intuitive and succinct because it gives
the synopsis of the protocol and leaves implicit how agents construct
the messages they send, and how they parse and check the messages they
receive. The latter is a non-trivial problem one has to solve when
defining a formal language based on Alice-and-Bob notation, namely
when giving a formal \emph{semantics} by translation to a lower-level
language. This was described by~\cite{JRVP00,CaleiroBasinVigano06} for
models in the free term algebra, but a key question is how to deal
with algebraic properties as needed, for instance, for
Diffie-Hellman. If Alice needs to construct $\exp(\exp(g,X),Y)$, given
her own secret $X$ and the public value $\exp(g,Y)$ from Bob, the
semantics needs to infer that this is possible by composing
$\exp(\exp(g,Y),X)$ since this is equivalent to the goal term by the
algebraic properties of exponentiation. It turns out that one can
define such semantics in a general, uniform, and concise way for an
\emph{arbitrary} algebraic theory as an intruder deduction
problem~\cite{Modersheim09,DBLP:journals/ipl/ChevalierR10,sps,DBLP:conf/birthday/BasinKRS15}.

As a side effect of ``abusing'' the intruder deduction 
to define the behavior of honest agents, one
prevents many specification errors that can easily happen in the
lower-level formalisms, e.g., when the message sent by one agent are
different from the messages that another agent expects, rendering the
protocol unexecutable. This may lead in the worst case to a false
negative (an attack of the real system is not detected because of a
specification error). Many such errors are prevented by formal Alice-and-Bob approaches because the protocol would be refused as
unexecutable by the compiler.

Thus, Alice-and-Bob notation is a beneficial and accessible 
specification language that can be used even without a deep 
background in formal verification. It is also striking how often
scientific works that formalize a protocol using a lower-level
language  first summarize the protocol informally in Alice-and-Bob
notation. 
However, existing formal Alice-and-Bob languages do not support branches in the protocol execution (by conditions or nondeterminism), unbounded repetition, or mutable long-term memory: everything is restricted to a linear session of fixed length.

While informal uses of Alice-and-Bob notation can easily be extended 
ad hoc, the first main contribution of this paper is a new
choreography language, called \textit{CryptoChoreo}, which extends
Alice-and-Bob notation with nondeterministic choice,  branching, and mutable long-term memory.\footnote{Since a choreography can be
  executed an unbounded number of times, nondeterminism and long-term
  memory are sufficient to formulate repetitions and sequential
  composition of protocols without an explicit repetition
  construct.}
  
Such features are needed for instance if we want to formulate a
protocol with a server that maintains a long-term database and that
may react to a request in different ways depending on the current state
of its database. Nondeterministic choice can be helpful for modelling
several options in a protocol that are at a participant's discretion,
where we do not want to formalize how they make a decision. Especially
this allows us to formulate an API where a user can
nondeterministically choose to send any of a number of commands to a
server (who may in turn ask other servers in order to answer the
request).

CryptoChoreo is, in a sense, a conservative extension of formal
Alice-and-Bob languages:  we give a semantics---parameterized
over an arbitrary algebraic theory---that agrees with standard
Alice-and-Bob languages on the subset that does not use the new
constructs. A particular challenge for this semantics is
to integrate the algebraic understanding of the protocol  with
 branching: if one party has made a nondeterministic choice, 
other parties do not necessarily know which choice was taken. For instance, our
semantics allows for the following protocol: Alice nondeterministically
chooses one of two types of message that she sends encrypted for Bob
over an intermediary server as an authentication service; the server
is unaware in which branch the execution is, but in each case it can execute its step uniformly by checking a MAC from Alice and signing the resulting message.

The semantics is formulated as a translation to local behaviors, i.e.,
a process for each role of the protocol. In general, this involves
algebraic problems that are not recursively computable (since,
e.g., whether two terms are equal under a set of algebraic equations is
in general undecidable). The second main contribution is to give a
computable translation for an algebraic theory that includes standard
constructors and destructors as well as exponentiation (for
Diffie-Hellman).

As a third contribution, we connect our translator with ProVerif and
demonstrate the effectiveness of our approach with several case
studies. A particular challenge is that ProVerif's abstraction often
is not precise enough when the long-term memory induces non-monotonic
behavior (e.g., when a certain action is possible only until a change
of the memory state) and thus fails to verify a protocol. We have
developed several heuristics to make sound encodings in ProVerif that
often overcome these problems.

Note that both the translation to process calculus and the target of
the translation, ProVerif, are using Dolev-Yao models, i.e., treating
cryptographic operations as blackboxes. The translation formalizes how
each role is \emph{supposed to execute} the choreography, and this
should not require the role to break the cryptography. Thus, it makes
sense that our semantics basically requires that a Dolev-Yao intruder
with the specified initial knowledge is able to correctly execute a
given role. For the target of the translation one may, however,
also consider computational verification like CryptoVerif. This would
require specifying more details like key-sizes and required
cryptographic properties of the encryption primitives, as well as
different kinds of goal specifications. Similarly, the translation
could also be used to generate a secure-by design implementation, in
the sense that the implementation creates outgoing, and checks
incoming, messages in the same way as the formal ProVerif model.

The rest of this paper is organized as follows: in
Section~\ref{sec:choreo} we define the syntax of CryptoChoreo
 and give an example; in Section~\ref{sec:projection} we define the semantics for
an arbitrary algebraic theory; 
in Section~\ref{sec:mec}, we describe the practical implementation for a
representative algebraic theory; in Section~\ref{sec:case:studies:meta} we describe the
connection to ProVerif and our case studies;
\added{we present related work in Section~\ref{sec:related:work};}
and we conclude in Section~\ref{sec:conclusions}.

\confver{
  \added{ An extended version of this paper, containing in particular
    the proofs omitted from Section~\ref{sec:mec} and details on the
    connection to ProVerif, is available at~\placeholder.}
}

\section{Choreography Language}\label{sec:choreo}
\begin{figure}[!t]
  \centering
  \begin{displaymath}
    \begin{array}{l@{\;\!}l}
      \!\role A : & (\nu M.\,\\
                  &\role A \rightarrow \role s: \scrypt((\role B,{\color{blue}\crypt((\msg,M),\ek(\role
                    B))}),\shk(\role A,\role s)).\\
                  & \role s \rightarrow \role B: \sign((\role A,{\color{blue}\crypt((\msg,M),\ek(\role
                    B))}),\inv(\sk(\role s))).\\
                  &\niauth{\role B}{\role A}{M})\\
      + & (\nu K.\,\\
                  & \role A \rightarrow \role s: \scrypt((\role B,{\color{blue}\crypt((\upd,K),\ek(\role
                    B))}),\shk(\role A,\role s)).\\
                  & \role s \rightarrow \role B: \sign((\role A,{\color{blue}\crypt((\upd,K),\ek(\role
                    B))}),\inv(\sk(\role s))).\\
                  &\secret{K}{\role A,\role B})
    \end{array}
  \end{displaymath}
  \caption{Example Choreography.}\label{fig:ex:choreo}
\end{figure}
Let us start with the example choreography in
Fig.~\ref{fig:ex:choreo}. We discuss later the front-matter
declarations (e.g., types and initial knowledge) that is needed for a
full specification.
Role $\A$ is starting the choreography and first makes a
nondeterministic choice ($+$) about which of the two
sub-choreographies to run. In the first case, $\role A$ generates a
new random $M$ (representing a message to send to $\role B$)\replaced{, pairs it}{ together}
with a constant tag $\msg$ (indicating that this is a message
transmission), then asymmetrically encrypts it with the encryption key
$\ek(\role B)$, highlighted in blue. $\role A$ then symmetrically
encrypts the blue message and the name of $\role B$ with a shared key
$\shk(\role A,\role s)$ with the (trusted) server $\role s$. Suppose
only $\role B$ knows the decryption key $\inv(\ek(\role B))$, then
$\role s$ can only decrypt the outer symmetric encryption, but not the
blue message. The next step is that $\role s$ signs the blue message
and the sender name $\role A$ using the private signature key
$\inv(\sk(\role s))$; the idea is that $\role s$ is vouching that the
blue message really came from $\role A$ (as the symmetric encryption
guarantees to $\role s$) and $\role B$ can verify the signature
knowing the corresponding public signing key $\sk(\role s)$. Finally,
we have an authentication goal when $\role B$ receives this message:
namely, that $\role A$ has indeed intended to send the message $M$ to
$\role B$. Non-injective ($\mathsf{ni}$ in $\mathsf{ni\text{-}authenticates}$) here means
that $\role B$ has no freshness guarantee (the message may be a
replay). The other sub-choreography is very similar, except that here
$\role A$ sends a different kind of message, a key update with a fresh
key $K$ with a different goal: $K$ is secret between $\role A$ and
$\role B$ (of course both sub-choreographies could have a secrecy and an
authentication goal).

Each of the two sub-choreographies could be specified in existing
formal Alice-and-Bob languages, but not the nondeterministic
choice. Note that $\role s$ here does not realistically know which
sub-choreography was chosen by $\role A$. The semantics we give below
sorts this out correctly: the server shall do the same operations in
both choreographies and simply handle the blue message as a black
box. Note that an intruder playing role $\role A$ is also allowed; and
this intruder may form a blue message that complies with neither
sub-choreography; $\role s$ will anyway accept this message if
everything it can check complies with the protocol.

\paragraph{Terms} We build terms using an alphabet $\Sigma$ of
function symbols and a set of variables $\mathcal{V}$. We denote all
function symbols with lower-case letters and all variables with
upper-case letters. In the above example, e.g., $\scrypt$, $\shk$,
$\inv$, $\msg$, and $\role s$ are function symbols (constants are
function symbols with $0$ arguments), while $\role A$ and $M$ are
variables. Variables mean that they can take a different value in
every run of the choreography. We use sans-serif font to denote
\emph{roles} of the protocol; they can be variables like $\role A$ and
$\role B$ or constants like $\role s$. The latter means that there is
one fixed player who cannot be the intruder---an easy way to specify a
trusted third party. We will discuss below the specification of
function symbols and their algebraic properties. We also use the
notation $(t_1,\ldots,t_n)$ for a concatenation using a pair
operator.

\begin{figure}[!t]
  \centering
  \begin{displaymath}\small
    \begin{array}{rcl@{\;\;}l}
      \C &::=& \inact & \text{(end)}\\
         &\mid& \role A \rightarrow \role B: t .\, {\C} & \text{(interaction)}\\
         &\mid& \secret{t}{\role A_1,\ldots,\role A_n}.\C & \text{(secrecy goal)}\\
         &\mid& \auth{\role B}{\role A}{t}.\C & \text{(inj. auth. goal)}\\
         &\mid& \niauth{\role B}{\role A}{t}.\C & \text{(non-inj. auth. goal)}\\ 
         &\mid& \role A : \A & \text{(atomic)}
      \\\\
      \A &::=&  \nu N.\; \A & \text{(new)}\\
         &\mid& \A_1+\A_2 & \text{(choice) }\\
         &\mid& \ite {s\doteq t}{\A_1}{\A_2} & \text{(condition)}\\ 
         &\mid& s:=c[t].\A & \text{(memory read)}\\
         &\mid& c[t]:=s.\A & \text{(memory write)} \\
         &\mid& \C & \text{(choreography)}
    \end{array}
  \end{displaymath}
  \caption{Syntax of CryptoChoreo.}
  \label{fig:syntax}
\end{figure}

\paragraph{Syntax} The formal syntax of a choreography is shown in
Fig.~\ref{fig:syntax}. A choreography $\inact$ represents a terminated
protocol, in which each participant has terminated. We omit trailing
$\inact$s when this is clear from the context.
An interaction $\role A \rightarrow \role B: t .\, {\C}$ denotes a
protocol where $\role A$ sends a term $t$ to $\role B$ and then
continues with choreography $\mathcal{C}$. The next items represent
the specification of goals (injective and non-injective
authentication, and secrecy) that we discuss later in detail.

All these constructs are present in existing formal Alice-and-Bob
languages. What CryptoChoreo is adding are constructs that are all
local to one role. Denote this by giving a role name $\role A$
followed by a colon and an \emph{atomic section} $\A$ of steps that
$\role A$ locally executes. Here, we have the fresh generation of a
random value 
$\nu N$ (as is standard).  
Next,
we have the nondeterministic choice $\A_1+\A_2$ (this is actually an
internal choice for the role who runs this atomic section and an
external choice for all others). Then, we have a conditional where the
condition is a comparison of terms. Last, we have reading from, and
writing to, long-term memory. We denote with $c[t]$ a memory cell in a family of memory cells $c$, where $c$ is an identifier and $t$ is an index
term. 
  On memory read and write, there will be no race conditions
with other parallel sessions, because our semantics will treat each
section $\A$ as atomic like the name suggests.

\paragraph{Memory Cell Example} To illustrate memory, consider the
following augmentation shown in \added{Fig.}~\ref{fig:ex:ext} of the example in
Fig.~\ref{fig:ex:choreo}, where $\role A$ may not know the encryption
key of every $\role B$ (but just the public signature verification key
$\sk(\role s)$ of $\role s$). When $\role A$ wants to talk to
$\role B$, she checks her memory cell $\mathit{keys}[\role B]$; if
this returns the initial value $\blank$, then she does not know the
key of $\role B$ and asks $\role s$, which we assume knows all public
encryption keys via the $\ek$ function and can vouch for it with its
signature. $\role A$ checks the signature and stores the key (note
that for $\role A$ the term $\ek(\role B)$ is just a blob that she
cannot verify in itself; this will be shown in the semantics below).
\begin{figure}[!t]
  \centering
  \begin{displaymath}
    \begin{array}{ll}
      \role A: & \mathit{EKB} := \mathit{keys}[\role B].\\
               & \mathsf{if}~\mathit{EKB}\doteq\blank~\mathsf{then}\\
               & \quad \nu N.\,\role A\rightarrow \role s:
                 (\mathit{key},\role B,N)\\
               & \quad \role s\rightarrow \role A:
                 \sign((\mathit{key},\role B,\ek(\role B),N),\inv(sk(\role s)))\\
               & \quad A: \mathit{keys}[\role B]:=\ek(\role B)\\
               & \mathsf{else} \ldots \text{(previous example with
                 $\ek(\role B)\mapsto {\mathit{EKB}}$)}
    \end{array}
  \end{displaymath}
  \caption[fig:ex:ext]{Extension of the example from
    Fig.~\ref{fig:ex:choreo}.}
  \label{fig:ex:ext}
\end{figure}
Note that we do not have a repetition operator, and this example shows
why this is without loss of generality: in the case that $\role A$
does not know the key of $\role B$, the run of the choreography ends
with writing the key (that she received from $\role s$) into her
memory. Thus, in any later run between the same $\role A$ and
$\role B$, $\role A$ will retrieve the key from her memory and run the
standard protocol with it. In other words, unbounded repetition is
implicit, because a choreography can be executed any number of times
(also in parallel) with arbitrary instances of the (non-constant)
roles and information between different runs can be transferred using
memory.

\paragraph{Front Matter}

The definition of our choreography language is parameterized over
sets $\Sigma$ and $\mathcal V$ (respectively, function symbols and
variables)
and a set of equations $E$ (over $\Sigma,\mathcal{V}$-terms). The set
$E$ induces a congruence relation $=_E$ on terms. In the
implementation of our translation, in
Section~\ref{sec:mec}, 
we instantiate $E$ with a concrete choice of properties.  

Some variables and functions symbols are declared as roles (set in
sans-serif in this paper) and only these can be used in places where
the syntax indicates sans-serif font. For each role, one must declare
the \emph{initial knowledge}: a list of terms where all occurring
variables are of type role. For our example (in the augmented version
where $\role A$ does not initially know $\role B$'s public key), this
declaration could be:
\begin{displaymath}
  \begin{array}{ll}
    \mathsf{A}: & \mathsf{A,B,s,\sk(s),\shk(A,s)}\\
    \mathsf{s}: & \mathsf{A,B,s,\sk(s),\inv(\sk(s)),\shk(A,s),\ek(B)}\\
    \mathsf{B}: & \mathsf{A,B,s,\sk(s),\ek(B),\inv(\ek(B))}
  \end{array}
\end{displaymath}
Note that with less knowledge the protocol would not be executable
(neither would the initial version of the example where $\role A$ cannot
ask the server for $\role B$'s public key be). 

We require that for every variable that is \emph{not} of type role,
the first occurrence is either in a new statement (like $\nu X$) or in
a memory read (like $X:=c[s]$). Also, in a new statement $\nu X$, we
assume that $X$ did not occur before in the choreography (this can be
achieved by renaming). In contrast, variables in a memory read may
have occurred before, e.g., $\nu X.\ldots X:=c[s_1]. \ldots X:=c[s_2]$
is legal, and it would mean that the value retrieved here is the same
value as before---in an ideal unattacked execution of the
choreography. Our semantics can tell if the respective role has the
necessary knowledge to check that and insert such a check in the code,
if so.


\section{Translation Semantics}\label{sec:projection}

The semantics of CryptoChoreo is now defined by a translation to a
process calculus, where we define for each role of the choreography a
process, representing the local behavior of this role in one execution
of the choreography. We then allow arbitrary instances of all roles to
run in parallel, together with an intruder who can also play any of the
roles (except trusted third parties) as a normal participant (but who
does not necessarily follow the protocol).

\subsection{Local Behaviors}\label{sec:local}
\begin{figure}[t]
  \centering
  \begin{displaymath}\small
    \begin{array}{rcl@{\qquad}l}
      \LC &::=& \inact & \text{(end)}\\
          &\mid& \send(r).\LC_i & \text{(send)}\\
          &\mid& \receive(\X).\LC & \text{(receive)}\\
          &\mid& \lock.\A & \text{(atomic local)}
      \\\\
      \A  &::=&  \nu \X.\, \A & \text{(new)}\\
          &\mid& \A_1 + \A_2 & \text{choice}\\
          &\mid& \ite {r_1\doteq r_2}{\A_1}{\A_2} & \text{(cond)}\\
          &\mid& \X:=c[r_1].\A & \text{(mem read)}\\
          &\mid& c[r_1]:=r_2.\A & \text{(mem write)}\\
          &\mid& \event(r).\A & \text{(event)}\\
          &\mid& \unlock.\LC & \text{(end)}
    \end{array}
  \end{displaymath}
  \caption{Syntax of Local Behaviors.}
  \label{fig:syntax:local}
\end{figure}

It is convenient for the translation and the later connection to
ProVerif to define a restricted syntax and semantics for local
behaviors as the target language of the translation semantics.

\paragraph{Syntax}
The syntax of local behaviors mirrors that of CryptoChoreo from the
point of view of a single role: instead of a communication step
between two roles, we have now sending and receiving. We use here
however a different set of symbols $\Sigma_p$ which represent public
functions, i.e., functions that every agent, including the intruder,
can apply. This will include most of the functions from $\Sigma$ like
$\crypt$ that represent cryptographic algorithms, as well as public
constants like $\msg$ in the example above. It will not include,
however, some functions from $\Sigma$ that just describe relations in
the model, but do not represent actual cryptographic algorithms like
$\inv$ (which maps public keys to the corresponding private key) or
$\ek$ (which maps an agent to a corresponding public key). $\Sigma_p$
will also include some functions that are not in $\Sigma$: observe
that we have not used any functions for decryption or signature
verification, because they are \emph{destructors} or \emph{verifiers},
i.e., functions that extract a subterm or verify the structure of a
term; while the choreography is only concerned with \emph{constructed}
messages. Also, we will use a distinct set of variables called
\emph{labels}. Labels are denoted $\X_1,\X_2,\ldots$ and are disjoint
from $\Sigma$, $\Sigma_p$, and $\mathcal{V}$. The terms built from
these variables and $\Sigma_p$ are called \emph{recipes} and we denote
them with $r$, $r_1$, $r_2$, etc.

Fig.~\ref{fig:syntax:local} shows the syntax of local behaviors. Note
that, in order to enter the atomic section, it is necessary to make a
$\lock$ step and it has to end with an $\unlock$ step; the semantics
of local behaviors use that as a mutual exclusion mechanism on the
memory to prevent race conditions (even if one ensures by design that
each memory cell belongs to a particular agent, there may be more than
one run of the choreography in parallel). 
The other constructs are similar to choreographies. However, instead of
variables like $M$ we have now labels like $\X$, and instead of terms
(over $\Sigma$ and $\mathcal{V}$) we have recipes. Note that memory
read and receiving can only ``read into'' a label $\X$ (while on the
choreography level, these can be composed terms). We require that at
each receive and memory read, we use a new label (that did not occur
before in the local behavior; this can be easily achieved by
renaming).

\paragraph{Frames} To capture the knowledge of an honest agent at a
state of protocol execution, we define a \emph{frame} to be a finite
mapping $F=[\X_1\mapsto t_1,\ldots,\X_n\mapsto t_n]$ where the $X_i$
are labels and the $t_i$ are terms. We call $\{\X_1,\ldots,\X_n\}$ the
\emph{domain} of $F$ and we say $F$ is \emph{concrete} if the $t_i$
contain no variables. The semantics of local behaviors will only use
concrete frames. 
Given a recipe $r$, we use a frame $F$ like a
substitution and write $F(r)$ for the term that results from replacing
the labels $\X_i$ with the respective term $t_i$; $F(r)$ is undefined
if $r$ contains labels outside the domain of $F$.

To each role $\role A$, we attach an initial knowledge frame
$F_{\role A}$ that is not necessarily concrete but contains only
variables of type role. In the translation from CryptoChoreo to local
behaviors, we take the initial knowledge of each role
$\role A: t_1,\ldots,t_n$ at the CryptoChoreo level and turn into an
initial knowledge frame $F_{\role A}=[\X_1\mapsto
t_1,\ldots,\X_n\mapsto t_n]$ for the local behavior.

We require that all labels in a local behavior first occur in the
initial knowledge frame, in a new, in a receive, or in a memory read.

\paragraph{Example Local Behavior} The role $\role A$ of the example
of Fig.~\ref{fig:ex:ext} will be translated by the semantics into the
following local behavior:
\begin{displaymath}{
\hspace*{-3mm}  \begin{array}{l}
F_{\role A}=[\X_1\mapsto \role A, \X_2\mapsto \role B,
\X_3\mapsto \role s, \X_4\mapsto \sk(\role s), \X_5\mapsto \shk(\role
    A,\role s)]\\
    \lock.\X_{K}:=\mathit{keys}[\X_2].\\
    \mathsf{if}~\X_K\doteq \blank~\mathsf{then}
    \\
    \quad \nu
    \X_N. \unlock.\send((\mathit{key},\X_2,\X_N)).\\
    \quad\receive(\X_{kc}).\lock.\\
    \quad \mathsf{if}~(\vsign(\X_{kc},\X_4)=\top)~\mathsf{then}\\
    \qquad \mathsf{let}~(\X_t,\X_B,\X_{K'},\X_{N'})=\open(\X_{kc})~\mathsf{in}\\
    \qquad \mathsf{if}~(\X_t\doteq \mathit{key} \wedge \X_B=\X_2 \wedge
    \X_{N'}=\X_N)~\mathsf{then}\\
    \qquad \quad \mathit{keys}[\X_2]:=\X_{K'}.\unlock.0\\
    \qquad \mathsf{else}~\unlock.0\\
    \quad \mathsf{else}~\unlock.0\\
    \mathsf{else}~...~\text{translation of the other part, using
    $\X_K$ as $\ek(\role B)$}
  \end{array}}
\end{displaymath}
Here, we use some syntactic sugar: several checks can be done by one
condition; and that we write let for parsing the content of a message,
in this case expecting that it can be parsed into a
quadruple. The function
$\vsign$ is supposed to be signature verification with the property
$\vsign(\sign(m,\inv(k)),k)=_E\top$ and open a destructor that yields
the signed message, i.e., $\open(\sign(m,\inv(k)))=m$. This models a
signature scheme where the signed text is transmitted in plain along
with a signed hash, i.e., one needs the public key only for signature
verification. One can also observe that between each $\lock$ and
$\unlock$ there is at most one memory read or write, so locking is in
this case actually redundant as it does not prevent any race
conditions.

\begin{figure}[t]
  \centering
  \begin{displaymath}
    \begin{array}{l}
      (0 \uplus L, F,\mu) \to (L, F,\mu)
      \\
      (\send(r).\LC \uplus L, F,\mu)
      \to (\LC \uplus L, F[\X \mapsto r],\mu )\\
      \quad\text{ where $\X$ is a fresh label\added{ with regard to $F$}}
      \\
      (\receive(\X).\LC \uplus L, F,\mu)
      \to (\LC[\X \mapsto F(r)]\replaced{\uplus}{\cup} L, F,\mu)\\
      \quad\text{ where $r$ is a recipe over the domain of }F
      \\
      (\lock.\A\uplus L,F,\mu) \stackrel{\mathit{tr}}{\to} 
      (\LC\uplus L,F,\mu') \\
      \quad\text{ if } (\A,\mu)\stackrel{\mathit{tr}}{\Rightarrow^*}(\unlock.\LC,\mu')  \\
      (L, F,\mu) \to (\sigma(F_{\role A})(\LC_{\role A}) \uplus L, F,\mu)\\
      \quad\text{if $\sigma$ is \deleted{a} an honest instantiation of a role
      $F_{\role A}:\LC_{\role A}$}\\
      (L, F,\mu) \to (L, F\uplus \sigma(F_{\role A}),\mu)\\
      \quad\text{if $\sigma$ is a dishonest instantiation of a role
      $F_{\role A}:\LC_{\role A}$}\\
      \quad\text{and where the labels of $F_{\role A}$ have been freshly renamed}
      \\\\
      (\nu\X.\, \A, \mu)
      \Rightarrow (\A[\X \mapsto n],\mu)
      \text{ where $n$ is a fresh constant}
      \\
      (\A_1+\A_2,\mu) \Rightarrow (\A_i,\mu) \text{ if }i\in\{1,2\}
      \\
      (\ite {r_1\doteq r_2} {\A_1} {\A_2}, \mu) \Rightarrow (A_1,\mu) \text{ if
      } r_1=_E r_2
      \\
      (\ite {r_1\doteq r_2} {\A_1} {\A_2}, \mu) \Rightarrow (A_2,\mu) \text{ if
      } r_1\neq_E r_2
      \\
      (\X:=c[r_1].\A,\mu) \Rightarrow (\A[\X\mapsto \mu(c,r_1)],\mu)
      \\
      (c[r_1]:=r_2.\A,\mu) \Rightarrow (\A,\mu[(c,r_1)\mapsto r_2])
      \\
      (\event(r).\A,\mu) \stackrel{r}\Rightarrow (\A,\mu)
    \end{array}
  \end{displaymath}
  
  \caption{Semantics of Local Behaviors.}
  \label{fig:semantics:local}
\end{figure}
\paragraph{Semantics of Local Behaviors}

We give a simple operational semantics for a set of local behaviors
$F_{\role A_1}:\LC_{\role A_1},\ldots,F_{\role A_n}:\LC_{\role A_n}$
(where each $F_{\role A_i}$ is the initial knowledge frame of
$\role A_i$).
We assume a set $\Ag\subseteq \Sigma\cap\Sigma_p$ of public constant
of type role and that $\i\in\Ag$ represents a dishonest agent
(``intruder'') while all other agents are honest.\footnote{One may
  well consider more than one dishonest agent, but for simplicity we
  work with just one.}

For a behavior $F_{\role A}:\LC_{\role A}$ we call the substitution
$\sigma$ an \emph{instantiation} if it maps all variables in
$F_{\role A}$ (that are by definition of type role) to elements of
$\Ag$. We say it is a \emph{dishonest instantiation} if
$\sigma(\role A)=\i$ and an \emph{honest instantiation} otherwise. We
write $\sigma(F_{\role A})$ for the instantiation of the initial
knowledge frame and $\sigma(F_{\role A}\added{)}(\LC_{\role A})$ for the
instantiation of the behavior itself, replacing all labels from
$F_{\role A}$ in $\LC_{\role A}$ by ground terms; thus all remaining
labels first occur at a new, at a receive, or at a memory read.

The last ingredient for the semantics is a memory map $\mu$ that maps
every memory cell $c[(t)_E]$ \replaced{to}{gives} a value, initially $\blank$, where
$(t)_E$ is the $=_E$-equivalence class of a ground term $t$ over
$\Sigma$. As easy notation we just write $\mu(c,t)$ for this value,
and we write $\mu[(c,t)\mapsto t']$ for changing the memory cell
$c[(t)_E]$ to value $t'$.

The semantics of local behavior is defined in
Fig.~\ref{fig:semantics:local} and consists of two transition
relations $\to$ and $\Rightarrow$ that call each other: $\to$ is on
triples $(L,F,\mu)$ where $L$ is a multi-set of local behaviors, $F$
is a frame representing the intruder knowledge and $\mu$ is the memory
map; the initial state is $(\emptyset,[],\mu_0)$ where $\mu_0$ maps
everything to $\blank$; $\Rightarrow$ is on tuples $(\A,\mu)$ where
$\A$ is an atomic section of a local behavior. 
We decorate the transition relations
with a list of events that occurred upon the transitions. In this
semantics, the intruder \emph{is} the network: every message an honest
agent sends gets added into the intruder knowledge, and every message
an honest agent receives comes from the intruder knowledge: the
intruder can choose any recipe over their knowledge, which includes
encrypting and decrypting with known keys. An atomic section is
handled literally atomically using the $\Rightarrow$ relation locally
at an agent until it hits the $\unlock$; we label the transition with
the trace $\mathit{tr}$ of all events that the agent emitted.
\replaced{
The last two rules regarding $\to$ allow spawning}{The next
rule allows to spawn} new instances of any role $\role A$: we choose any
instantiation $\sigma$ of the variables in $F_{\role A}$ with agent
names; if $\sigma$ is honest, i.e., $\sigma(\role A)\neq \replaced \i i$, we apply
the instantiated knowledge $\sigma(F_{\role A})$ as a substitution to
the local behavior $\LC_{\role A}$, leaving only labels that are
introduced by new, receive, and memory read. This semantics
allows running an arbitrary number of sessions in parallel and
sequentially.  If $\sigma$ is dishonest, i.e., $\sigma(\role A)=\replaced \i i$
then this represents that the intruder plays role $\role A$ under the
actual name $\replaced \i i$. This models a dishonest/compromised agent. We give
the intruder the initial knowledge needed to play the role, i.e.,
$\sigma(\F_{\role A})$ where we have to rename the labels in the frame
to avoid a clash with labels in the present intruder
knowledge.


\subsection{Projection: The Semantics of CryptoChoreo}
We can now give the semantics of CryptoChoreo by translation to local
behaviors. We again use frames to represent the knowledge of a role at
a given point in the translation, but this differs from their use in
the local behavior semantics. As said, the messages in a choreography
represent how messages look like in an ``ideal'' or unattacked
run---which may differ from the shape of messages in a real run due to
interference from the attacker. For instance if $\role A$ is supposed
to receive $\exp(g,Y)$ from $\role B$ for a secret $Y$ that $\role A$
does not know, there is nothing that $\role A$ can check about this
message. To keep track of this during translation we make an entry
$[\X_i\mapsto \exp(g,Y)]$ in the frame $F$ of $\role A$ that
expresses: $\role A$ has received \emph{some} message $\X_i$ and
according to the choreography it is \emph{supposed to be}
$\exp(g,Y)$. Given that another entry $[\X_j\mapsto X]$ represents the
fresh value $\role A$ has created for her own secret $X$, then the
Diffie-Hellman key $\exp(\exp(g,X),Y)$ can be formed with the recipe
$r=\exp(\X_i,\X_j)$: $F(r)=\exp(\exp(g,Y),X)=_E \exp(\exp(g,X),Y)$. In
this way, frames make the connection between the messages the agents
have and use in their local behavior and what the messages supposedly
are on the choreography level. \deleted{We will use this to figure out
  how agents generate outgoing messages and how they analyze incoming
  messages. We thus define two core algorithmic problems on frames:}
\added{As part of the translation semantics, we thus need to give a
  general definition of how an agent should generate an outgoing message
  from a given frame and this is the first core algorithmic problem we
  define on frames:}
\begin{itemize}
\item The deduction problem: Given frame $F$ and term $t$, compute a
  recipe $r$ such that $F(r)=_E t$ if one exists or return fail
  otherwise.
\end{itemize}
\added{Here, the failure means that there is no feasible way for the
  agent to create the required outgoing message. Our semantics will
  in this case refuse the choreography as unexecutable. This will
  typically happen in case of a specification error, for instance if an agent
  does not possess the necessary keys to participate in the protocol as
  suggested.}

\added{For receiving messages there is a related problem to
  solve. Suppose the choreography contains a step where an agent
  should receive a hash $h(N)$ of a random number $N$. Since $h$ is
  one-way and $N$ random, the agent cannot perform any checks on this
  message, so will have a frame entry $[\X_i\mapsto h(N)]$. Suppose in
  a later step, the agent is supposed to receive $N$, i.e.,
  $[\X_j\mapsto N]$. Now there is something the agent can check:
  whether $F(\X_i)=_EF(h(\X_j))$ holds. In general, a check is a pair
  of recipes (here $\X_i$ and $h(\X_j)$) that are supposed to give
  the same message under the present frame $F$. The agent must not
  proceed with the execution if this check fails, because at least one
  of the incoming messages does not comply with the protocol. In a
  similar way, if an agent receives a message that is supposed to be
  an encryption with a key for which the agent has the decryption key,
  then there must be a check that this decryption indeed works. It is
  one of the main features of CryptoChoreo that the semantics defines
  what checks each party must do on incoming messages, while the
  modeler only specifies in the choreography what the messages look
  like in an unattacked protocol run.}

\added{ However, in general there are infinitely many possible checks,
  for instance in the discussed situation one could also check
  $F(h(\X_i))=_EF(h(h(\X_j)))$, but this check seems redundant given
  the other check above. To make this notion of redundant checks
  precise, we adapt some equational logic concepts. Define an
  \emph{interpretation $\I$} as a mapping from all labels to ground
  terms. (A frame is similar to an interpretation but it has only a
  finite domain.) For a recipe $r$, let $\I(r)$ be the ground term
  that results from replacing every label $\X$ in $r$ by $\I(\X)$. We
  now say that $\I$ is an \emph{$E$-model} of the formula
  $r_1\doteq r_2$, and write $\I\models_E r_1\doteq r_2$, if
  $\I(r_1)=_E\I(r_2)$, and we extend this to conjunctions of equations
  as expected. We finally say that \emph{$\phi$ $E$-implies $\psi$},
  and write $\phi\models_E \psi$, if every $E$-model of $\phi$ is
  also an $E$-model of $\psi$. For instance $\X_i\doteq
  h(X_j)\models_E h(\X_i)\doteq h(h(\X_j))$.}

\added{
  We can now define the second core algorithmic problem: to find a
  finite set of checks $\phi$ that is complete in the sense that any
  other checks one could make are already implied by $\phi$ already:
}
\begin{itemize}
\item The complete check problem: Given a frame $F$, compute a finite
  set of \emph{checks}, i.e., equations of recipes
  $\phi=\{r_1\doteq r_1',\ldots,r_n\doteq r_n'\}$ such that
  $F(r_i)=_E F(r_i')$ for each $1\leq i\leq n$, that is
  \emph{complete} in the sense that if for any other $r_0, r_0'$ we have $F(r_0)=_E F(r_0')$, then
  $\phi\models_E r_0\doteq r_0'$, or return fail if no finite set of
  checks satisfies that.
\end{itemize}
\added{A failure case, i.e., $E$ and $F$ where every complete set of
  checks is infinite, is surprisingly hard to construct, see
  Appendix~\ref{sec:infinite:checks}, so we do not consider this a problem in
  practice.}
\added{In section~\ref{sec:mec} we give an example of an algebraic
  theory $E$ for which both the deduction problem and the finite
  complete set of checks problem are decidable, and we sketch how
  these algorithms work. \confver{The details and the proofs are given in the
  extended version. }\extver{The details and the proofs are given in Appendix~\ref{app:mech}. }}
  \added{Note that for this
  $E$, every frame has a finite complete set of checks; while the
  construction of recipes can of course still fail when there is
  insufficient knowledge in a frame to compose the outgoing
  message.}

\replaced{For arbitrary $E$, both}{Both} these problems are in general
not recursively computable (because in general even $=_E$ is
undecidable).\deleted{but in Section~\ref{sec:mec} we give algorithms
  for both problems for a standard set $E$ of equations.}
  \added{Our projection semantics is parameterized by an arbitrary equational theory, while in Section~\ref{sec:mec} we provide procedures for one representative equational theory.}

We first note a complete set of checks for $F$ is not unique, however
if $\phi$ and $\psi$ are two complete set of checks for $F$, then
$\phi\models_E\psi$ and $\psi\models_E\phi$, so they are equivalent
and in the semantics we can leave this choice
undetermined\footnote{Thus a concrete implementation is free to choose
  one.} without making the semantics ambiguous. By abuse of notation
we thus write $\phi(F)$ for a complete set of checks for $F$, even
though it is, strictly speaking, not a function.

Second, also the deduction problem has in general many solutions,
i.e., different $r_1$ and $r_2$ such that $F(r_1)=_EF(r_2)=_E
t$. \added{For instance, if $F=[\X_i\mapsto h(N), \X_j\mapsto N]$, and
  the agent should send $t=h(N)$, then there are two recipes to
  construct $t$: $r_1=\X_i$ and $r_2=h(\X_j)$. This choice of recipes
  could make a difference if the agent received for $\X_i$ or $\X_j$
  terms that do not comply with the protocol. However, after the agent
  has successfully executed the checks
  $\phi(F)=\{\X_i\doteq h(\X_j)\}$, both recipes are guaranteed to
  produce the same term.}  \replaced{More generally, if
  $F(r_1)=_E F(r_2)=_E t$, then $\phi(F)\models_E r_1\doteq r_2$,}{In
  this case $\phi(F)\models_E r_1\doteq r_2$,} i.e., if we have
performed all the checks in $\phi(F)$, then also the choice between
the two recipes $r_1$ and $r_2$ does not matter.

Third, in the semantics, we need a slight generalization of the
deduction problems, namely given several frames $F_1,\ldots,F_n$ and
goal terms $t_1,\ldots,t_n$ and we want a \emph{single} recipe $r$
that solves all \emph{deduction} problems, i.e., $F_i(r)=t_i$ for
every $1\leq i\leq n$. Suppose we already have a set $\phi$ of checks
that is a complete set of checks for each of the $F_i$, and suppose
there is a solution $r$ for all frames. Then any solution $r'$ for one
of the frames, say $F_1$, must be equivalent to $r$, i.e.,
$\phi\models_E r\doteq r'$. Thus for checked frames it suffices to
compute a solution for one frame and check if it works on the other
frames---if not, then there is no common solution for all frames.

Nondeterminism and conditions mean that, in general, a role does not
know which branch of the choreography we currently are in, and this
also holds during the translation. Therefore, during \deleted{in }the
translation, the translation state contains a finite set of pairs
$(F_i:\C_i)$ where each $F_i$ is a frame (all $F_i$ have the same
domain) and $\C_i$ is the remainder of the choreography that still
needs to be translated.
\added{
We note that this is to handle nondeterminism \emph{external} to
the role being translated; when an agent branches locally 
we handle the translation of each branch in isolation. 
}

\newcommand{\chor}{\mathsf{c}}
\newcommand{\atom}{\mathsf{a}}
\begin{definition}
  A \emph{translation state} is of the form
  $$(\role
  A,\phi\triangleright\psi,b,\{(F_1:\C_1),\ldots,(F_n:\C_n)\})$$
  where $\role A$ is the role we are currently translating; $\phi$ and
  $\psi$ are \replaced{sets}{a set} of equations $r_1\doteq r_2$ between recipes, where
  $\phi$ represents checks that have already been done, and $\psi$ are
  checks that are pending; $b\in\{\chor,\atom\}$ is a flag indicating whether
  we are on the choreography level or in an atomic section; the $F_i$
  are frames with the same domain that map to terms; and the $\C_i$
  are either choreographies if $b=\chor$ or atomic sections if
  $b=\atom$.

  During translation we preserve the \emph{invariant} that
  $\phi\cup\psi$ is \replaced{covering all checks that can be made in
    any frame, i.e., for every frame $F_i$ ($i\in\{1,\ldots,n\}$) and
    any pair of recipes $r_0,r_0'$ with $F_i(r_0)=F_i(r_0')$, it must
    hold that $\phi\cup\psi\models_E r_0\doteq r_0'$.}{a complete set
    of checks for the $F_1,\ldots,F_n$ where $\psi$ may contain checks
    that do not hold on all $F_i$.} If $\psi\neq 0$, i.e., if there
  are pending checks, they will be performed first before all other
  translation steps.

  Given a choreography $\C$ and a role $\role A$ of that
  choreography\replaced{, let $F_A=[\X_1\mapsto t_1,\ldots,\X_n\mapsto
    t_n]$ where $\{t_1,\ldots,t_n\}$ is the initial knowledge of role
    $\role A$ and the $\X_i$ are distinct labels. The}{and a frame $F_A$ that contains the initial knowledge of
  $\role A$ in the choreography specification where each item has
  received a unique label $\X_i$, the} \emph{initial translation
    state} for translating $\role A$ in $\C$ is:
  $$(\role A, \emptyset\triangleright\emptyset, \chor, \{(F_A:\C)\})$$
  i.e., there is just one possibility $\C$ where we are and the current
  knowledge is $F_A$.
\end{definition}

\subsubsection{Cases of the Semantics Function}
The semantics function $\sem{T}$ takes a translation state $T$ and
projects the choreography to the actions of the role, yielding a local
behavior for that role. We define it recursively by a case distinction
on $T$. Since we will often require all possibilities $(F_i,\C_i)$ to
start with the same kind of command, we use the following notation:
$\{F_i,\nu N_i.\C_i\}_{i=1}^{n}$ for
$\{F_1,\nu N_1.\C_1,\ldots,F_n,\nu N_n,\C_n\}$, and similar for other
constructs in place of $\nu N$. For simplicity, we first present this
semantics without goals.

\added{
  The semantic function translating a choreography to the local behavior of a given agent, $\role{A}$, 
  is defined by 14 cases, labelled a)-n).
  Case a) completes the projection when all the possible continuations are finished.
  Cases b)-d) handle communication steps: 
  If $\role{A}$ is supposed to send a message we must deduce a recipe to do so from the terms available in their frame. 
  If $\role{A}$ receives a message we must compute all the checks that may be performed on the received term and add 
  them to the set of pending checks. 
  Case e) handles the processing of pending checks. 
  Case f) handles the situation where the checks have ruled out any possible continuation. 
  Case g) handles the entering of an atomic section by the given agent,
  and case h) handles atomic sections of other agents. 
  Cases i)-m) handle the different actions that may be performed in an atomic section. 
  Each of these cases corresponds to a primitive action which is
  directly translated into local behavior.
  Finally, case n) states that the choreography is ill-specified if none of the previous cases apply.
}

\paragraph{{$\sem{\role A,\phi\triangleright\emptyset,\chor,\{(F_i:\inact)\}_{i=0}^n}$}  where $n>0$}\label{rule:finished}
All possibilities have finished, and the translation is
simply:\\ $\inact$.

\paragraph{{$\sem{\role A,\phi\triangleright\emptyset,\chor,\{(F:\role
      B\rightarrow \role C:t.\C)\}\cup\{F_i:\C_i\}_{i=1}^n}$} for $\role A\neq\role B$ and $\role A\neq \role C$}\label{rule:comm:other}
One of the
possibilities is a communication step that $\role A$ is not involved
in and is therefore ignored. The translation is thus:\\
$\sem{\role
  A,\phi\triangleright\emptyset,\chor,\{(F:\C)\}\cup\{F_i:\C_i\}_{i=1}^n}$

\paragraph{{$\sem{\role A,\phi\triangleright\emptyset,\chor,\{F_i:\role
    A \rightarrow \role B_i: t_i.\C_i\}_{i=1}^n}$} where $n>0$}\label{rule:send}
All possibilities are send steps for $\role A$.

As explained before, we check whether there is a recipe $r$ such
that $F_i(r)=t_i$ for each $1\leq i\leq n$.  If there is no such $r$,
then we reject the protocol as unexecutable: either there is no way
for $\role A$ to produce the outgoing term $t_i$, or the different
possibilities would require different recipes, and $\role A$ cannot
know in which possibility they are. However, if there is such an
$r$,\footnote{If there are several such recipes, the choice between
  them leads to equivalent translation outcomes as explained before.}
then the translation is:
\\
$\mathsf{send}(r). \sem{\role
  A,\phi\triangleright\emptyset,\chor,\{F_i:\C_i\}_{i=1}^n}$

\paragraph{$\sem{\role
  A,\phi\triangleright\emptyset,\chor,\{F_i:{\role B_i}\rightarrow
  \role A: t_i.\C_i\}_{i=1}^n}$ where $n>0$}\label{rule:receive}
All possibilities are receive steps for $\role A$.  

Let $\X$ be a new
recipe variable and $F_i'=F_i[\X\mapsto t_i]$ for every
$i\in\{1,\ldots,n\}$. Let $\phi_i$ be a complete finite set of checks
for $F_i'$ and let $\Phi=\cup_{i=1}^{n}\,\phi_i$. This represents all
checks that we can do in any of the frames $F_i$. First we can remove
from $\Phi$ all those checks that are already implied by the checks
$\phi$ from the translation state (i.e., that have already been done
before in the translation process). We can also remove from  $\Phi$ any
equation that is implied by the other equations. Let thus $\Phi'$ be
a resulting minimal set of equations.\footnote{Again, there may be
  several minimal sets, e.g., if two equations imply each other;
  however all resulting sets from the minimization are logically
  equivalent.} The translation of the receive step is then obtained by adding
received message to the frames and inserting the $\Phi'$ as pending
checks that have to be done next:\\
$\mathsf{receive}(\X).\sem{\role A,\phi\triangleright\Phi',\chor,\{F_i':\C_i\}_{i=1}^n}$

\paragraph{$\sem{\role A,\phi\triangleright\{r_1\doteq
    r_2\}\cup\psi,b,\{F_i:\C_i\}_{i=1}^n}$  where $n>0$}\label{rule:check}
There is at least
one pending check $r_1\doteq r_2$. 

We partition the possibilities
into those where $F_i$ satisfies the check and those that do
not:\\
Let $\mathsf{FCs_+}=\{(F_i:C_i)\mid F_i(r_1)=_E F_i(r_2)\}$\\ and
$\mathsf{FCs_-}=\{(F_i:C_i)\mid F_i(r_1)\neq_E F_i(r_2)\}$.\\
The translation is now:\\
$ \mathsf{if}\, r_1\doteq r_2\; \mathsf{then}\,
\sem{\role A,\phi\cup\{r_1\doteq r_2\}\triangleright\psi,b,\mathsf{FCs_+}}$\\
$ \phantom{\mathsf{if}\, r_1\doteq r_2\; } \mathsf{else}\,\sem{\role
  A,\phi\triangleright\psi,b,\mathsf{FCs_-}}$

\paragraph{$\sem{\_,\_\triangleright\_,\_,\emptyset}$ }\label{rule:nowai} 
There are no possible continuations.

This can happen
when doing a check that splits the possibilities into $FCs_+$
and $FCs_-$, and one of them is empty. 
It means that if we reach this branch, the
agent has detected that an incoming message is not compliant with the
choreography, and aborts the execution. The translation is thus
simply:\\
$\inact$

\paragraph{$\sem{\role
      A,\phi\triangleright\emptyset,\chor,\{F_i:\role A:\A_i\}_{i=1}^n}$ where $n>0$}\label{rule:atomic:me}
All possibilities start with an atomic section of the
  agent $\role A$.

  Then the translation is simply to issue the lock and switch the
  atomic section flag:\\
$ \lock.\sem{\role
      A,\phi\triangleright\emptyset,\atom,\{F_i:\A_i\}_{i=1}^n}$

\paragraph{{$\sem{\role A,\phi\triangleright\emptyset,\chor,\{(F:\role
      B:\A)\}\cup\{F_i:\C_i\}_{i=1}^n}$ } where $\role B\neq\role A$}\label{rule:atomic:other}
One possibility is that another role goes into its atomic
section. 

Role $\role A$ should ignore these steps and just extract all
continuations after the atomic section, which is defined as follows:
\begin{displaymath}
  \begin{array}{rcl}
    \cont(\A_1+\A_2)&=&\cont(\A_1)\cup\cont(\A_2)\\
    \cont(\ite{s\doteq t}{\A_1}{\A_2})&=&\cont(\A_1)\cup\cont(\A_2)\\
    \cont(\_.\A)&=&\cont(\A)\\
    \cont(\C)&=&\{\C\}
  \end{array}
\end{displaymath}
Let $\C_1,\ldots,\C_m=\cont(\A)$ in the translation:\\
$\sem{\role A,\phi\triangleright\emptyset,\chor,\{F:\C_i\}_{i=1}^m\cup\{F_i:\C_i\}_{i=1}^n}$

We now come to the cases for an atomic section of the agent we
translate for:
\paragraph{{$\sem{\role A,\phi\triangleright\emptyset,\atom,\{F_i:\nu
    N_i.\A_i\}_{i=1}^n}$} for $n>0$} \label{rule:nu} All possibilities create a
fresh number $N_i$. 

We pick a fresh label $\X$ and translate:\\
$\nu \X.\,\sem{\role
  A,\phi\triangleright\emptyset,\atom,\{F_i[\X\mapsto N_i]:\A_i\}_{i=1}^n}$

\paragraph{$\sem{\role A,\phi\triangleright\emptyset,\atom,\{F_i:
    s_i:=c[t_i].\A_i\}_{i=1}^n}$ where $n>0$} \label{rule:mem:read} 
All cases start with a memory
read. 

Similar to the send case, we require that
there is one recipe $r$ such that $F_i(r)=_Et_i$ for each
$1\leq i\leq n$. If not, the semantics rejects the protocol as
unexecutable at this point (because the agent either cannot create the
proper index for the memory lookup, or there are contradicting
possibilities for this index). 
The retrieved message $s_i$
is treated like in the receive case: we add it to the knowledge with a new
label $\X$, giving frames $F_i'=F_i[\X\mapsto s_i]$, and then we
compute a complete set of checks $\phi_i$ for each $F_i'$, compute the
union $\Phi=\cup_{i=1}^n\phi_i$, and remove redundant equations
leading to a reduced $\Phi'$. The translation is then:\\
$\X:=c[r].\sem{\role A,\phi\triangleright\Phi',\atom,\{F_i':\A_i\}_{i=1}^n}$

\paragraph{$\sem{\role A,\phi\triangleright\emptyset,\atom,\{F_i:
    c[t_i]:=s_i.\A_i\}_{i=1}^n}$ where $n>0$}\label{rule:mem:write}
All possibilities start with a write step.
We require that there are recipes $r_1$ and $r_2$ such that
$F_i(r_1)=t_i$ and $F_i(r_2)=s_i$ for each $i\in\{1,\ldots,n\}$. (If
not, this is a specification error, because it
is unclear what $A$'s next step is.) Then the translation is:\\
$c[r_1]:=r_2.\sem{\role
  A,\phi\triangleright\emptyset,\atom,\{F_i:\A_i\}_{i=1}^n}$

\paragraph{$\sem{\role A,\phi\triangleright\emptyset,\atom,\{F_i:
    \ite{s_i\doteq t_i}{\A_i^+}{\A_i^-} \}_{i=1}^n}$ where $n>0$}\label{rule:ite}
All the $\A_i$ start with a condition. We require that there are
recipes $r_1$ and $r_2$ such that $F_i(r_1)=t_i$ and $F_i(r_2)=s_i$
for each $i\in\{1,\ldots,n\}$. (If not, this is a specification error,
because it is unclear what $A$'s next step is.)  We define\\
$\mathcal{T}^+=\sem{\role A,\phi\triangleright\emptyset,\atom,\{F_i:\A_i^+\}_{i=1}^n}$
and\\
$\mathcal{T}^-=\sem{\role A,\phi\triangleright\emptyset,\atom,\{F_i:\A_i^-\}_{i=1}^n}$. \\
The translation
is:\\
$\ite {r_1\doteq r_2}{\mathcal{T}^+}{\mathcal{T}^-}$

\paragraph{$\sem{\role A,\phi\triangleright\emptyset,\atom,\{F_i:\C_i \}_{i=1}^n}$ where $n>0$}\label{rule:unlock}
Finally, if all the $\A_i$ conclude the atomic section, then the
translation is:\\
$\unlock.\sem{\role A,\phi\triangleright\emptyset,\chor,\{F_i:\C_i \}_{i=1}^n}$

\paragraph{$\sem{\_}$ for any other translation state}\label{rule:err} this is an
error, because it is unclear what $\role A$ should do next.

\subsubsection{Example}

Let us continue the example choreography from Fig.~\ref{fig:ex:ext}
and let us look at the translation for the role $\role
s$. The knowledge of $\role s$ gives us the frame:\\
$F_0=[\mathsf{\X_1\mapsto A,\X_2\mapsto B,\X_3\mapsto s,\X_4\mapsto
  \sk(s),}$\\
$\phantom{F_0=[}\mathsf{\X_5\mapsto \inv(\sk(s)),\X_6\mapsto
  \shk(A,s),\X_7\mapsto\ek(B)}]$ and we compute
$\sem{\role s,\emptyset\triangleright\emptyset,\chor,\{F_0:\C\}}$
where $\C$ is the entire choreography. The choreography begins with
atomic actions of $\role A$, checking if the key of $B$ is known,
asking $\role s$ if not and starting the main protocol otherwise. Thus
using rule~\ref{rule:atomic:other} we get a split into two
possibilities
$\{(F_0:\role A\rightarrow \role s:(\mathit{key},\role
B,N).\ldots),(F_0:\C_0)\}$ where $\C_0$ is the initial choreography
example of~Fig.~\ref{fig:ex:choreo} (with $\ek(\role B)$ replaced by
the variable $\mathit{EKB}$ that represents the key that $\role A$ has
looked up from memory). Now $\C_0$ starts with another atomic section of $\role A$
(the choice to either send a message or a key update). So we apply
again rule~\ref{rule:atomic:other} to split that possibility into two:
$
\begin{array}{l}
  (F_0:\role A\rightarrow \role s:(\mathit{key},\role B,N).\ldots),\\
  (F_0:\role A \rightarrow \role s: \scrypt((\role B,{\color{blue}\crypt(\ldots,\mathit{EKB})}),\shk(\role A,\role s)).\ldots)\\
  (F_0: \role A \rightarrow \role s: \scrypt((\role B,{\color{blue}\crypt(\ldots,\mathit{EKB})}),\shk(\role A,\role s)).\ldots) 
\end{array}$
Now finally all messages are something the server can receive, so with
rule~\ref{rule:receive} we get updated frames $F_1,F_2,F_3$ augmenting
$F_0$ with a new label $\X_8$, which is bound to the respective incoming message.

The checks that we can do for $F_1$ are that $\X_8$ is a triple, that
the first item is constant $\mathit{key}$ and the second item is
$\X_2$.\footnote{More realistically, $\role s$ should not expect a
  particular name $\role B$ but rather have it determined through
  $\role A$'s request. This is why we like to model $\ek(\cdot)$ as a
  public function (one can look up the public key of any role), but
  here we deliberately made this function private, so that $\role A$
  has to ask the server for the role. Anyway the semantics ensures
  that for any instantiation of the role variables with agent names,
  we have any number of server instances, so this comes without loss
  of attacks.} For ease of notation, assume we have functions
$\mathit{vtuple}_n$ and $\pi_n$ (for all $n\in\mathbb{N}$) with the
property $\mathit{vtuple}_n(t_1,\ldots,t_n)=_E\top$ and
$\pi_i(t_1,\ldots,t_n)=_E t_i$. Thus the checks for $F_1$ are
$\phi_1= \{\mathit{vtuple}_3(\X_8)\doteq \top,\pi_1(\X_8)\doteq
\mathit{key}, \pi_2(\X_8)\doteq X_2\}$.

In $F_2$ and $F_3$ we can check that symmetric decryption of $\X_8$
with key $\X_6$ succeeds and yields a pair. The blue parts in $F_2$
and $F_3$ cannot be decrypted, so $\role s$ cannot further check
anything about this. In our example theory $E$ below we have operators
$\scrypt$, $\vscrypt$ and $\dscrypt$ with the properties
$\dscrypt(\scrypt(m,k),k)=_E m$ and
$\vscrypt(\scrypt(m,k),k)=_E \top$.  Together they model AEAD
symmetric schemes, i.e., the attacker cannot modify the encrypted
message $m$ by modifying the ciphertext $\scrypt(m,k)$, as this would
lead to errors when decrypting; thus the decryption is an operation
that fails when applied to an incorrect message or the wrong key, and
we model that in the algebra by two functions, one telling us whether
decryption works with the given key and one that in the positive case
gives the result. We thus have in $F_2$ and $F_3$ the complete
set\footnote{\added{In Section~\ref{sec:mec} we give a complete set of
  checks procedure for an example theory $E$. This
  procedure can be used to verify that this $\phi_1\cup\phi_2$---plus
  some checks on the initial frame $F_0$ (like: $\X_4$ is the public
  key to $\X_5$) that we omitted here---is a complete set of checks.}}
of
the checks: $\phi_2=\phi_3=$ $\{\vcrypt(\X_8,\X_6)\doteq \top,$\\
$\mathit{vtuple}_2(\dscrypt(\X_8,\X_6))\doteq\top,
{\pi_1(\dscrypt(\X_8,\X_6))\doteq\X_2}\}$.

We thus have the translation\\
$\receive(\X_8).\sem{\role
  s,\emptyset\triangleright\phi_1\cup\phi_2,\chor,\{(F_1\text{:\ldots}),(F_2\text{:\ldots}),(F_3\text{:\ldots})\}}$

Applying rule~\ref{rule:check} several times to process all checks, we
get the possibilities partitioned (because $\phi_1$ only holds in
$F_1$ and $\phi_2$ only holds in $F_2$ and $F_3$) as follows:\\
$
\begin{array}{l}
  \mathsf{if}~\phi_1~\mathsf{then}~\sem{\role
  s,\phi_1\triangleright\emptyset,\chor,\{(F_1\text{:\ldots})\}}~\mathsf{else}\\
  \mathsf{if}~\phi_2~\mathsf{then}~\sem{\role
  s,\phi_1\triangleright\emptyset,\chor,\{(F_2\text{:\ldots}), (F_3\text{:\ldots})\}}~\mathsf{else}~\inact
\end{array}$
   
Let us just look at the most interesting branch, namely the positive
case under $\phi_2$. Here the next step in $F_2$ is
$ \role s \rightarrow \role B: \sign((\role
A,{\color{blue}\crypt((\msg,M),\mathit{EKB})}),\inv(\sk(\role s)))$
and in $F_3$ the corresponding step but with content $(\upd,K)$. So we
need to apply rule~\ref{rule:send} which requires a recipe $r$ that
works in both cases. Let
$r_{\color{blue}b}=\pi_2(\dscrypt(\X_8,\X_6))$ which gives the ``blue
message part'' that $\role s$ cannot decrypt in either $F_2$ or
$F_3$. Now $r=\sign((\X_1,r_{\color{blue}b}),\X_5)$ is the recipe that
works
in both cases. Thus the complete translation for the server role is:\vspace{5px}\\
$\begin{array}{l} \receive(\X_8).
   \mathsf{if}~\phi_1~\mathsf{then}~\send(\sign((\mathit{key},\X_2,\X_7)),\X_5)~\mathsf{else}\\
   \mathsf{if}~\phi_2~\mathsf{then}~\send(\sign((\X_1,\pi_2(\dscrypt(\X_8,\X_6))),\X_5))
\end{array}$


\subsection{Attack Semantics}
In this section, we formalize our notion of security. In a sentence, we consider there to be an attack if 
the system can possibly develop in a way that falsifies a given query over traces.

A \emph{security query} is built from the following grammar:
\begin{displaymath}\small
\begin{array}{r c l}
p &::=& \mathsf{event}(t)
        \mid \mathsf{event}(s) \sqsubseteq \mathsf{event}(t)
        \mid \mathsf{intruder}(t)
        \mid s \doteq t \\
  &\mid& \bot \mid p_1 \Longrightarrow p_2 \mid \forall X.\ p'
\end{array}
\end{displaymath}
The other logical operators can be added as syntactic sugar (in particular we will use $\land$).
We only consider a query well-formed if it contains no free variables. 

We characterize a trace, $(L_0,F_0,\mu)\stackrel{t_1,\ldots,t_n}{\to^*}(L_m,F_m,\mu)$, by the list of emitted events 
and the final knowledge of the intruder: $([t_1,\ldots,t_n],F_m)$.

Following is the semantics for evaluating a query:  
\begin{displaymath}
\begin{aligned}
([t_1,\ldots,t_n],F) &\models \mathsf{event}(s) \text{ iff } t_i \algeq s \added{\text{ for some }i}\\
([t_1,\ldots,t_n],F) &\models \mathsf{intruder}(s) \text{ iff } F(r) \algeq s \added{\text{ for some }r}\\
([t_1,\ldots,t_n],F) &\models s \doteq t \text{ iff } s \algeq t \\
([t_1,\ldots,t_n],F) &\models \bot \text{ never}\\
([t_1,\ldots,t_n],F) &\models  p_1 \Longrightarrow p_2  \\
   \text{iff} &\ ([t_1,\ldots,t_n],F) \not\models  p_1\\
  \text{or} &\ ([t_1,\ldots,t_n],F) \models p_2\\ 
([t_1,\ldots,t_n],F) &\models \forall X.\ p \\
\text{iff}&\text{ for all } s \text{ we have } ([t_1,\ldots,t_n],F) \models p[s/X]\\
([t_1,\ldots,t_n],F) &\models \mathsf{event}(s) \sqsubseteq \mathsf{event}(t) \\
  \text{iff}&
  \text{ for \replaced{each}{every} }t_i \algeq s \text{\added{,} there is a \replaced{distinct}{unique} }t_j \algeq t
\end{aligned}
\end{displaymath}
We consider a configuration as secure with regard to a given query if the query is valid on all 
traces from the configuration.


For the rest of this section, we will show how to encode the security goals in a choreography as queries. 

First, we modify the grammar of choreographies to allow the emission of events: 
  \begin{displaymath}\small
    \begin{array}{rcl@{\qquad}l}
      \C &::=& \ldots \\ 
      \A  &::=&  \ldots \\
          &\mid& \event(t).\A\\
    \end{array}
  \end{displaymath}
These events will be handled by the projection in the same way as sends: 
We check if we can find a recipe that produces $t$ in all frames, and 
return a translation error if not.

\added{
A secrecy goal expresses that a certain term should be kept a secret between a given set of agents.
That is, there is an attack if the intruder is not a member of the set and can produce the secret term.
The following transformation checks for this by having each member of the set emit a secrecy event 
at the end of each protocol run, including the names of the other agents and the term in the event.
}

To each secrecy goal $g = ``\secret{t}{\role A_1,\ldots,\role A_n}"$, we assign a unique \emph{event name} $e_g \in \Sigma \setminus \Sigma_p$.
We then replace each $g$ with
\begin{tabbing}
$\role A_1: \mathsf{event}(e_g(t,\role A_1,\ldots,\role A_n)).$\\
$\ \ \ \ \ \ \ \ \vdots$\\
$\role A_n: \mathsf{event}(e_g(t,\role A_1,\ldots,\role A_n))$
\end{tabbing}
The goal holds iff the following query holds: 
\begin{displaymath}
\forall X, \role A_1, \ldots,\role A_n.\ \neg \left(
    \begin{array}{l}
        \role A_1 \not\doteq \i \land \ldots \land \role A_n \not\doteq \i\ \land \\
        \mathsf{event}(e_g(X,\role A_1,\ldots,\role A_n))\ \land\\
        \mathsf{intruder}(X)
    \end{array}
    \right)
\end{displaymath}

\added{
Inspired by the hierarchy of authentication specifications of \cite{Lowe-Hierarchy}, 
we permit the user of CryptoChoreo to specify injective and noninjective authentication goals. 
Here, an attack on the noninjective authentication of agent $\role a$ to another agent $\role b$ on term $t$
would be if $\role a$ finishes the protocol believing that $\role b$ has played the same protocol,
but $\role b$ has either not played the protocol with $\role a$ or has done so with another term than $t$.
Injective authentication would also have an attack if $\role a$ can accept the value $t$ more often than $\role b$ commits to it.
We check authentication by inserting start and end events including the names of both parties and the value they authenticate on.
The authenticated party commits as early as possible to the value by emitting a start event and the 
authenticating party emits a corresponding end event at the very end of the protocol. 
For noninjective authentication, we then verify that the presence of an end event in a trace implies 
the presence of the corresponding start event. 
For injective authentication, we do the same, except that there must be a distinct start event for each end event.}

\added{
Our authentication goals give the guarantee to the authenticating party that the authenticated party has intended to use the 
term for the given protocol, but, unlike in \cite{Lowe-Hierarchy}, does not guarantee that they have reached the end of the protocol.
One could, alternatively, place the start event right before the last message to the authenticating party, 
though this would give spurious attacks if the protocol ends with messages that are not really a part of the authentication mechanism 
(for example if an agent ends by sending an end-signal in plaintext). 
We have found our version to be a pragmatic choice, that still usually has an attack if a meaningful attack exists in the stricter version. 
For some protocols, however, it might also be desirable to verify that the authenticated party has actually reached some given point in the protocol,
which can be checked in CryptoChoreo by placing the events manually. 
}

To a \added{noninjective authentication goal,} $g = ``\niauth{\role B}{\role A}{t}"$, 
we associate unique start- and end-event names $e_{gs},e_{ge} \in \Sigma \setminus \Sigma_p$.
We then replace the goal with the end event $\role B: \mathsf{event}(e_{ge}(t,\role A,\role B))$.
We want to check that the occurrence of this event implies the occurrence of a corresponding start event 
from $\role A$. However, it is not always possible to insert that start event right at the beginning of the choreography;
$t$ might contain values that have been generated during the protocol run. 

To solve this, we make the following modification to the projection semantics for $\role A$:\\
If we are computing $\sem{\role A,\phi\triangleright\emptyset,\chor,\{F_i:\C_i\}_{i=1}^n}$, 
$\role B:\mathsf{event}(e_{ge}(t,\role A,\role B))$ is in one of the $\C_i$, 
and the corresponding start event has not yet been generated, try the following:
If there is a recipe $r$ such that $F_1(r) = t \land \ldots \land F_n(r) = t$, 
return $\mathsf{event}(e_{gs}(r,\role A,\role B)).\sem{\role A,\phi\triangleright\emptyset,\chor,\{F_i:\C_i\}_{i=1}^n}$
while remembering that the event was generated.
Otherwise, continue computing $\sem{\role A,\phi\triangleright\emptyset,\chor,\{F_i:\C_i\}_{i=1}^n}$ as normal. \\
The goal is enforced by the following query: 
\begin{displaymath}
\forall X, \role A, \role B.\ 
    \begin{array}{l}
        \role A \not\doteq \i \land \role B \not\doteq \i \land \mathsf{event}(e_{ge}(X,\role A,\role B))   \\
        \ \ \ \ \Longrightarrow \mathsf{event}(e_{gs}(X,\role A,\role B)) 
    \end{array}
\end{displaymath}

For each \added{injective authentication goal,} $g = ``\auth{\role B}{\role A}{t}"$, we do the same procedure, except that we use the query:
\begin{displaymath}
\forall X, \role A, \role B.\ 
    \begin{array}{l}
        \role A \not\doteq \i \land \role B \not\doteq \i \Longrightarrow \\
        \mathsf{event}(e_{ge}(X,\role A,\role B))  \sqsubseteq \mathsf{event}(e_{gs}(X,\role A,\role B)) 
    \end{array}
\end{displaymath}

\added{
For more complicated properties, or if a custom placement of events is desired (for example if one wants an agent to emit a start event later in the protocol),
CryptoChoreo permits one to manually place events and specify custom queries. 
Algorithm~\ref{alg:asw} demonstrates both ways of specifying goals. 
}

\section{\replaced{Automation}{Mechanization} of the Projection}\label{sec:mec}
We give now an overview of \replaced{algorithms for the algebraic problems underlying the projection semantics,}{the mechanization} for a representative
algebraic theory.
\confver{
\replaced{ We here omit most details and all the proofs; these can be found in the extended version at \placeholder}{ most details are found in Appendix}.
}
\extver{
This section provides a simplified version, omitting many of details, which can be found in Appendix~\ref{app:mech}. 
We have included brief explanations of the correspondence between the results here and the results in the appendix. 
}

To implement the projection of Section~\ref{sec:projection}, one must
be able to do the following:
\begin{itemize}
  \item (\textit{word problem}) Given two terms $s$ and $t$, check if $s \algeq t$.
  \item (\textit{recipe composition}) Given a set of frame-term pairs $\{(F_1,t_1),\ldots,(F_n,t_n)\}$, decide if there is a recipe $r$ so  $F_1(r) \algeq t_1 \land \ldots \land F_n(r) \algeq t_n$, and return it if so.
  \item (\textit{complete set of checks}) Given a frame $F$ calculate a complete set of checks, i.e., a finite set of checks that implies all checks that can be made at all.
\end{itemize}

\added{
The projection semantics are defined to be agnostic to the precise algebraic theory, and in general these problems are not recursively computable.
There are, however, practical theories of cryptographic operators for which we can solve these problems.}
In this section\replaced{, we demonstrate this by giving procedures}{we will demonstrate how to do this} for a particular cryptographic model with Diffie-Hellman keys, symmetric encryption, and asymmetric encryption.
We assume that for all destructors we have a corresponding verifier that can be used to check whether \replaced{the destructor}{decryption} would succeed.

Recall that we are distinguishing terms, denoted $s,t\added{,\ldots}$,
\deleted{as used in the choreography}
\added{from recipes, denoted $r,r_1,r_a,\ldots$}.
On the
local behavior level, we have recipes that are built over $\Sigma_p$
and labels $\X_i$, where $\Sigma_p$ contains only public function
symbols, including destructors and verifiers.
\replaced{In the previous sections, terms occurred on the choreography level, and were}{They}
built over the alphabet $\Sigma$ and variables
$\mathcal{V}$, where $\Sigma$ \replaced{did}{does} not contain any destructors or
verifiers, but \replaced{could}{may} contain private functions. 
\added{
In this section, we additionally allow terms to contain 
destructors, as we here make precise how a term like $F(r)$ (which may contain destructors)
is equal in the $E$-theory to one that does not. 
For distinction, we call terms \emph{constructive} if they do not 
contain any destructors or verifiers. 
}

\subsection{Algebraic Theory}
As part of our algebraic theory we consider the following built-in
symbols:
\begin{displaymath}
  \begin{array}{l|lll}
    &\text{Constructors}
    &\text{Destructors}
    &\text{Verifiers}\\
    \hline
    \text{Tuples} & \pair & \fst,\snd & \vpair\\
    \text{Asymmetric Enc.} & \crypt & \dcrypt & \vcrypt\\
    \text{Private Keys} & \inv & \pubk & \vinv \\
    \text{Symmetric Enc.} & \scrypt & \dscrypt & \vscrypt\\
    \text{Signatures} & \sign & - & \vsign \\
    \added{\text{Signatures (cont.)}} & - & \added{\mathit{open}}  & \added{\mathit{vopen}} \\
    \text{Diffie-Hellman} & \cexp& \expinv & \vexp \\
  \end{array}
\end{displaymath}
All the constructors are part of $\Sigma$ and can occur in terms, and
all except $\inv$ are public and thus in $\Sigma_p$ and can occur in
recipes. Note that we had earlier used $n$-tuples for simplicity, and
here have only binary tuples, but this can be seen as syntactic
sugar. Besides these symbols, the modeler can declare other further
function symbols that can be either public (and thus both part of
$\Sigma$ and $\Sigma_p$), e.g., to model hash functions or public
constants, or that can be private (and thus only part of $\Sigma$),
e.g., to model key infrastructures or fixed secrets between
agents. However, these user-defined functions cannot have any algebraic
properties. We also have a public constant $\top$ both in $\Sigma$ and
$\Sigma_p$. We call a recipe \emph{constructive} if it does not
contain destructors or verifiers.

\begin{definition}\label{definition:e:algebra}
We define our algebra $E = R \cup B$ where

\noindent
$R =$

$\{\vpair(\pair(x,y),\top) \doteq \top$,

$\fst(\pair(x,y),\top) \doteq x$,  $\snd(\pair(x,y),\top) \doteq y$,

$\vscrypt(\scrypt(x,y),y) \doteq \top$,

$\dscrypt(\scrypt(x,y),y) \doteq x$,

$\vcrypt(\crypt(x,y),\inv(y)) \doteq \top$,

$\dcrypt(\crypt(x,y),\inv(y)) \doteq x$,

$\vsign(\sign(x,\inv(y)),y) \doteq \top$,

$\added{\mathit{vopen}(\sign(x,y),\top) \doteq \top}$,

$\open(\sign(x,\deleted{\inv(}y\deleted{)}),\replaced{\top}{y}) \doteq x$,

$\vinv(\inv(x),\top) \doteq \top$,
$\pubk(\inv(x),\top) \doteq x$,

$\vexp(\cexp(x,y),y) \doteq \top$, $\expinv(\cexp(x,y),y) \doteq x\}$,

\noindent
and
$B = \{\cexp(\cexp(x,y),z) \doteq \cexp(\cexp(x,z),y)\}$.

Let $\algeq$ denote the congruence induced by these equations and
$=_B$ the congruence induced just by the equation in $B$.
\end{definition}

The functions $\vpair$, $\fst$, $\snd$, $\vinv$, and $\pubk$ should
actually be unary functions, because they do not require a key. For
uniformity, we have made them binary functions like all the other
destructors and verifiers, and we use $\top$ as a dummy value for the
key-position.
\added{
As all destructors can now be regarded as decryption operators
we make no distinction between the concepts in the context of this
algebraic theory.
}

The reader may be surprised to see a verifier for Diffie-Hellman
exponentiation. This is because our method below requires that every
destructor has a corresponding verifier.
However, the verifier $\vexp$
exists only pro forma: if \deleted{we}our procedure runs into a situation where
it actually employs $\vexp$, it stops with an error.
The only
situation where it would be employed is\deleted{,} if we have an agent \added{who} knows
both $x$ and a term (equivalent to) $\cexp(t,x)$, but not $t$, and
this does not occur in standard uses of Diffie-Hellman.
\added{
Thus, $\vexp$ is a tool we use for our proofs, while still preserving soundness: 
by aborting translation we do not give this unrealistic capability to honest agents, 
and thus a successful translation is correct if we drop
the unrealistic equation $\vexp$.}

A similar question may arise from the destructor and verifier for
private keys. Many approaches model a public constructor that from a
given private key generates a public key; we use here instead a
private constructor $\inv$ to map a public key to a corresponding
private key; this allows us easily model public-key infrastructures
like $\ek(\role A)$ being the public encryption of $\role A$ where
$\ek$ is a public function (so every agent can lookup keys). The small
price to pay is that the inverse mapping from private to public key is
called a destructor and that we have a verifier to check if a private
key really fits with the public key.

\subsection{\added{Algorithms}}

\added{We now sketch out how the three problems mentioned at the
  beginning of this section---the word problem, the recipe composition
  problem, and the complete set of checks problem---are all
  computable. The algorithms for this are close to standard protocol
  analysis methods, and not a main contribution of this
  work. Moreover, basic requirements of efficiency (e.g., avoiding
  repeated analysis steps and checks) make the algorithms rather
  involved with details.}
\confver{\added{We thus give the precise algorithms and proof
  of correctness only in the extended version of this
  paper, available at \placeholder.}}
\extver{The full details and proofs are available in Appendix~\ref{app:mech}.}

\subsubsection{\added{Word Problem}}

The considered theory $E=R\cup B$ allows us to decide the word
problem, i.e., for terms or recipes $s,t$, whether $s=_E t$. This is
because $R$ used as rewrite rules modulo $B$ (i.e.,
$\rightarrow_{R/B}$) is convergent. Thus we only need to compare the
normal forms of $s$ and $t$ modulo $B$. The equivalence class modulo
$B$ of any term is finite and easily computable. \deleted{See
  Theorem~1
  in the appendix.} 
\extver{
  In Appendix~\ref{app:mech}, this is proven in Theorem~\ref{thm:word:problem}
}

\subsubsection{\replaced{Constructive Recipe Composition}{Compose}}
\added{We now first solve a simplified version of the recipe
  composition problem, where recipes do not contain destructors or verifiers.}
We define a function $\mathit{compose}(F,t)$ that, given a frame $F$ and a
term $t$, obtains all \emph{constructive} recipes $r$ such that $F(r)=_B
t$. Roughly, for every term $t_0$ in $[t]_B$ (the equivalence class of
$t$ modulo $B$) we can check if $t_0$ is a term in the frame, and
additionally, if $t_0=f(t_1,\ldots,t_n)$ for a public $f$ (which
cannot be a destructor or verifier by construction), we
recursively compute $\mathit{compose}(F,t_i)$ and from the results construct
the solutions for $\mathit{compose}(F,t)$ as expected.

\added{For example, let $F=[\X_1\mapsto X, \X_2\mapsto \cexp(g,Y)]$
  (where $g$ is a public constant) and the target is
  $t=\cexp(\cexp(g,X),Y)$. The equivalence class is $[t]_B=\{t,t'\}$
  with $t'=\cexp(\cexp(g,Y),X)$. Neither $t$ nor $t'$ is directly
  contained in $F$, so we recursively check the subterms: for $t$ we
  compute $\mathit{compose}(F,\cexp(g,X))=\{\cexp(g,\X_1)\}$ and
  $\mathit{compose}(F,Y)=\emptyset$; the second call fails because we cannot
  obtain $Y$. For $t'$ we compute
  $\mathit{compose}(F,\cexp(g,Y))=\{\X_2\}$ and
  $\mathit{compose}(F,X)=\{\X_1\}$, and we thus get
  $\mathit{compose}(F,t)=\{\cexp(\X_2,\X_1)\}$.}

\extver{In Appendix~\ref{app:mech}, we show that $\mathit{compose}$ returns all constructive recipes for a given term in Lemma~\ref{lemma:compose-complete}.}

\subsubsection{Analysis}

We further introduce the notion of an \emph{analyzed frame}, i.e.,
where the frame contains every term that can be obtained using a
destructor on any message in the frame, using a constructive recipe
for the key term.

We define an analysis procedure that successively applies decryption
steps as long as possible: for every message that potentially can be
decrypted, we check if we can compose the decryption key. If so, the
resulting message is added to the frame. Whenever we add an analyzed
message to the frame, we also need to check again all those messages
for which we previously did not have the decryption key. \confver{\replaced{In
  the extended version we show that this terminates and produces
  correct analyzed frames.}{We show that
  this terminates (Theorem~2
) and produces correct
analyzed frames (Theorem~3
).}}

\added{For example, the frame
  $F=[\X_1\mapsto X, \X_2\mapsto \cexp(g,Y),$}

\noindent
\added{$\X_3\mapsto
  \scrypt((A,\crypt(M,\pk(B))),\cexp(\cexp(g,X),Y))]$ is not analyzed
  because there is a constructive recipe for the Diffie-Hellman key of
  the message in $\X_3$, as seen before. We thus add to the frame
  $[\X_4\mapsto (A,\crypt(M,pk(B)))]$ and note that $\X_4$ is a
  shorthand for $\dscrypt(\X_3,\cexp(\X_2,\X_1))$. This does not change
  the derivable messages of the frame of course. The resulting frame
  is still not analyzed as $\X_4$ is a pair, and we can add
  $[\X_5\mapsto A,\X_6\mapsto\crypt(M,pk(B))]$ and note that $\X_5$
  and $\X_6$ are shorthands for $\pi_1(\X_4,\top)$ and $\pi_2(\X_4,\top)$, respectively. This is now analyzed
  since we have no constructive recipe for $\pk(B)$ (and in fact no
  recipe at all).} 

\extver{In Appendix~\ref{app:mech}, we show that this procedure terminates in Theorem~\ref{thm:ana:terminate}
and that produces correctly analyzed frames in Theorem~\ref{thm:ana:correct}.}

\replaced{The compose and analyze algorithms together solve the recipe
  composition problem: Suppose $F(r)=_E t$ where $t$ does not contain
  destructors and verifiers, and $F'$ is the analyzed version of $F$,
  then $\mathit{compose}(F',t)\neq\emptyset$. The proof is essentially that we
  look at any application of a destructor or verifier in $r$ (if there
  is any) that has no further destructor or verifiers as subterms, say
  $d(r_0,r_1)$ where $r_0,r_1$ are constructive. Then the analysis
  must have found it, i.e., we have a label $\X$ in $F'$ that produces
  a term that is $E$-equivalent to $F'(d(r_0,r_1))$. In this way we
  can successively replace all destructors in $r$ by shorthands of
  $F'$ until we obtain a recipe that is constructive and thus found by
  $\mathit{compose}$. 
}{In an analyzed frame $F$, $\mathit{compose}$ finds a
recipe for every term $t$ that can be obtained with any recipe
(Lemmas~4
~and~5
).}

\extver{In Appendix~\ref{app:mech}, Lemma~\ref{lemma:compose-in-analyzed} shows that if there is any recipe for producing a
term in an analyzed frame, then $\mathit{compose}$ will also return such a recipe.}

\subsubsection{\added{Complete Set of }Checks}

\deleted{Finally, given an analyzed frame, we show how to compute a complete
set of checks. In particular, it is sufficient to restrict oneself to
constructive recipes for checks in an analyzed
frame~(Lemma~6
and
Theorem~4
).}

\added{Our analysis procedure also computes some checks:
  whenever we have $\X\mapsto t$ in a frame $F$ such that $t$ can be
  decomposed, say $d(t,k)$ is a redex and $k$ can be constructed in
  the $F$, then we can also check $v(t,k)\doteq\top$ for the
  corresponding verifier.
  }

\added{For example, in the above analysis example we would derive the
  checks $\vscrypt(\X_3,\cexp(\X_2,\X_1))\doteq\top$ and
  $\vpair(\X_4,\top)\doteq\top$.}

\added{In general, however, this does not yet give the complete set of
  checks, for instance $F=[\X_1\mapsto h(N), \X_2\mapsto N]$ is
  already analyzed, but there is still the check $\X_1\doteq
  h(\X_2)$. A complete set of checks can now be found by checking for
  every label (like $\X_1$ here) if there is a different way to
  construct it. Essentially, the proof is that, given an analyzed
  frame and an arbitrary check $r_1\doteq r_2$, we can reduce it to
  constructive recipes, and where one of the sides is a label.}

\extver{In Appendix~\ref{app:mech}, Theorem~\ref{thm:complete-checks} shows how a finite complete set of checks can be derived for any frame.}

\subsubsection{\added{Extension to Multiple Frames}}

Finally, the compose procedure can be extended to the recipe
composition problem, i.e., given analyzed and checked $F_1,\ldots,F_n$
and goal terms $t_1,\ldots,t_n$, find a single recipe $r$ with
$F_i(r)=t_i$ for all $i$, because we try to get  a solution for
$F_1(r)=t_1$ and if it exists, then it works in all frames, because
they are checked\deleted{ (Theorem~5
)}.

\extver{In Appendix~\ref{app:mech}, Lemma~\ref{lemma:check-in-alternative} shows that if a frame is fully checked it does not matter 
which recipe from $\mathit{compose}$ one picks (in any actual run, they will all produce the same term). 
Theorem~\ref{thm:compose:correct} shows how to obtain a single recipe that works for all given frame-term pairs.}




\section{Case studies}\label{sec:case:studies:meta}

\subsection{Exporting to ProVerif}\label{sec:proverif}
In this section, we summarize how the output from the projection
semantics of Section~\ref{sec:projection} can be further translated to
ProVerif code for automatic verification.  More details can be found
in \confver{\replaced{the extended version at \placeholder}{Appendix Section }.}\extver{Appendix~\ref{app-sec:proverif}.}
We start by unfolding the labels from the initial knowledge in each
local behavior, similarly to what we do in the semantics of
Figure~\ref{fig:semantics:local}.
Each local behavior is almost a valid ProVerif process already, except
for the use of nondeterministic choice and memory cells.

Nondeterministic choice is simple to encode: we take a message from
the network (the intruder) indicating which branch to take, and then
branch on the content of that message.  Thus, we leave it to the
intruder to pick the branch that will lead to an attack (if one
exists).

Encoding memory cells is more involved\added{, since reasoning about
long-term mutable state in ProVerif is a well-known
difficulty for which several extensions and front-ends have been
proposed~\cite{Modersheim10AIF,BruniModersheimNielson15,ArapinisRitterRyan11,DBLP:conf/csfw/ChevalCT18,HessModersheimBruckerSchlichtkrull21}}.
We encode each memory cell as
a private channel, and ensure by construction that the channel always
contains exactly one message (except when the message has been
consumed in an atomic section that has not yet been left).  Then, in
the semantics of ProVerif, a read from the memory cell corresponds
exactly to reading the term that was most recently placed in the
channel.

However, when the ProVerif process is translated to Horn-clauses,
certain overapproximations will mean that it is no longer guaranteed
that a message on a private channel is consumed in the right order or
only once.  We decrease the chance of a false positive by adding a
counter to each memory cell, which is incremented on every write.
Adding the admissible axiom that if two values written to a memory
cell are associated with the same counter value they must be equal,
excludes many impossible models during the proof search.


\subsection{Examples}\label{sec:case:studies}
We implemented a tool \deleted{in Haskell} that automates the projection from choreographies to local behaviors,
based on the definitions in Sections~\ref{sec:projection} and~\ref{sec:mec}. 
\added{It is implemented in around 5000 lines of Haskell code.}
\replaced{The tool}{It} also supports the generation of a ProVerif file from these local behaviors, following the steps described in Section~\ref{sec:proverif}.

We will in the following describe one particular example, but have made more available. \deleted{All our examples combined verify in a minute when exported to ProVerif.} 
We wish to highlight the following notable examples:
\begin{itemize}
  \item \texttt{blind-forward.choreo} is similar to the example from Section~\ref{sec:choreo}. Here a trusted third party either helps $\role A$ authenticate a new encryption key with 
  or authenticate a request to $\role B$, without knowing which branch it is in.
  \added{We have verified that in the branch where $\role B$ gets a new public key from $\role A$, that key is authenticated, and that in the branch where data is
  sent to $\role A$ with the key we have secrecy of the data.}
  \item \texttt{SSO.choreo} describes a protocol where an agent $\role A$ authenticates to another agent $\role B$ using a trusted third party $\role{ttp}$\added{ to establish a secure channel}. 
  We can also verify this when the behavior of $\role{ttp}$ is taken from \texttt{SSO-API.choreo}, where $\role{ttp}$ is implemented like an API responding to queries 
  and saving their state using memory cells. 
  Our tool includes an option for selecting the behavior of participants from different choreographies like this.
  \added{
   We have verified that data sent over the channel is authenticated and secret, and that the symmetric key shared to establish the channel is secret.
  }
  \item We demonstrate in \texttt{tpm-simple.choreo} and \texttt{tpm-simple-API.choreo} the security of a TPM that can either declassify a given value 
  or delete it. In particular, this protocol is non-monotonic, as it should be transparent to the owner of the value which choice was taken, and that if one choice is made then the other cannot be made later.
  \added{
    We have verified that if anyone (the intruder) comes to know the value then the ``opened'' event was triggered by the TPM, and 
    that if the ``refused'' event was triggered by the TPM then the value cannot be known by the intruder.
  }
  \item \texttt{SSO-DH.choreo} and \texttt{SSO-DH-API.choreo} demonstrates a Diffie-Hellman exchange mediated by a trusted third party.
  \added{
    We have verified the secrecy of a value encrypted with the established Diffie-Hellman key. 
  }
  \item \added{\texttt{NSLPK.choreo} contains the Needham-Schroeder public-key protocol with Lowe's fix. We have verified injective authentication and secrecy in both directions.}
  \item \added{\texttt{tls-1.3.choreo} is a simplified model of TLS 1.3. We have verified injective authenticity and secrecy of the data transmitted over the established channel in both directions. }
  \item \texttt{ASW.choreo} contains the example described in the \replaced{following}{sections}. 
\end{itemize}

\begin{table}
\caption{}\label{table:example:stats}
\begin{tabular}{r | r | r | r}
  \added{\textbf{File}} & \added{\textbf{LOC}} & \added{\textbf{Translation}} & \added{\textbf{Verification}} \\ 
  \hline
  \added{\texttt{blind-forward.choreo}}     & \added{28}  & \added{0.06s} & \added{0.02s} \\
  \added{\texttt{SSO.choreo}}               & \added{32}  & \added{0.10s} & \added{1.70s} \\
  \added{\texttt{tpm-simple.choreo}}        & \added{34}  & \added{0.06s} & \added{0.02s} \\
  \added{\texttt{SSO-DH.choreo}}            & \added{28}  & \added{0.06s} & \added{117.20s} \\
  \added{\texttt{NSLPK.choreo}}             & \added{23}  & \added{0.05s} & \added{0.07s} \\
  \added{\texttt{tls-1.3.choreo}}           & \added{37}  & \added{0.12s} & \added{171.80s} \\
  \added{\texttt{ASW.choreo}}               & \added{101} & \added{0.19s} & \added{0.13s} \\
\end{tabular}
\end{table}

\noindent
\added{Table~\ref{table:example:stats} shows some additional information on the examples described above. The given runtimes were obtained on an HP EliteBook 840 G10.}
\added{
\textbf{LOC} (Lines Of Code) is the number of lines in the choreography file, excluding comments and empty lines. 
\textbf{Translation} is the time in seconds CryptoChoreo used to translate the choreography file to local behaviors and then to ProVerif code. 
\textbf{Verification} is the time in seconds ProVerif used to verify the obtained ProVerif code.}

\added{
In our experience, protocols that succeed verification usually do so in a few seconds or less, 
except when the protocol involves non-trivial uses of Diffie-Hellman (like \texttt{SSO-DH.choreo} and \texttt{tls-1.3.choreo}), 
in which case verification could take up to a few minutes. 
For protocols that fail verification the result are much more varied,
ranging from an immediate negative result, to a negative result after a couple of minutes, 
to the verification procedure seemingly going into nontermination. 
}

\subsection{\added{The Asokan-Shoup-Waidner Protocol}}\label{sec:case:studies:asw}

The Asokan-Shoup-Waidner (ASW) protocol~\added{\cite{DBLP:conf/sp/AsokanSW98}} serves as a motivating example that demonstrates the expressiveness of our choreography language, particularly its support for explicit branching, nondeterministic choice, conditional behavior, and long-term memory access.

ASW is a fair contract signing protocol involving three participants: an originator $\role O$, a responder $\role R$, and a trusted third party $\role{TTP}$. The protocol mainly ensures that either both parties obtain a binding contract, or neither does. 
The key challenges addressed by this protocol are:
\begin{itemize}
  \item \textbf{Timeouts and abort scenarios}: Participants may timeout, leading to different protocol continuations.
  \item \textbf{Stateful TTP}: The TTP must maintain memory of previous contract states to prevent inconsistent responses.
  \item \textbf{Conditional logic}: Protocol actions depend on checking stored memory values.
\end{itemize}

\added{
In the following, we use formats like $f_1$, $f_2$, $f_\mathit{abort}$ and $f_\mathit{resolve}$. 
These are transparent functions that structure messages but provide no cryptographic guarantees. 
A formatted message $f(t_1,\ldots,t_n)$ can be modeled as a tuple $(c_f,t_1,\ldots,t_n)$ where $c_f$ is a public constant.}

\begin{algorithm}[h!]
\caption{ASW Choreography}
\label{alg:asw}
\begin{algorithmic}[1]
\State $\role O : \nu \Text.\, \nu \NO.\;$
\State $\role O \rightarrow \role R:  \underbrace{\sign(f_1(\role
  O,\role R,\ttp,\Text,h(\NO)),\inv(\pk(\role O)))}_{=:M_1} .$ 

\State $\role R : {\color{red}\role R \rightarrow \role O :
  \timeout.\textsc{Abort}(\role O,M_1)}$
\State $ {\color{red}+} \;\;\nu \NR.\;$
\State $\quad\; \role R \rightarrow \role O : 
\underbrace{\sign(f_2(M_1,h(\NR)),\inv(\pk(\role R)))}_{=:M_2} .$ 
\State $\quad\; \role O : \color{red}\role O \rightarrow \role R :
\timeout.\textsc{Resolve}(\role R,M_1,M_2)$
\State $\quad\; {\color{red}+} \;\; \; \role O \rightarrow \role R : \NO .$ 
\State $\added{\qquad\quad \role R : \mathsf{event}(\mathit{obtain}(\role R,M_1)). \mathsf{event}(\mathit{finish}(\role R,M_1)).}$
\State $\qquad\quad \role R : \color{red}\role R \rightarrow \role O:
\timeout.\textsc{Resolve}(\role O,M_1,M_2)$
\State $\qquad\quad  {\color{red}+}\;\;\; \role R \rightarrow \role O : \NR $ 
\State $\added{\qquad\qquad\ \ \role O : \mathsf{event}(\mathit{obtain}(\role O,M_1)). \mathsf{event}(\mathit{finish}(\role O,M_1)).}$
\State $\qquad\qquad\ \ \added{\auth{\role O}{\role R}{M_2}}$
\State $\qquad\qquad\ \ \added{\niauth{\role R}{\role O}{M_1}}$
\State 
\State $\textsc{Abort}(\role A,M_1)=$
\State $\role A \rightarrow \role {\ttp} :
sign(\fabort(M_1),\inv(\pk(\role A)))$
\State $\role {\ttp} : \blank := \ttpmem[M_1].$ 
\State $\qquad \ttpmem[M_1] := \aborted .$ 
\State $\qquad \role {\ttp} \rightarrow \role A :
\sign(\fabort(M_1),\inv(\pk(\ttp)))$
\State $\added{\qquad \role A : \mathsf{event}(\mathit{finish}(\role A,M_1))}$
\State $+ \quad\,\aborted := \ttpmem[M_1].$
\State $\qquad \role {\ttp} \rightarrow \role A : \sign(\fabort(M_1),\inv(\pk(\ttp)))$
\State $\added{\qquad \role A : \mathsf{event}(\mathit{finish}(\role A,M_1))}$
\State $+ \quad\, (\resolved,M_2) := \ttpmem[M_1].$
\State $\qquad \role {\ttp} \rightarrow \role A :
\sign(\fresolve(M_1,M_2),\inv(\pk(\ttp)))$
\State $\added{\qquad \role A : \mathsf{event}(\mathit{obtain}(\role A,M_1)). \mathsf{event}(\mathit{finish}(\role A,M_1))}$
\State 
\State $\textsc{Resolve}(\role A,M_1,M_2)=$
\State $\role A \rightarrow \role {\ttp} :
sign(\fresolve(M_1,M_2),\inv(\pk(\role A)))$
\State $\role {\ttp} : \blank := \ttpmem[M_1].$ 
\State $\qquad \ttpmem[M_1] := (\resolved,M_2) .$ 
\State $\qquad \role {\ttp} \rightarrow \role A :
\sign(\fresolve(M_1,M_2),\inv(\pk(\ttp)))$
\State $\added{\qquad \role A : \mathsf{event}(\mathit{obtain}(\role A,M_1)). \mathsf{event}(\mathit{finish}(\role A,M_1))}$
\State $+\quad\, \aborted := \ttpmem[M_1].$
\State $\qquad \role {\ttp} \rightarrow \role A : \sign(\fabort(M_1),\inv(\pk(\ttp)))$
\State $\added{\qquad \role A : \mathsf{event}(\mathit{finish}(\role A,M_1))}$
\State $+\quad\, (\resolved,M_2) := \ttpmem[M_1].$
\State $\qquad \role {\ttp} \rightarrow \role A :
\sign(\fresolve(M_1,M_2),\inv(\pk(\ttp)))$
\State $\added{\qquad \role A : \mathsf{event}(\mathit{obtain}(\role A,M_1)). \mathsf{event}(\mathit{finish}(\role A,M_1))}$
\Statex
\Statex \added{\textbf{Verified Queries:}}
\Statex $\added{\forall M_1,M_2.}
\left(
  \begin{array}{l}
    \added{\mathsf{event}(\mathit{abort}(M_1)) \land} \\ \added{\mathsf{event}(\mathit{resolve}(M_1,M_2))}
  \end{array} 
\right) \added{\Longrightarrow \bot}$
\Statex
\Statex  
\Statex $\added{\forall M_1,A,B.}
\left(
  \begin{array}{l}
    \added{\role A \neq \role{ttp} \land \role B \neq \role{ttp} \land \role B \neq \role i  \land} \\ 
    \added{\mathsf{event}(\mathit{finish}(\role B, M_1)) \land} \\ 
    \added{\mathsf{event}(\mathit{obtain}(\role A, M_1))}
  \end{array}
\right) \added{\Longrightarrow}$
\Statex \added{$ \hfill \mathsf{event}(\mathit{obtain}(\role B, M_1))$}
\end{algorithmic}
\end{algorithm}

\subsection{Nondeterministic Choice and Branching}
Algorithm~\ref{alg:asw} presents the ASW choreography using explicit branching in the message flow. After $\role O$ sends $M_1$ to $\role R$, the responder can either timeout and trigger an abort request (shown in red), or continue by generating $\NR$ and replying with $M_2$.
In the continuation, timeouts may occur again: after receiving $M_2$, $\role O$ can either timeout and ask the $\role{\ttp}$ to resolve, or proceed by sending $\NO$; symmetrically, after receiving $\NO$, $\role R$ can either timeout and resolve, or complete by sending $\NR$. The explicit $+$-marked alternatives capture this nondeterminism directly at the level of the choreography.

\subsection{Conditional Behavior and Memory Access}
The distinctive feature of ASW is the behavior of the trusted third party, which consults and updates long-term memory cells indexed by the contract identifier $M_1$. In both \textsc{Abort} and \textsc{Resolve}, the $\role{\ttp}$ first reads $\ttpmem[M_1]$ (with $\blank$ denoting an uninitialized cell) and then branches on the stored value.

This illustrates several language features:
\begin{enumerate}
  \item \textbf{Memory read}: $\blank := \ttpmem[M_1]$ reads the current state associated with $M_1$\added{ and checks that it is currently uninitialized}.
  \item \textbf{Case distinction on memory}: depending on whether $\ttpmem[M_1]$ is still $\blank$, has been set to $\aborted$, or contains $(\resolved,M_2)$, the $\role{\ttp}$ returns a consistently matching signed response.
  \item \textbf{Memory write}: $\ttpmem[M_1] := \aborted$ (in \textsc{Abort}) and $\ttpmem[M_1] := (\resolved,M_2)$ (in \textsc{Resolve}) record the outcome so that repeated requests cannot lead to contradictory replies.
  \item \textbf{Non-monotonic memory}: the same cell can be read multiple times and updated across different interactions in the run.
\end{enumerate}

\subsection{Semantic Challenge: Participant Knowledge}
A subtle aspect of ASW mechanization arises during projection. When $\role O$ or $\role R$ make decisions (e.g., ``abort''), they must send signals that the $\role{ttp}$ later interprets. However, the originating participant does not see the entire memory state of the TTP. During projection, the analyzer must determine:
\begin{itemize}
  \item When to \emph{accept} incoming values without immediate verification (e.g., when $\role O$ receives an untagged nonce response).
  \item When to insert runtime \emph{checks} that compare received values against later constraints (e.g., when verifying $h(NO) = X$ after learning $NO$).
\end{itemize}

\subsection{\added{Verification}}
\added{
We have verified several queries on the protocol. 
Firstly, in the branch where $\role{ttp}$ is not needed for mediation, we can show mutual authentication on the contracts.
$\role R$ can only noninjectively authenticate $\role O$ on $M_1$ since $\role O$ does not sign $\NR$. 
We do not consider this crucial, in any case, since freshness would in practice be guaranteed by the text in the contract. 
The main properties of the protocol are checked by the two manual queries. 
Firstly, we want it to be impossible for $\role{ttp}$ to both resolve a contract positively and abort it. 
We therefore insert events to record when either happens, and then verify that at most one of them can have fired in any given trace.
Secondly, we want the protocol to be fair. We formalize this by adding the event $\mathit{obtain}(\role A, M_1)$ whenever an agent $\role A$ has a valid contract 
for the text in $M_1$ and $\mathit{finish}(\role A,M_1)$ whenever an agent considers the protocol to be finished (either because it was aborted or because they got the contract).
As long as the intruder does not block communication with $\role{ttp}$ forever, it is easy to see that each agent will be able to reach the $\mathit{finish}$ event. 
We then verify that if one agent has finished and the other has a valid contract, then the first must have a valid contract too. 
}

\section{Related Work}\label{sec:related:work}
\noindent\textbf{Alice-and-Bob notation.}
Several lines of work have given Alice-and-Bob notation a precise semantics by compiling them to lower-level role-based specifications, and by rejecting notations that are not executable because e.g. senders cannot construct a message or receivers cannot check it.
Early semantics were developed for the free term algebra and later extended to equational theories, often by reducing executability questions to intruder deduction and unification problems \cite{JRVP00,CaleiroBasinVigano06,Modersheim09}.
Our semantics follows this tradition, but lifts it from linear notations to \emph{choreographies} with nondeterministic choice, branching, and mutable long-term memory.
This combination is essential for modeling modern protocol interactions with stateful services and APIs, while maintaining the compact global view that motivates Alice-Bob notations.

\medskip
\noindent\textbf{Protocol models and tool support.}
At the other end of the spectrum, multi-set rewriting languages (e.g., as used by Tamarin and the AVISPA family) provide an explicit account of state and message flows and are well-suited for reasoning with rich adversary models \cite{MeierSchmidtCremersBasin13,ArmandoEtAl05,BasinModersheimVigano05}.
Process-calculus based tools such as ProVerif offer a more program-like view of each role and highly automated verification, typically by a sound over-approximation \cite{Blanchet01,Blanchet16}.
Our contribution is complementary: CryptoChoreo aims to be a high-level \emph{specification} language that can be translated into such backends.
In this respect, our approach is aligned with lines of work that provide source-to-source translations or front-ends for existing verification tools, such as SAPIC/SAPIC+ \cite{KremerKunnemann16,ChevalEtAl22}, but with a focus on preserving the readability and single global story of a choreography.

\medskip \noindent\textbf{Choreographies and global types.}
Choreographic programming and the theory of multiparty session
types/global types study how a global description of multiparty
interaction can be projected to local behaviors, with correctness
guarantees such as deadlock freedom
\cite{CHY12, CM13, HYC16}.  Our work is inspired
by the same global-to-local methodology, but targets the Dolev-Yao
setting with an active adversary, cryptographic
constructors/destructors, equational theories, and explicit attacker
knowledge.

\medskip
\noindent\textbf{State and APIs in protocol models.}
Stateful extensions and encodings have been studied both at the specification level and in tool-oriented front-ends \cite{KremerKunnemann16}.
\added{Reasoning about long-lived global state is in particular a known difficulty for ProVerif's Horn-clause abstraction, which has motivated several targeted extensions and front-ends, including set-membership abstractions and AIF \cite{Modersheim10AIF}, Set-Pi \cite{BruniModersheimNielson15}, StatVerif \cite{ArapinisRitterRyan11}, GSVerif \cite{DBLP:conf/csfw/ChevalCT18}, and the PSPSP framework \cite{HessModersheimBruckerSchlichtkrull21}.}
CryptoChoreo makes state explicit at the choreography level via memory cells\added{, and our ProVerif export takes inspiration from these works to mitigate the same overapproximation issues at the translation level}.

\medskip
\noindent\textbf{Equational reasoning and \replaced{automation}{mechanization}.}
Reasoning about message construction, parsing, and checks in the presence of equational theories is a classic challenge in symbolic protocol analysis \cite{ChevalierVigneron02}.
\replaced{We give algorithms for}{Our mechanization focuses on} a representative theory combining standard constructors/destructors with Diffie-Hellman exponentiation, and yields an effective projection procedure \cite{SchmidtMeierCremersBasin12}.
This is intentionally backend-agnostic at the level of the core semantics, while our ProVerif export demonstrates one concrete and practical target.

\section{Conclusions}\label{sec:conclusions}
We introduced CryptoChoreo, a choreography language for cryptographic protocols that extends Alice-and-Bob notation with nondeterministic choice, conditional branching, and mutable long-term memory, together with a projection-based semantics\deleted{ and an effective mechanization for a representative equational theory.  Our ProVerif export shows that these high-level specifications can be connected to automated verification backends in practice.}\added{. The semantics is defined for an arbitrary algebraic theory $E$ and computes how honest agents execute the protocol, namely how they compose outgoing messages and decompose and check incoming messages.
  This offers the modelers a great tool to write specifications, because they just specify how the messages look like in an unattacked protocol run, and the translation semantics figures out which steps the actual implementation has to do.
  We can thereby avoid many specification mistakes (where a part of a message that should be checked is forgotten in the implementation). 
  Thus, this is a basis for generating secure-by-design implementations where, simply, functions such as encryption and decryption need to be connected to real cryptographic functions etc., 
  and where the implementation is in a one-to-one relationship with the formal model fed into verification tools. In fact, we plan to investigate if it can also be used with cryptographic verification tools like CryptoVerif.}

\added{While for a general theory $E$, the translation is not computable, we give the procedure for an example theory and the backend ProVerif. The example theory covers the usual constructor/destructor theories as well as Diffie-Hellman; here we use a verifier that could not be implemented in reality, but that allows to integrate exponentiation uniformly with the rest of the method based on analyzing all terms as far as possible.}

\added{We initially considered translation to SAPIC{+}~\cite{ChevalEtAl22} because it is similarly on the level of process calculus and has translators to both ProVerif and Tamarin. However, we ran into difficulties with long-term mutable state (which is always a challenge for infinite state verification) and since handling mutable state is an essential feature of choreographies, we opted for translation directly to ProVerif. Indeed our aim is similar to SAPIC+ to give modelers a way to address several methods without deep technical knowledge of those, thus we will investigate as part of future work how to connect more methods, possibly using SAPIC+.} Fu\added{rther fu}ture work includes widening the class of supported equational theories, \added{and} improving automation for non-monotonic state encodings\deleted{, and adding additional backend targets}.


\bibliographystyle{IEEEtran} 
\bibliography{biblio}

\appendix

\extver{
\subsection{Exporting to ProVerif (extended)}\label{app-sec:proverif}
\paragraph{ProVerif encoding}
In this section, we will describe how to use the projection semantics from Section~\ref{sec:projection} to 
automatically convert a choreography into a ProVerif process. For the rest of this section, we assume the following:
\begin{itemize}
    \item We have a choreography $\C$
    \item In $\C$ there are roles $\role A_1, \ldots,\role A_{n_{\role A}}$
    \item The initial knowledges are given in $F_{\role A_1}, \ldots,F_{\role A_{n_{\role A}}}$
    \item We have projected the choreography for each role:  \\
    $\LC_{\role A_1} = F_{\role A_1}(\sem{F_{\role A_1}:\C}_{\role A_1}),\ldots,\LC_{n_{\role A}} = F_{n_{\role A}}(\sem{F_{n_{\role A}}:\C}_{n_{\role A}})$
    \unfix{simon}{I don't think in this list is the right place to go into why you have to apply the frame after projection, but it should maybe be mentioned somewhere}
    \item Of all the roles,  $\role U_1,\ldots,\role U_{n_{\role U}}$ are untrusted.
    \item For each role $\role A_i$, $\role U_{\role A_i 1},\ldots,\role U_{\role A_i n_{\role U_{\role A_i}}}$ are the untrusted 
    roles occurring in $\LC_{\role A_i}$ and $F_{\role A_i}$ (i.e. the parameters to the role)
    \item In $\C$ we have the cell families $c_1,\ldots,c_{n_c}$
\end{itemize}

We start by encoding the algebra. 
For the most part, we encode the algebra $E$ as is, though a simplification must be made with regard to $\exp$.
Directly inserting the equation in $B$ in ProVerif will lead to nontermination. 
However, as shown in \cite{DBLP:conf/csfw/KustersT09} and \cite{DBLP:conf/ifip1-7/Modersheim11},
it is sound to use the following encoding for standard Diffie-Hellman:
\begin{tabbing}
$\kwl{fun}\ \kwf{exp}(\kwt{bitstring}, \kwt{bitstring}){:}\ \kwt{bitstring}. $\\
$\kwl{const}\ \kwc{g}{:}\ \kwt{bitstring}. $\\
$\kwl{equation}\ \kwl{forall}\ \var{x}{:}\ \kwt{bitstring}, \var{y}{:}\ \kwt{bitstring}; $\\
$\ \ \ \ \kwf{exp}(\kwf{exp}(\kwc{g}, \var{y}), \var{x}) = \kwf{exp}(\kwf{exp}(\kwc{g}, \var{x}), \var{y}). $
\end{tabbing}

We will omit most of the definition of the algebra here, but include the encoding of symmetric encryption: 
\begin{tabbing}
$\kwl{type}\ \kwt{skey}. $\\
$\kwl{fun}\ \kwf{scrypt}(\kwt{bitstring}, \kwt{skey}){:}\ \kwt{bitstring}. $\\
$\kwl{reduc}\ \kwl{forall}\ \var{x}{:}\ \kwt{bitstring}, \var{k}{:}\ \kwt{skey}; $\\
$\ \ \ \ \kwf{dscrypt}(\kwf{scrypt}(\var{x}, \var{k}), \var{k}) = \var{x}. $\\
$\kwl{fun}\ \kwf{vscrypt}(\kwt{bitstring}, \kwt{skey}){:}\ \kwt{bool} $\\
$\kwl{reduc}\ \kwl{forall}\ \var{x}{:}\ \kwt{bitstring}, \var{k}{:}\ \kwt{skey};\ $\\
$\ \ \ \ \kwf{vscrypt}(\kwf{scrypt}(\var{x}, \var{k}), \var{k}) = \kwc{true} $\\
$\kwl{otherwise}\ \kwl{forall}\ \var{x}{:}\ \kwt{bitstring}, \var{k}{:}\ \kwt{skey};\ $\\
$\ \ \ \ \kwf{vscrypt}(\var{x}, \var{k}) = \kwc{false}. $
\end{tabbing}

As another preliminary step, we include the type of agents and the name of the intruder:\\
$\kwl{type}\ \kwt{agent}. $\\
$\kwl{free}\ \kwc{i}{:}\ \kwt{agent}. $

ProVerif processes are written in the \emph{$\pi$-calculus}, and look much like our local behaviors, 
except for network communication, nondeterministic choice, and memory cells. 
We write $\pvt{\LC}$ for the ProVerif process obtained from the local behavior $\LC$, 
and in the following only describe the cases where the local behaviors differ from the resulting ProVerif processes. 

In our local behaviors, we simply use $\mathsf{send}(X)$ and $\mathsf{receive}(X)$ to send the content of variable $X$ to the public network 
or receive a value from the network and assign it to $X$. 
In ProVerif, the corresponding statements are $\kwl{out}(\kwc{c}, \var{X})$ and $\kwl{in}(\kwc{c}, X{:}\kwt{bitstring})$,
where $\kwc{c}$ is a public channel. 
To make a clear conceptual divide, we declare two public channels: $\kwc{c}$ is the public channel
the communication declared in the choreography happens over, while we use $\kwc{ic}$ when information must 
be given to or taken from the intruder for the purpose of modeling. \\
$\kwl{free}\ \kwc{ic}{:}\ \kwt{channel}. $\\
$\kwl{free}\ \kwc{c}{:}\ \kwt{channel}. $\\
We then define $\pvt{\mathsf{send}(t).\LC} = \kwl{out}(\kwc{c}, t);\pvt{\LC}$ and $\pvt{\mathsf{receive}(X).\LC} = \kwl{in}(\kwc{c}, X{:}\kwt{bitstring});\pvt{\LC}$.

When a process makes a nondeterministic choice, our encoding simply has the intruder decide:
\begin{tabbing}
$\pvt{\LC_1 + \LC_2 + \ldots + \LC_n}=$\\
$\ \ \kwl{in}(\kwc{ic}, \var{Branch}{:}\var{nat}); $\\
$\ \ \kwl{if}\ \var{Branch} = 0\ \kwl{then}\  $\\
$\ \ \ \ \ \ \pvt{\LC_1} $\\
$\ \ \kwl{else}\ \kwl{if}\ \var{Branch} = 1\ \kwl{then}\  $\\
$\ \ \ \ \ \ \pvt{\LC_2}\  $\\
$\ \ \ \ \ \ \ \ \vdots$\\
$\ \ \kwl{else}\ \kwl{if}\ \var{Branch} = \var{n}\ \kwl{then}\  $\\
$\ \ \ \ \ \ \pvt{\LC_n}$
\end{tabbing}

With regard to memory cells, we restrict ourselves to choreographies with the following properties:
\begin{itemize}
    \item In every atomic section there is at most one read and one write to each cell family on every branch, 
    and if a branch contains both they must use the same address.
    \item All writes happen at the end of the atomic section. 
    \item All addresses are public. \footnote{This is not technically required for the translation to be sound. However, our translation will reveal addresses to the intruder, so you will get false attacks if this is not the case. }
\end{itemize}

For each cell family, $c_i$, we declare a function:\\
$\kwl{fun}\ \kwf{cell{\_}}c_i(\kwt{bitstring}){:}\ \kwt{channel}\ [\var{private}]. $\\
We can use this function to create a private channel for each cell in the family. 
Each such channel will contain at most one message at any given time, representing the 
current value stored in the cell. 

Information sent on a private channel is of course not revealed to the intruder.
Furthermore, communication over private channels is synchronous,
and we can use this to enforce atomicity when there is a read and write to the 
same cell in one atomic section. 

The same trick can be used to create atomic sections in general.
We define \\
$\kwl{free}\ \kwc{atomic{\_}lock}{:}\ \kwt{channel}\ [\var{private}]. $\\
$\kwl{const}\ \kwc{atomic{\_}baton}{:}\ \kwt{bitstring}.\  $\\
To enter an atomic section, a process must obtain the baton, 
and they should send it back when they leave.
We have not found it beneficial to enforce atomicity of all atomic sections from local behaviors,
but use it strategically for some parts (for example the memory-initializer processes below).

We ensure that when an agent wants to read from a cell there is always a message available
by including initializer processes for each cell family. ProVerif supports \emph{tables}, 
which are set-like structure we can use to keep track of which cells have been initialized,
thus making sure that we only initialize a cell once and enforce the property that each cell channel contains at most one message.
For each $c_i$, we define:
\begin{tabbing}
$\kwl{table}\ c_i\kwt{{\_}initializer{\_}table}(\kwt{bitstring}). $\\
$\kwl{let}\ c_i\var{{\_}initializer}() =  $\\
$\ \ \ \ \kwl{in}(\kwc{atomic{\_}lock},  = \kwc{atomic{\_}baton}); $\\
$\ \ \ \ \kwl{in}(\kwc{ic}, \var{Addr}{:}\ \kwt{bitstring}); $\\
$\ \ \ \ \kwl{get}\ c_i\kwt{{\_}initializer{\_}table}( = \var{Addr})\ \kwl{in}\ 0 $\\
$\ \ \ \ \kwl{else} $\\
$\ \ \ \ \ \ \ \ \kwl{insert}\ c_i\kwt{{\_}initializer{\_}table}(\var{Addr}); $\\
$\ \ \ \ \ \ \ \ (\kwl{out}(\kwf{cell{\_}}c_i(\var{Addr}), \kwc{blank})\mid $\\
$\ \ \ \ \ \ \ \ \ \kwl{out}(\kwc{atomic{\_}lock}, \kwc{atomic{\_}baton})). $
\end{tabbing}
The outputs at the end must be put in parallel, so that one does not block the other. 

We can now have the intruder initialize our memory cells: \\
$\pvt{X := c[t].\LC} = \kwl{out}(\kwc{ic}, t); \kwl{in}(\kwf{cell{\_}}c(t), X{:}\ \kwt{bitstring});\pvt{\LC}$ \\
Also, writing should not block the rest of the process:\\
$\pvt{c[t] := s.\LC} = (\kwl{out}(\kwf{cell{\_}}c(t), s)\mid\pvt{\LC})$.

There are two remaining problems: If on a branch in an atomic section 
there is a read from a cell but no write, the value in the cell will be consumed 
and no other process will be able to read from it. 
Similarly, if there is a write but no read, the cell will contain two values, 
and either is a possible value when the next process reads from the cell (we also lose the atomicity mentioned above). 
We solve this by a simple transformation:
If we are done translating a branch of an atomic section where we have processed $X := c[t]$ but no corresponding write and need 
to return the rest of the translation $\pvt{\LC}$, return instead $(\kwl{out}(\kwf{cell{\_}}c(t), s)\mid\pvt{\LC})$.
Similarly, if we want to translate $\pvt{c_1[t_1] := s_1.\ldots.c_n[t_n] := s_n.\LC}$ (recall that writes must be at the end of an atomic section)
and have not read from $c_i$, we return $\kwl{out}(\kwc{ic}, t_i); \kwl{in}(\kwf{cell{\_}}c_i(t_i), \_{:}\ \kwt{bitstring}); \pvt{c_1[t_1] := s_1.\ldots.c_n[t_n] := s_n.\LC}$.
\unfix{simon}{should I make this precise by adding more parameters to $\pvt{\LC}$ and all that?}

We now how a way to translate local behaviors to ProVerif processes. However, we still need to compose all these into a single process that can be verified by ProVerif. 
Furthermore, this process must handle the multiple sessions and instantiations supported by the semantics in Figure~\ref{fig:semantics:local}. 

For each role $\role A_i$ with parameters $\role U_{\role A_i 1},\ldots,\role U_{\role A_i n_{\role U_{\role A_i}}}$ we define the \emph{process spawner} for that agent:
\begin{tabbing}
$\kwl{let}\ \var{process}{\role A_i}({\role U_{\role A_i 1}}{:}\ \kwt{agent}, \ldots, {\role U_{\role A_i n_{\role U_{\role A_i}}}}{:}\ \kwt{agent}) = \pvt{\LC_{\role A_i}}. $\\
$\kwl{let}\ \var{spawn}{\role A_i}() =  $\\
$\ \ \ \ \kwl{in}(\kwc{ic}, ({\role U_{\role A_i 1}}{:}\ \kwt{agent}, \ldots, {\role U_{\role A_i n_{\role U_{\role A_i}}}}{:}\ \kwt{agent})); $\\
$\ \ \ \ \var{process}{\role A_i}({\role U_{\role A_i 1}}, \ldots, {\role U_{\role A_i n_{\role U_{\role A_i}}}}). $
\end{tabbing} 
For each untrusted $\role U_i$ we must also give the intruder the associated initial knowledge.
Let $\role U_{\role U_i 1},\ldots,\role U_{\role U_i n}$ be all the parameters of the role $\role U_i$ except the role $\role U_i$ itself and $t_1,\ldots,t_m$ be the terms in the knowledge of $\role U_i$ 
(i.e. $\F_{\role U_i} = \{\X_1 \mapsto t_1, \ldots,\X_m \mapsto t_m\}$). We define:
\begin{tabbing}
$\kwl{let}\ \var{knowledge}{\role U_i}() =  $\\
$\ \ \ \ \kwl{in}(\kwc{ic}, ({\role U_{\role U_i 1}}{:}\ \kwt{agent}, \ldots, {\role U_{\role U_i n}}{:}\ \kwt{agent})); $\\
$\ \ \ \ \kwl{out}(\kwc{ic}, (t_1[\role i/\role U_i],\ldots,t_m[\role i/\role U_i])). $
\end{tabbing}

Finally, we can define the main ProVerif process: 
\begin{tabbing}
$\kwl{process} $\\
$\ \ \ \ (\kwl{out}(\kwc{atomic{\_}lock}, \kwc{atomic{\_}baton}))\mid $\\
$\ \ \ \ !(\kwl{new}\ \var{a}{:}\ \kwt{agent};\ \kwl{out}(\kwc{ic}, \var{a}))\mid $\\
$\ \ \ \ !(\var{spawn}{\role A_1}())\mid $\\
$\ \ \ \ \ \ \vdots$\\
$\ \ \ \ !(\var{spawn}{\role A_{n_\role A}}())\mid $\\
$\ \ \ \ !(\var{knowledge{\role U_1}}())\mid $\\
$\ \ \ \ \ \ \vdots$\\
$\ \ \ \ !(\var{knowledge{\role U_{n_\role U}}}())\mid $\\
$\ \ \ \ !c_1\var{{\_}initializer}()$ \\
$\ \ \ \ \ \ \vdots$\\
$\ \ \ \ !c_{n_c}\var{{\_}initializer}()$ \\
\end{tabbing}

\paragraph{Improved Encoding of Memory Cells}
When verifying ProVerif code produced by the encoding above, we run into the issue that the overapproximations associated with the abstractions of ProVerif make many secure protocols unverifiable. 
By default, ProVerif will consider the values in a cell as a set, and when you read you will not necessarily get the last value that was written. 
This means that for example the TPM example included with our submission does not verify, since ProVerif considers runs where $\mathit{deleted}$ is written to the cell 
and then $\mathit{classified}$ is later read from it. 

\added{This kind of overapproximation around mutable state is well-known and has been addressed by several extensions and front-ends to ProVerif and related tools \cite{Modersheim10AIF,BruniModersheimNielson15,ArapinisRitterRyan11,DBLP:conf/csfw/ChevalCT18,HessModersheimBruckerSchlichtkrull21}.}
We \replaced{improve}{can improve} the accuracy of our encoding by a trick inspired by the ``precise'' option in ProVerif, originally presented in \cite{DBLP:conf/csfw/ChevalCT18}.
When writing $t$ to a cell $c$, we instead write $(i_c,t)$ where $i_c$ is a natural number indicating that this was the $i_c$th value written to $c$.
Furthermore, we trigger an event registering what was written, every time a value is written to a call. 
The new encoding of cell $c_i$ becomes: 
\begin{tabbing}
$\kwl{fun}\ \kwf{cell{\_}}c_i(\kwt{bitstring}){:}\ \kwt{channel}\ [\var{private}]. $\\
$\kwl{event}\ \kwe{write{\_}}c_i(\kwt{bitstring}, \kwt{nat}, \kwt{bitstring}). $\\
$\kwl{table}\ c_i\kwt{{\_}initializer{\_}table}(\kwt{bitstring}). $\\
$\kwl{let}\ c_i\var{{\_}initializer}() =  $\\
$\ \ \ \ \kwl{in}(\kwc{atomic{\_}lock},  = \kwc{atomic{\_}baton}); $\\
$\ \ \ \ \kwl{in}(\kwc{ic}, \var{Addr}{:}\ \kwt{bitstring}); $\\
$\ \ \ \ \kwl{get}\ c_i\kwt{{\_}initializer{\_}table}( = \var{Addr})\ \kwl{in}\ 0 $\\
$\ \ \ \ \kwl{else} $\\
$\ \ \ \ \ \ \ \ \kwl{event}\ \kwe{write{\_}}c_i(\var{Addr}, 0, \kwc{blank}); $\\
$\ \ \ \ \ \ \ \ \kwl{insert}\ c_i\kwt{{\_}initializer{\_}table}(\var{Addr}); $\\
$\ \ \ \ \ \ \ \ (\kwl{out}(\kwf{cell{\_}}c_i(\var{Addr}), (0,\kwc{blank}))\mid $\\
$\ \ \ \ \ \ \ \ \ \kwl{out}(\kwc{atomic{\_}lock}, \kwc{atomic{\_}baton})). $
\end{tabbing}
Furthermore, we redefine the process translation with the following: \\
$\pvt{X := c[t].\LC} = 
\begin{array}{l}
    \kwl{out}(\kwc{ic}, t); \\
    \kwl{in}(\kwf{cell{\_}}c(t), (\var{Counter}_c{:}\ \kwt{nat},X{:}\ \kwt{bitstring}));\\
\pvt{\LC}\end{array}$ \\
$\pvt{c[t] := s.\LC} = 
\begin{array}{l}
    \kwl{event}\ \kwe{write{\_}}c(t, \var{Counter}_c, s); \\
    (\kwl{out}(\kwf{cell{\_}}c(t), (\var{Counter}_c + 1,s))\mid\pvt{\LC})
\end{array}$\\

When inserting writes on branches that only have reads, we can omit incrementing the counter, 
as we will necessarily just write back the value that was already there. 
In our experiments, this omission increases the efficiency of ProVerif in many cases.

Our encoding ensures that every time a new value is put in the cell, it is associated with a new counter value. 
Thus, it is sound to add the following axiom: 
\begin{tabbing}
$\kwl{axiom}\ t{:}\ \kwt{bitstring}, \var{C}{:}\ \var{nat}, s_1{:}\ \kwt{bitstring}, s_2{:}\ \kwt{bitstring}; $\\
$\ \ \ \ \kwl{event}(\kwe{write{\_}}c_i(t, \var{C}, s_1)) \wedge \kwl{event}(\kwe{write{\_}}c_i(t, \var{C}, s_2))$\\
$\ \ \ \ \ \ \ \ \Longrightarrow s_1 = s_2. $
\end{tabbing}
}

\extver{
\subsection{Automation of the Projection in Detail}\label{app:mech}

We will here show in more detail how to solve the \textit{word problem}, \textit{recipe-composition problem}, and \textit{complete-set-of-checks problem} 
described in Section~\ref{sec:mec}.

In addition to the $\Sigma$ and $\Sigma_p$ from Section~\ref{sec:mec}, we use $\Sigma_d$ to denote the set of destructors (+ pair projections, Diffie-Hellman inverse, etc.) and $\Sigma_v$ for the set of verifiers. 

Unlike in the main matter of the paper, we will here permit terms $s,t,\ldots$ to contain all function symbols, 
as we here make precise how these terms are then reduced to ones built only from $\Sigma$.

For the benefit of the following proofs, we define the syntactic size of terms and checks:
$|\X| = 1$, $|f(t_1,\ldots,t_n)| = 1 + |t_1| + \ldots + |t_n|$, and $|t \doteq s| = |t| + |s|$.

Additionally, we need the following concepts: 
\begin{definition}
    We define $[t]_{= B} = \{t' \mid t =_B t'\}$.
    We write $\rightarrow_R$ for the rewriting system obtain by applying the equations in $R$ from left to right. 
    We can then define the rewriting system modulo $B$-equivalence classes: 
    $s \rightarrow_{R/B} t$ iff $\exists s'\ t'.\ s =_B s' \land s' \rightarrow_R t' \land t' =_B t$.
    We use $t\downarrow_{R/B}$ to denote the normal form of $t$. 
\end{definition}

\begin{lemma}
  If $s =_B t$ then $|s| = |t|$ and if $s \rightarrow_{R/B} t$ then $|s| > |t|$.
\end{lemma}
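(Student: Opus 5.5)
The plan is to prove the two claims separately, each by a direct argument on the shape of the equations in $E=R\cup B$.

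For the first claim, I would show that every single $B$-step preserves size, and then close under the congruence. The only equation in $B$ is $\cexp(\cexp(x,y),z)\doteq\cexp(\cexp(x,z),y)$. Its two sides contain the same variables, each exactly once, and the same number of function symbols (two occurrences of $\cexp$).

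Hence for any substitution $\sigma$,
\begin{displaymath}
|\sigma(\cexp(\cexp(x,y),z))| = 2+|\sigma(x)|+|\sigma(y)|+|\sigma(z)| = |\sigma(\cexp(\cexp(x,z),y))|.
\end{displaymath}
Since size is additive over subterm positions, replacing a subterm by one of equal size leaves the size of the whole term unchanged. Thus one step $C[\sigma(l)]\to C[\sigma(r)]$ preserves $|\cdot|$. Because the equation is symmetric, $=_B$ is the reflexive-transitive closure of such steps, and induction on the number of steps gives $|s|=|t|$ whenever $s=_Bt$.

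For the second claim, I would unfold the definition of $\rightarrow_{R/B}$. If $s\rightarrow_{R/B}t$, there are $s',t'$ with $s=_Bs'$, $s'\rightarrow_R t'$ and $t'=_Bt$. By the first claim $|s|=|s'|$ and $|t'|=|t|$, so it suffices to show that one $R$-step strictly decreases size.

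Write the step as $s'=C[\sigma(l)]$ and $t'=C[\sigma(r)]$ for a rule $l\doteq r$ in $R$. By additivity it is enough to show $|\sigma(l)|>|\sigma(r)|$. I would check this by inspecting the rules, which fall into two forms:
\begin{itemize}
\item If the right-hand side is $\top$, it has size $1$, while $\sigma(l)$ contains at least the outer destructor or verifier symbol plus the constructor, so $|\sigma(l)|\geq 3$.
\item If the right-hand side is a variable $x$, then $\sigma(x)$ occurs as a proper subterm of $\sigma(l)$ (inside the constructor argument), so $|\sigma(l)|>|\sigma(x)|$.
\end{itemize}
In both cases $|s|=|s'|>|t'|=|t|$.

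The only potential obstacle is making precise that equality and rewriting are congruences applied under contexts. This follows routinely from the additivity of $|\cdot|$ on contexts: $|C[u]|=|C[\square]|-1+|u|$.
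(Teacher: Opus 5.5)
Your proposal is correct and takes the same route as the paper: size is invariant under $=_B$ by induction over $B$-equality, which you phrase as induction on single $B$-steps and the paper as induction on the derivation. Strict decrease for $\rightarrow_{R/B}$ follows, in both, by unfolding the definition and inspecting each rule of $R$; your observation that every right-hand side is either $\top$ or a variable strictly inside the left-hand side is exactly the case analysis the paper leaves implicit.
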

\begin{proof}
  The first part can be shown by induction on the derivation of $s =_B t$,
  and the second part follows by case analysis on $s' \rightarrow_R t'$ after 
  unfolding the definition of $\rightarrow_{R/B}$.
\end{proof}

\begin{lemma}\label{lemma:algeq-split}
  $s \algeq t$ if and only if either:
  \begin{itemize}
    \item $s =_B t$,
    \item or there is $s'$ so $s \rightarrow_{R/B} s'$ and $s' \algeq t$,
    \item or there is $t'$ so $t \rightarrow_{R/B} t'$ and $s \algeq t'$
  \end{itemize}
\end{lemma}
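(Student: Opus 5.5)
The backward direction is immediate from the definitions, so the plan is to spend almost all of the effort on the forward direction. For the backward direction: $=_B\subseteq\algeq$, and a single step $s\rightarrow_{R/B}s'$ unfolds to $s=_B s_0\rightarrow_R s_0'=_B s'$. Since $\rightarrow_R$ applies an equation of $R$ inside a context, we have $s\algeq s'$. Combining this with $s'\algeq t$ by transitivity gives $s\algeq t$, and the third case is symmetric.

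For the forward direction, I would use the convergence of $\rightarrow_{R/B}$, i.e.\ the fact the paper states for the word problem. Termination is already given by the preceding lemma: $=_B$ preserves size, while every $\rightarrow_{R/B}$ step strictly decreases it. What remains is confluence modulo $B$ (Church--Rosser modulo $B$), which yields
\[
s\algeq t \iff s\downarrow_{R/B} =_B t\downarrow_{R/B}.
\]
With this characterization the proof reduces to a case split. If $s$ is not in $R/B$-normal form, choose any $s'$ with $s\rightarrow_{R/B}s'$. Then $s'\algeq s\algeq t$, and the second case applies. If $t$ is not in normal form, the third case applies in the same way. If both are in normal form, the characterization gives $s=_B t$ directly, which is the first case. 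Note that this argument needs $\algeq$ only as a congruence, and at no point requires a specific choice of redex.

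The main obstacle is establishing Church--Rosser modulo $B$. I would argue it via local coherence and local confluence modulo $B$, and then invoke the standard Jouannaud--Kirchner/Huet result, which applies because of the termination noted above. Local confluence requires checking critical pairs of $R$. The left-hand sides have the form $d(c(\ldots),k)$ with destructor or verifier $d$ at the root and a constructor directly below. These do not overlap nontrivially, except possibly where an $R$-rule's left-hand side overlaps itself at a variable position; variable overlaps are harmless.

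Coherence with $B$ is the delicate part. The only interaction is with $\expinv(\cexp(x,y),y)\to x$ and $\vexp(\cexp(x,y),y)\to\top$, since a $B$-step can swap the exponents of $\cexp(\cexp(x,y),z)$. To handle this I would use the fact that $\rightarrow_{R/B}$ is defined on $B$-classes, so any $B$-variant of a redex is itself a redex of the relation. A rewrite at a $B$-equivalent term is therefore already one $\rightarrow_{R/B}$ step, and coherence holds essentially by construction. What remains is to verify that the results of the two possible reductions agree modulo $B$. Concretely, for
\[
\expinv(\cexp(\cexp(x,y),z),z)=_B \expinv(\cexp(\cexp(x,z),y),z)
\]
the reduction of the left-hand side yields $\cexp(x,y)$. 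The right-hand side is a redex only if $y=_B z$, and in that case the two results coincide.

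Finally, an alternative that avoids rewriting theory would be an induction on the length of an equational proof of $s\algeq t$. I expect that route to lead back to needing the same confluence argument, so I would not pursue it.
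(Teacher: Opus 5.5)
Your argument is correct, but it runs in the opposite direction from the paper's. The paper proves this lemma directly, by a one-line ``induction on the derivation of $s\algeq t$''. It then \emph{uses} the lemma to obtain convergence of $\rightarrow_{R/B}$ (part (C) of Lemma~\ref{lemma:convergence}, via the claim that $E$-equal irreducible terms are $B$-equal) and the word problem. You instead first establish Church--Rosser modulo $B$ from termination plus a critical-pair analysis, and then get the lemma as a corollary by the case split on reducibility.

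Your order exposes where the real content lies. The forward direction is trivial unless both $s$ and $t$ are $R/B$-irreducible, and in that case it says exactly that $E$-equal normal forms are $B$-equal. That fails for non-confluent systems such as $\{a\to b,\ a\to c\}$. A plain induction on the equational derivation gets stuck at the transitivity case $s\algeq u\algeq t$ with $u$ reducible and $s,t$ irreducible. There one needs the two reducts of $u$ to be joinable modulo $B$, plus a well-founded induction, to continue. That is precisely the local-confluence argument you supply and the paper leaves implicit.

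Three remarks on your version:
\begin{itemize}
\item The paper derives Lemma~\ref{lemma:convergence} from this lemma, so you must prove confluence independently, as you propose, rather than cite it.
\item Because you work with class rewriting, Newman's lemma for $\rightarrow_{R/B}$ on $=_B$-classes already suffices. The Jouannaud--Kirchner coherence machinery is not needed.
\item Your $\expinv$/$\vexp$ check should cover towers of any height, not just the two-level instance you display. To do this, view $\cexp(\cdots\cexp(a,b_1)\cdots,b_n)$, with $a$ not headed by $\cexp$, as a base $a$ together with the multiset $\{b_1,\ldots,b_n\}$ modulo $B$. A reduct by $\expinv(\cdot,k)$ removes one copy of $k$ from this multiset, yielding $a$ if the multiset becomes empty. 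This result is unique up to $=_B$.
\end{itemize}
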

\begin{proof}
  We do induction on the derivation of $s \algeq t$ in the equational logic induced by $E$.
\end{proof}

We can now show that our rewriting system is well-behaved:
\begin{lemma}\label{lemma:convergence}
    (A) $[t]_{= B}$ is finite for all $t$. 
    (B) If $s \rightarrow_{R/B} t$ then $s \algeq t$.
    (C) $\rightarrow_{R/B}$ is convergent, modulo $B$.
    (D) $t\downarrow_{R/B}$ is defined and unique for all $t$, modulo $B$.
\end{lemma}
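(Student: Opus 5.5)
For (A), I will use the first part of the size lemma above: $=_B$ preserves size. Moreover, the single equation in $B$ only permutes the second arguments of nested $\cexp$ applications. So every element of $[t]_{=B}$ is built from exactly the same multiset of function symbols and variables as $t$, all of size $|t|$. Over the finite set of symbols occurring in $t$ there are only finitely many terms of a given size, so $[t]_{=B}$ is finite. I will make this precise by induction on the derivation of $s=_B t$: each step replaces a subterm $\cexp(\cexp(x,y),z)$ by $\cexp(\cexp(x,z),y)$, which preserves both size and the set of symbols.

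For (B), I will unfold the definition. If $s\rightarrow_{R/B}t$, then $s=_B s'\rightarrow_R t'=_B t$. Here $s=_B s'$ and $t'=_B t$ imply $=_E$ because $B\subseteq E$. The step $s'\rightarrow_R t'$ is an instance of an equation of $R\subseteq E$ under a context and substitution, so it gives $s'\algeq t'$ by congruence and stability under substitution. Transitivity then gives $s\algeq t$.

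For (C), I will establish termination and local confluence modulo $B$ and then apply a Newman-style argument for rewriting modulo an equational theory. Termination is immediate from the second part of the size lemma: each $\rightarrow_{R/B}$ step strictly decreases size, and size is $B$-invariant. For confluence, the cleanest route avoids critical pair machinery entirely and uses Lemma~\ref{lemma:algeq-split}. Suppose $u_1$ and $u_2$ are two $R/B$-normal forms with $u_1\algeq u_2$. Neither can rewrite, so Lemma~\ref{lemma:algeq-split} forces $u_1=_B u_2$. Now take $t\rightarrow^*_{R/B}u_1$ and $t\rightarrow^*_{R/B}u_2$. By (B), $u_1\algeq t\algeq u_2$, so any two normal forms reachable from the same term, indeed from $E$-equal terms, are $B$-equal. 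Together with termination, this yields convergence modulo $B$, i.e., Church--Rosser modulo $B$.

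The main obstacle is that Lemma~\ref{lemma:algeq-split} is stated with proof only sketched ("induction on the derivation"). It secretly relies on the interaction between $R$ and $B$, namely that the $R$-redexes are coherent with $B$: the left-hand sides of $R$ must not overlap nontrivially with $B$-rearrangements in a way that loses redexes. For example, $\expinv(\cexp(x,y),y)$ applied to a $B$-variant is still a redex up to $B$, because rewriting is defined on $B$-classes. I will rely on that lemma as given, but double-check this coherence for the $\cexp$ rules $\vexp$ and $\expinv$, since these are the only rules whose left-hand sides contain the $B$-symbol. For (D), existence follows from termination by taking any maximal rewrite sequence, which is finite by the size bound. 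Uniqueness modulo $B$ is exactly the normal form argument from (C).
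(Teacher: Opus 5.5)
Your proposal is correct and follows the paper's proof essentially step for step. Like the paper, you get (A) from $B$ only permuting exponents, so size and symbols are preserved, and (B) by unfolding $\rightarrow_{R/B}$ and using transitivity. For (C) you combine termination, from the strict size decrease, with uniqueness of normal forms via Lemma~\ref{lemma:algeq-split}, and (D) follows from (C). Your remark that the real content sits in the only-if direction of Lemma~\ref{lemma:algeq-split} is apt: that direction is essentially the Church--Rosser property modulo $B$ itself, and the paper also relies on that lemma without giving more detail.
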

\begin{proof}
  (A) is easy to see (the equation in $B$ is a kind of commutativity). 
  (B) follows from transitivity of $\algeq$ and the definitions of $\rightarrow_R$ and $\rightarrow_{R/B}$.
  (D) follows (C) by definition. 

  To show (C), we combine (B) with the following: 
  (1) $\rightarrow_{R/B}$ has no infinite chains. 
  (2) If $s \algeq t$ and $\rightarrow_{R/B}$ applies to neither $s$ nor $t$, then $s =_B t$.

  (1) follows from the fact that if $t_1 \rightarrow_{R/B} t_2 \rightarrow_{R/B} \ldots$ then $|t_1| > |t_2| > \ldots$.

  (2) follows from Lemma~\ref{lemma:algeq-split}.
\end{proof}

With this, we can solve the \textit{word problem} from the start of the section:
\begin{thm}\label{thm:word:problem}
    $s =_E t$ if and only if $s\downarrow_{R/B} =_B t\downarrow_{R/B}$.
\end{thm}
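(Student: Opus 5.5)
The proof splits into the two directions, with Lemma~\ref{lemma:convergence} doing most of the work.

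For the direction from right to left, assume $s\downarrow_{R/B} =_B t\downarrow_{R/B}$. By Lemma~\ref{lemma:convergence}(D) the normal forms exist. By (B), each step of $\rightarrow_{R/B}$ preserves $\algeq$, so transitivity over the finite chain gives $s \algeq s\downarrow_{R/B}$, and likewise $t \algeq t\downarrow_{R/B}$. Since $B\subseteq E$, we have $=_B\,\subseteq\,\algeq$. Hence $s \algeq s\downarrow_{R/B} \algeq t\downarrow_{R/B} \algeq t$.

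For the direction from left to right, assume $s \algeq t$ and write $\hat s = s\downarrow_{R/B}$ and $\hat t = t\downarrow_{R/B}$. As above, $s\algeq\hat s$ and $t\algeq \hat t$, so by symmetry and transitivity $\hat s \algeq \hat t$. Both $\hat s$ and $\hat t$ are irreducible under $\rightarrow_{R/B}$. We then apply property (2) from the proof of Lemma~\ref{lemma:convergence}: if $\hat s\algeq\hat t$ and neither side is reducible, then $\hat s =_B \hat t$. Concretely, this is Lemma~\ref{lemma:algeq-split}: of its three alternatives, the second and third each require a rewrite step from $\hat s$ or from $\hat t$, which is impossible, so only $\hat s =_B \hat t$ remains.

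The main obstacle is making sure the normal form is well defined, so that the statement does not depend on which reduction sequence is chosen. Two facts are needed.
\begin{itemize}
\item Termination follows from the size-decrease lemma, since $|\cdot|$ strictly drops at each $\rightarrow_{R/B}$ step and is invariant under $=_B$.
\item Uniqueness modulo $B$ is exactly (C)/(D) of Lemma~\ref{lemma:convergence}. Alternatively, it follows from the argument just given: any two normal forms $\hat s_1,\hat s_2$ of $s$ satisfy $\hat s_1\algeq s\algeq \hat s_2$, and both are irreducible, so $\hat s_1=_B\hat s_2$.
\end{itemize}
For the theorem to be usable as a decision procedure, decidability of the right-hand side also matters. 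This comes from Lemma~\ref{lemma:convergence}(A): $[\,\cdot\,]_{=B}$ is finite and computable, so both the $R/B$-redex search and the final $=_B$ comparison are effective.
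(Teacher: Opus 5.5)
Your proposal is correct and follows the paper's proof essentially verbatim. The right-to-left direction goes via Lemma~\ref{lemma:convergence}(B) together with $=_B\subseteq\algeq$. The left-to-right direction passes to $\hat s\algeq\hat t$ and uses Lemma~\ref{lemma:algeq-split}, where irreducibility of both normal forms eliminates the second and third alternatives. Your extra remarks on well-definedness of normal forms and on decidability are supplementary: the paper delegates these to Lemma~\ref{lemma:convergence}(A), (C) and (D), and they do not change the argument.
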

\begin{proof}
  That $s\downarrow_{R/B} =_B t\downarrow_{R/B}$ implies $s =_E t$ follows from Lemma~\ref{lemma:convergence} (B).

  If $s =_E t$ then $s\downarrow_{R/B} =_E t\downarrow_{R/B}$ by Lemma~\ref{lemma:convergence} (B),
  and since $\rightarrow_{R/B}$ applies to neither $s\downarrow_{R/B}$ nor $t\downarrow_{R/B}$ we 
  have $s\downarrow_{R/B} =_B t\downarrow_{R/B}$ by Lemma~\ref{lemma:algeq-split}.
\end{proof}

In the following, we will analyze a set of frames by alternating between applying verifiers and destructors to obtain new terms. 
We extend frames with additional information, including which checks have been performed, and how the content in a label was derived. 

\begin{definition}[Enhanced Frames]
  Frames are built over the following grammar: 
  \begin{eqnarray*}
    F &::=& 0\\
      &\mid& F_0.\X \mapsto t \\
      &\mid& F_0.\X\leftarrow r \mapsto t\\
      &\mid& F_0.r_1\doteq r_2
  \end{eqnarray*}
  We let $e$ range over the entries of the form $\X \mapsto t$, $\X\leftarrow r \mapsto t$, and $r_1\doteq r_2$.
  \begin{itemize}
  \item The \emph{domain} of a frame is defined as follows. $\dom(0)=\emptyset$, $\dom(F_0.\X\mapsto
    t)=\dom(F_0)\cup\{\X\}$, $\dom(F_0.\X\leftarrow r\mapsto
    t)=\dom(F_0)\cup\{\X\}$, and $\dom(F_0. r_1\doteq
    r_2)=\dom(F_0)$.
  \item Given a recipe $r$ for frame $F$, we define $F(r)$ as expected, ignoring $r_1\doteq r_2$ and $\leftarrow r$.
  \end{itemize}
  In the following, we only consider \emph{well-formed frames}, which
  satisfy these properties:
  \begin{itemize}
  \item In a frame $F_0.\X\mapsto t$, $\X\notin\dom(F_0)$ and $t$ is constructive. 
  \item In a frame $F_0.\X\leftarrow r\mapsto t$, $\X\notin\dom(F_0)$,
    $\fv(r)\subseteq\dom(F_0)$, $F_0(r) = t$, and there exists constructive $t'$ such that $t \algeq t'$.
  \item In a frame $F_0.r_1\doteq r_2$,
    $\fv(r_1)\cup\fv(r_2)\subseteq\dom(F_0)$, and $F_0(r_1)\algeq
    F_0(r_2)$.
  \item In a frame $F_0.e$, $F_0$ is well-formed. 
  \end{itemize}
  It is easy to see that the projection semantics above and the following procedures preserve well-formedness.

  The \emph{checks} contained in a frame
  $F$, written $\mathit{checks}(F)$, is the set of all the equations
  $r_1 \doteq r_2$ contained in the frame together with 
  the equations $l \doteq r$ for each entry $l \gets r \mapsto t$ in the frame. 
\end{definition}

In the projection semantics, we consider sets of frames. There is a notion that these frames are identical with regard to 
all operations we have performed so far, which we formalize in the following:
\begin{definition}[Compatible Frames]
  We define two frames being \emph{compatible}, written
  $F\simeq F'$, as the least relation satisfying the following rules:
  \begin{displaymath}
    \begin{array}{cccccc}
      \infer{0\simeq 0}{}\\ 
      \infer{F_0.\X\mapsto t\simeq  F_0'.\X\mapsto t'}{F_0\simeq
      F_0'}\\
      \infer{F_0.\X\leftarrow r\mapsto t\simeq  F_0'.\X\leftarrow r\mapsto t'}{F_0\simeq
      F_0'}\\
      \infer{F_0.r_1\doteq r_2\simeq F_0'.r_1\doteq r_2}{F_0\simeq F_0'}
    \end{array}
  \end{displaymath}
\end{definition}

Thus, frames are compatible if they have passed the same checks and performed the same operations to construct new terms, in the same order.

Our analysis should extend a frame to contain all derivable subterms. This goal is captured in the following: 
\begin{definition}
  A label $\X \in \dom(F)$ is \emph{analyzed} (in $F$) if for any constructive key-recipe $r_k$ and destructor $d \in \Sigma_d$,
  if $d$ applies to $(F(\X),F(r_k))$, there exists an entry $\X' \leftarrow d(\X,r_k') \mapsto t$ where $F(r_k) \algeq F(r_k')$.
  Furthermore, for the associated verifier $v$ there must be an entry $v(\X,r_k') \doteq \top$. \\
  A frame $F$ is \emph{fully analyzed} if all labels in $\dom(F)$ are analyzed.  
\end{definition}
Note that since $F$ is well-formed, we must have $t \algeq d(F(\X),F(r_k))$.

When all the subterms have been added to a frame, we can obtain any obtainable term by \emph{composition},
i.e. a constructive recipe. The following recursive function returns all constructive recipes for a given term: 
\begin{definition}
    We define $\mathit{compose}(F,t) = \bigcup\{\mathit{compose}_{lc}(F,s) \mid s \in [t]_{= B}\}$ 
    where \\
    $\mathit{compose}_{lc}(F,s) = \mathit{compose}_l(F,s) \cup \mathit{compose}_c(F,s)$, \\
    $\mathit{compose}_l(F,s) = \{\X | \X \in \dom(F) \land F(\X) = s\}$, and \\
    $\mathit{compose}_c(F,s) = \{f(s_1',\ldots,s_n') \mid s_1' \in \mathit{compose}(F,s_1) \land \ldots \land s_n' \in \mathit{compose}(F,s_n)\}$ if $s = f(s_1,\ldots,s_n)$ for $f \in \Sigma_c$ \\ 
    and $\mathit{compose}_c(F,s) = \{\}$ otherwise. 
\end{definition}
That $\mathit{compose}(F,t)$ is finite can be seen by induction on $|t|$ and that $[t]_{= B}$ is finite.

The following shows that for fully analyzed frames, $\mathit{compose}$ can create any obtainable term:
\begin{lemma}\label{lemma:compose-in-analyzed}
  Assume that $F$ is a fully analyzed frame. \\
  If there is a recipe $r$ and a constructive term $t$ such that $F(r) \algeq t$, 
  then we have that \\
  (1) $\mathit{compose}(F,t)$ is non-empty, and\\
  (2) for every $r_c \in \mathit{compose}(F,t)$ we have $F(r_c) \algeq t$.
\end{lemma}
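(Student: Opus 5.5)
Part (2) is the easy half and I would dispatch it first, by induction on the structure of $r_c$ following the recursive definition of $\mathit{compose}$. If $r_c\in\mathit{compose}_l(F,s)$ for some $s\in[t]_{=B}$, then $F(r_c)=s=_B t$, hence $F(r_c)\algeq t$. If $r_c=f(r_1,\ldots,r_n)\in\mathit{compose}_c(F,s)$ with $s=f(s_1,\ldots,s_n)$, then each $r_i\in\mathit{compose}(F,s_i)$. The induction hypothesis gives $F(r_i)\algeq s_i$, so by congruence $F(r_c)\algeq s=_B t$. This uses only that $=_B\subseteq\algeq$.

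For part (1), I would first reduce to a constructive recipe. The claim is that whenever $F(r)\algeq t$ with $t$ constructive, there is a constructive $r'$ with $F(r')\algeq t$. I would prove this by induction on the number of destructor/verifier occurrences in $r$, with a secondary induction on $|r|$. Pick an innermost occurrence $g(r_0,r_1)$, $g\in\Sigma_d\cup\Sigma_v$, with $r_0,r_1$ constructive.

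\begin{itemize}
\item If $g(F(r_0),F(r_1))$ is reducible after normalizing the arguments (Lemma~\ref{lemma:convergence}), the key is simply $r_1$. For a destructor, full analysis of $F$ gives an entry $\X'\leftarrow d(\X,r_k')\mapsto t'$ with $t'\algeq F(d(r_0,r_1))$, and we replace the occurrence by $\X'$. For a verifier, the result is $\top$, which is constructive.
\item The subtle point is that analyzedness is stated for labels, whereas $r_0$ may be a composed constructive recipe $f(\ldots)$. I would argue by cases on the rule in $R$. If $r_0$'s head already matches the constructor (e.g.\ $r_0=\pair(a,b)$ and $d=\fst$), the reduct is a subrecipe $a$ or a key-check, so we replace directly. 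For $\expinv$, the $B$-equivalence can expose a different top exponent, and here the match must go through a label inside $r_0$. If $r_0$ is a label, analyzedness applies.
\item If $g(F(r_0),F(r_1))$ is irreducible, then the normal form of $F(r)$ is constructive ($t$ is constructive), and the convergence of $\rightarrow_{R/B}$ (Theorem~\ref{thm:word:problem}) shows that this stuck destructor subterm cannot occur in the normal form. So the occurrence must lie below another destructor that erases it. I would handle this by choosing the occurrence via an outermost-reducible argument instead, or by replacing the stuck argument by an arbitrary constructive recipe (say $\top$) in an erased position.
\end{itemize}

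Once a constructive $r'$ with $F(r')\algeq t$ is in hand, I show by induction on $|r'|$ that $\mathit{compose}(F,t)\neq\emptyset$. Since $F$'s entries are constructive up to $\algeq$ and constructors other than $\cexp$ are free, $F(r')\downarrow_{R/B}=_B t$. If $r'$ is a label $\X$, then $t\in[F(\X)]_{=B}$, so $\X\in\mathit{compose}_l$. Here one needs entries stored in normal form, which well-formedness almost gives, up to normalizing the $\leftarrow$ entries. If $r'=f(r_1,\ldots,r_n)$, then some $s=_B t$ has the form $f(F(r_1)\downarrow,\ldots)$; by induction each component is composable, so $\mathit{compose}_c(F,s)$ is non-empty.

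I expect the main obstacle to be the elimination step for $\expinv$ and composed arguments. The trouble is that $B$ lets the exponent $y$ appear at the top of $F(r_0)$ only after permutation, so $r_0$ need not be a label. I would handle this by an auxiliary claim that in a fully analyzed frame, any reducible $d(F(r_0),F(r_1))$ with $r_0,r_1$ constructive equals $F(r'')$ for a constructive $r''$ of size at most $|r_0|$. I would prove it by induction on $r_0$, descending through the $\cexp$ tower to the label that carries the permuted exponent. This is also where the pro forma $\vexp$ makes the verifier cases uniform.
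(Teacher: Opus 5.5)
Your route is the one sketched in Section~\ref{sec:mec}: eliminate innermost destructor/verifier applications using full analysis, then apply completeness of $\mathit{compose}$ to the resulting constructive recipe. The appendix proof instead runs a top-down induction on $|r|$. Your part (2) is the same argument as the paper's. Two of your side remarks are correct and cover points the paper's proof passes over. First, $\mathit{compose}_l$ tests syntactic equality, so the $\leftarrow$ entries must store normal forms. Second, a composed first argument of $\expinv$ needs your descent through the $\cexp$ tower. That descent may also end at a node $\cexp(a,b)$ of the recipe whose exponent matches, not only at a label.

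The step that is still open is the stuck case, and the concrete fix you name fails as stated. Take $r=\dscrypt(\scrypt(\X_1,s),s')$ with $s=\fst(\X_2,\top)$ and $s'=\fst(\X_3,\top)$, where $F(\X_2)=F(\X_3)$ is not a pair. Then:
\begin{itemize}
\item $F(r)\algeq F(\X_1)$, and both innermost destructor occurrences are stuck.
\item Each of $s,s'$ is erased by the $\dscrypt$ rule, but the nonlinear key also compares them against each other.
\item Replacing one of them by $\top$ leaves $\dscrypt$ stuck, so the new recipe no longer yields $t$.
\end{itemize}
A top-down treatment (your ``outermost'' option, and the paper's case where the recipe reduces) would discard both here. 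You leave that option undeveloped, though, and in general it needs an argument that no stuck subrecipe survives once recipe-level redexes are exhausted.

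A clean repair is a simultaneous replacement. On recipes, let $\rho$ replace by $\top$ every maximal subrecipe whose value has a destructor or verifier at the top of its $R/B$-normal form. On terms, let $\rho$ replace every maximal destructor/verifier-topped subterm by $\top$. By induction on recipes, $F(\rho(r'))\downarrow_{R/B}=_B\rho(F(r')\downarrow_{R/B})$. This holds because $\rho$ preserves constructor structure, commutes with $B$, maps $B$-equal keys to $B$-equal keys, and fixes the dummy key $\top$. Hence every top-level rule match in the normalization of $F(r)$ survives. Since $F(r)\downarrow_{R/B}$ is constructive, $\rho$ fixes it. Every destructor/verifier occurrence in $\rho(r)$ then reduces at the top, so your innermost elimination runs with no stuck case.
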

\begin{proof}
  (1)
  We show that if $F(r) \rightarrow_{R/B}^* t$ then there is $r_c \in \mathit{compose}(F,t)$ such that $F(r_c) \rightarrow_{R/B}^* t$.\\
  We do induction on $|r|$. 
  If $r = \X$, we are done. 
  If $r = f(r_1,\ldots,r_n)$ for $f \in \Sigma_c$, we apply the induction hypothesis and are done. 
  If $r = d(r_1,\ldots,r_n)$ for $f \in \Sigma_d$, then either the recipe reduces and we are done by the induction hypothesis, or 
  $r = d(\X,r_\mathit{key})$ and there is another label $\X'$ and (by the induction hypothesis) a constructive recipe $r_\mathit{key}'$ such that $F(\X') = F(d(\X,r_\mathit{key}'))$.
  \unfix{simon}{this proof might need closer inspection}
  
  (2) It is easy to prove that for every $r_c \in \mathit{compose}_\mathit{lc}(F,t)$ we have $F(r_c) \algeq t$ by induction on $|t|$. 
  Furthermore, it is by definition the case that for every $s \in [t]_{= B}$ we have $s \algeq t$.
\end{proof}

The following shows completeness of $\mathit{compose}$:
\begin{lemma}\label{lemma:compose-complete}
  For any frame $F$ (even if not fully analyzed), if $r$ is a constructive recipe such that $F(r) \algeq t$ and $t$ is constructive, then $r \in \mathit{compose}(F,t)$.
\end{lemma}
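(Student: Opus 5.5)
The plan is to prove the statement by induction on $|r|$. The key observation is that a constructive recipe applied to a frame with constructive entries yields a constructive term. Such a term is irreducible under $\rightarrow_{R/B}$, so $E$-equality collapses to $B$-equality and we land directly in $[t]_{=B}$, which is exactly the set over which $\mathit{compose}$ ranges.

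\emph{Preliminary step.} First I would establish that every entry $F(\X)$ is constructive. For entries $\X\mapsto t$ this is part of well-formedness. For derived entries $\X\leftarrow r'\mapsto t$, well-formedness only guarantees $t\algeq t'$ for some constructive $t'$. So I would argue, or adopt as a convention of the analysis procedure, that the stored term is the constructive normal form $t\downarrow_{R/B}$. This is harmless because $\mathit{compose}_l$ compares $F(\X)$ syntactically, and the analysis stores reduced terms. Given this, $F(r)$ is constructive whenever $r$ is a constructive recipe, since $r$ contains only public constructors and labels.

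Next, every left-hand side in $R$ has a destructor or verifier at the root, and $B$ only relates terms headed by $\cexp$. Hence a constructive term admits no $\rightarrow_{R/B}$ step and is its own normal form. Since both $F(r)$ and $t$ are constructive, Theorem~\ref{thm:word:problem} turns $F(r)\algeq t$ into $F(r)=_B t$, that is, $F(r)\in[t]_{=B}$.

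\emph{Induction.} The two cases are as follows.
\begin{itemize}
\item If $r=\X$, then $\X\in\mathit{compose}_l(F,F(\X))\subseteq\mathit{compose}_{lc}(F,F(\X))$. Since $F(\X)\in[t]_{=B}$, we get $r\in\mathit{compose}(F,t)$.
\item If $r=f(r_1,\ldots,r_n)$, then $f$ is a public constructor, because $r$ is a constructive recipe over $\Sigma_p$, so $f\in\Sigma_c$. Put $s=F(r)=f(F(r_1),\ldots,F(r_n))\in[t]_{=B}$. Each $F(r_i)$ is constructive and trivially $F(r_i)\algeq F(r_i)$. So the induction hypothesis, applied with target term $F(r_i)$, gives $r_i\in\mathit{compose}(F,F(r_i))$. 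By the definition of $\mathit{compose}_c$ we obtain $r\in\mathit{compose}_c(F,s)\subseteq\mathit{compose}(F,t)$.
\end{itemize}

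\emph{Expected obstacle.} The delicate point is the preliminary step: making sure the frame entries, especially those introduced by analysis, are syntactically constructive and not merely $E$-equivalent to constructive terms. Without this, $F(r)$ might contain destructors and the reduction to $=_B$ via irreducibility breaks down. I would handle it by showing that the analysis procedure always records normal forms, which is consistent with the well-formedness requirement. The irreducibility of constructive terms, by contrast, is a routine check against the rules of $R$.
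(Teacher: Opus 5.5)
Your proof is correct and follows the paper's route: the paper's entire proof is ``induction on $|r|$'', and you supply the details it leaves implicit. In particular, you show that constructive terms are $R/B$-irreducible, so $F(r)\algeq t$ collapses to $F(r)\in[t]_{=B}$, which is exactly the set $\mathit{compose}$ ranges over. You are also right that the lemma needs derived entries to be stored in constructive (normalized) form, but this has to be adopted as a convention (as the main-text analysis example does) rather than shown, because the appendix's analysis procedure as written records the destructor-containing term $F(r_1)$, and for such a frame the statement fails.
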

\begin{proof}
  Follows by induction on $|r|$. 
\end{proof}

In the following, we define an analysis procedure that can be triggered 
during the projection semantics of Section~\ref{sec:projection} any time a new term is added the frame. 

For the analysis of frame $F$, we require a \emph{marking} of the type $M: \dom(F) \longrightarrow \{\star,+,\checkmark\}$.
All labels are initially marked $\star$,
and whenever a label is added to the frame 
it is also marked $\star$. 

\begin{definition}
  A marking $M$ of frame $F$ is \emph{accurate} if any 
  label $\X \in \dom(F)$ where $M(X) = \checkmark$ is analyzed in $F$.
\end{definition}

\begin{definition}[Analysis Procedure]
  We want to calculate the \emph{analyzed extensions} of a set of frames, $\{F_1,\ldots,F_n\}$, and marking, $M$, 
  where the marking is accurate for each frame and the frames are pairwise compatible. 
  The analysis-extension function, $\Ana(M,\{F_1,\ldots,F_n\})$, is defined by three cases: \\
  If the given set is empty ($n = 0$), return $\{\}$. \\
  If there is a label $\X$ marked $\star$, try the following in order:
  \begin{itemize}
    \item $F_1(\X),\ldots,F_n(\X)$ are all terms for which no verifier exists, then return $\Ana(M[\X \mapsto \checkmark],\{F_1,\ldots,F_n\})$.
    \item (\textit{decomposition}) We have a frame $F_i$, a verifier $v \in \Sigma_v$, and a constructive term $t$ such that $v$ applies to $(F_i(\X),t)$ (for any $d$ and $F_i(\X)$ at most one such $t$ exists and it is easy to find).
    Furthermore, $\mathit{compose}(F_i,t)$ is non-empty, containing at least $r_{\mathit{key}}$.  \\
    Then, let $d_1,\ldots,d_k$ be all destructors associated with $v$ and $\X_1,\ldots,\X_k$ fresh labels,
    and set $r_1 = d_1(\X,r_{\mathit{key}}),\ldots,r_k = d_k(\X,r_{\mathit{key}})$. \\
    Let $M'$ be the marking that for any $\X'$ returns $\checkmark$ if $\X = \X'$, $\star$ if $\X' \in \{\X_1,\ldots,\X_k\}$ or $M(\X') = +$, and $M(\X')$ otherwise. \\
    Finally, let \\ 
    $\mathit{FS}^+ = \{F.v(\X,r_{\mathit{key}}) \doteq \top.X_1 \gets r_1 \mapsto F(r_1).\, \cdots\, .X_k \gets r_k \mapsto F(r_k) \mid F \in \{F_1,\ldots,F_n\} \land F(v(\X,r_{\mathit{key}})) \algeq \top\}$\\ and \\
    $\mathit{FS}^- = \{F  \mid F \in \{F_1,\ldots,F_n\} \land F(v(\X,r_{\mathit{key}})) \not\algeq \top\}$.\\
    We then return $\Ana(M',\mathit{FS}^+) \cup \Ana(M,\mathit{FS}^-)$.
    \item Return $\Ana(M[\X \mapsto +],\{F_1,\ldots,F_n\})$.
  \end{itemize}
  If no labels are marked $\star$, return $\{(M,\{F_1,\ldots,F_n\})\}$.
\end{definition}

Note that the choice of $r_\mathit{key}$ does not matter, since if multiple recipes are supposed to 
produce the same term but actually do not, we are not in a run of the protocol that agrees with the given frame.
This will then be discovered when we later do the checks. \unfix{simon}{I'm wondering if this should be stated more formally in some way}

\begin{thm}\label{thm:ana:terminate}
  The analysis procedure terminates.
\end{thm}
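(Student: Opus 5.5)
We prove termination of $\Ana(M,\{F_1,\ldots,F_n\})$ by a well-founded measure that strictly decreases along every recursive call. Each call that does not return immediately handles a label $\X$ marked $\star$ in one of three ways. It either marks $\X$ with $\checkmark$, or marks it with $+$, or performs a decomposition. A decomposition also makes two recursive calls: one on $\mathit{FS}^+$ (the frames extended by $k$ fresh labels) and one on $\mathit{FS}^-$ (a subset of the frames, with unchanged marking). The recursion tree is finitely branching (at most binary), so by König's lemma it suffices to show that every branch is finite.

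The key quantity is the multiset of sizes of the terms stored under labels that may still be decomposed. Call a label \emph{open} if it is marked $\star$ or $+$. Since the frames are pairwise compatible they share a domain, and the same label $\X$ is treated uniformly across them. For one fixed frame $F$ (the call on $\mathit{FS}^-$ only drops frames; if $\mathit{FS}^+$ is nonempty, follow a frame $F$ in it), let
\[
\mu = \bigl(\,\{\!\{\, |F(\X)\downarrow_{R/B}| \mid \X\ \text{open}\,\}\!\},\ \#\{\X \mid M(\X)=\star\}\,\bigr),
\]
ordered lexicographically with the multiset extension of $>$ on $\mathbb{N}$ in the first component. Here $|\cdot|$ is well defined on $B$-classes by the size lemma, and normal forms exist by Lemma~\ref{lemma:convergence}(D). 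This order is well-founded.

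We now check the cases. \emph{Marking $\checkmark$ or $+$ without decomposition:} if $\X$ becomes $\checkmark$, an element is removed from the multiset. If $\X$ goes from $\star$ to $+$, the multiset is unchanged and the $\star$-count drops by one. \emph{Decomposition:} $\X$ becomes $\checkmark$, so its size is removed from the multiset. The new labels $\X_j$ carry $F(d_j(\X,r_{\mathit{key}}))$, which is a redex under $R$ (because $v$ applies). Its normal form is, up to $B$, a proper subterm of $F(\X)\downarrow_{R/B}$: each rule in $R$ returns a variable of the left-hand side occurring strictly inside the constructor, or $\top$. Hence every new size is strictly smaller than the removed one, and the multiset decreases in the multiset order. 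The $+$ labels reset to $\star$ stay in the multiset with their sizes unchanged, so the increased $\star$-count is harmless because the first component has already decreased. \emph{The $\mathit{FS}^-$ branch:} it only removes frames, and since we follow a single frame, that branch either becomes empty (and terminates immediately) or keeps $F$ with marking $M$. This is the one spot needing care, because the measure does not decrease there. We handle it by also counting $n$: the $\mathit{FS}^-$ call strictly decreases the number of frames while the branch is nonempty, because $F_i(v(\X,r_{\mathit{key}}))\algeq\top$ holds for the witnessing $F_i$, so $F_i\in\mathit{FS}^+$. We therefore use the lexicographic measure $(\mu,n)$ along each branch, where $\mu$ is taken for a frame still present in that branch.

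The main obstacle is the subterm claim for decomposition modulo $B$. For Diffie-Hellman, $\expinv(\cexp(x,y),y)\to x$ applies only after $B$-rearrangement. We would argue that $B$ preserves size and only permutes exponents, so the $x$ obtained from some $B$-variant $\cexp(x,y)$ of $F(\X)\downarrow$ still has size $|F(\X)\downarrow| - |y| - 1$, which is smaller. We also need that $F(\X)\algeq$ a constructive normal form; this is guaranteed by the well-formedness of enhanced frames.
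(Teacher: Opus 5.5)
Your measure is a genuinely different route from the paper's, and its core is sound. When $\X$ is decomposed, replacing $|F(\X)\downarrow_{R/B}|$ by the strictly smaller sizes of the destructor results is a decrease in the multiset order. The $\star\to{+}$ step is caught by the second component, and your handling of the $B$-rearrangement for $\expinv$ is correct.

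\textbf{The gap.} The step that fails as written is the $\mathit{FS}^-$ call. You assert that this branch ``either becomes empty or keeps $F$''. But suppose the frame $F$ you follow passes the verifier check while another frame $G$ fails it. Then $\mathit{FS}^-$ is nonempty and does not contain $F$. Compatible frames share labels, not terms. For example, take $F(\X)=\pair(a,b)$ and $G(\X)=h(c)$ with $c$ large. The decomposition is witnessed by $F$, so on the edge into the $\mathit{FS}^-$ call you are forced to switch to $G$. The multiset of $G$ contains $|h(c)|$, which can exceed every element of $\mu(F)$. So $(\mu,n)$ can increase along that edge, and your closing phrase ``a frame still present in that branch'' is not backed by any argument.

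\textbf{Two repairs.}
\begin{itemize}
\item[(a)] Suppose some branch is infinite. Every frame set on it is nonempty, and every frame in a set arises from exactly one frame of the previous set. The initial frames that still have a descendant at depth $j$ therefore form a decreasing chain of nonempty finite sets, so some lineage survives the whole branch. Measured along that lineage, $(\mu,n)$ strictly decreases at every step, which is a contradiction.
\item[(b)] More simply, let the first component be the multiset union, over all frames of the current set, of the sizes of open labels. The $\mathit{FS}^-$ call then strictly decreases it, because at least the witness frame is dropped together with its entry for the open label $\X$. The $\mathit{FS}^+$ call decreases it frame by frame, and $n$ becomes unnecessary.
\end{itemize}

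\textbf{Comparison with the paper.} Summing over all frames is exactly what lets the paper treat $\mathit{FS}^-$ uniformly. Its measure is (total raw size minus number of $\checkmark$-labels, total raw size minus number of ${+}$-labels), and dropping frames lowers the total. In exchange, the paper relies on the merely asserted bound that the number of labels never exceeds the number of raw subterm positions, which keeps both components nonnegative. Your multiset version avoids that counting bound by encoding directly that decomposition replaces a term by strictly smaller ones. The price is using the multiset ordering and normal forms modulo $B$.
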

\begin{proof}
    Let $\mathit{raw}(F)$ be all the entries of the form $\X \mapsto t$ (i.e. the label assignments that are not the product of analysis). 
    Let $|F| = \sum \{|t| \mid \X \mapsto t \in \mathit{raw}(F)\}$ and $|\{F_1,\ldots,F_n\}| = \sum_{i=1}^{n} |F_i|$.
    Let $|M|^\checkmark$ be the number of labels marked $\checkmark$ in $M$ and $|M|^+$ be the number of labels marked $+$. 
    Lexicographically measure $(|\{F_1,\ldots,F_n\}| - |M|^\checkmark,|\{F_1,\ldots,F_n\}| - |M|^+)$.\\
    The number of labels in a marking will never exceed the number of subterms of the raw terms of the frames the marking is based on,
    so $|\{F_1,\ldots,F_n\}| - |M|^\checkmark$ and $|\{F_1,\ldots,F_n\}| - |M|^+$ will always be at least 0.\\
    It is easy to see that no step in the analysis makes $|\{F_1,\ldots,F_n\}| - |M|^\checkmark$ increase, and that 
    each step will either decrease $|\{F_1,\ldots,F_n\}| - |M|^\checkmark$ or $|\{F_1,\ldots,F_n\}| - |M|^+$.
\end{proof}

\begin{thm}\label{thm:ana:correct}
  If $F_1,\ldots,F_n$ are compatible frames, $M$ is an accurate marking, and $\Ana(M,\{F_1,\ldots,F_n\}) = \{(M_1,\mathit{FS}_1),\ldots,(M_m,\mathit{FS}_m)\}$,
  then all frames in each $\mathit{FS}_i$ are fully analyzed, pairwise compatible, and $M_1,\ldots,M_m$ are accurate.
\end{thm}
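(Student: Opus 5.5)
The plan is to argue by well-founded induction on the recursion of $\Ana$, using the lexicographic measure from the proof of Theorem~\ref{thm:ana:terminate}. The claim to carry through is the statement itself, generalized to the invariant it presupposes: whenever $\Ana(M,\mathit{FS})$ is called with pairwise compatible frames and a marking $M$ accurate for every frame in $\mathit{FS}$, every returned pair $(M_j,\mathit{FS}_j)$ has $\mathit{FS}_j\subseteq$ some descendant set, with pairwise compatible frames, $M_j$ accurate, and no label marked $\star$. For each of the three clauses of the definition I will check that the recursive calls again satisfy the precondition, so the induction hypothesis applies.

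\emph{Compatibility.} In the empty case and the terminal case this is trivial. In the first two marking-only cases the frames are unchanged. In the decomposition case the frames in $\mathit{FS}^-$ are a subset of the original set, hence pairwise compatible. Every frame in $\mathit{FS}^+$ is extended by the same entries: $v(\X,r_{\mathit{key}})\doteq\top$ and $\X_l\gets r_l\mapsto\cdot$, with the same recipes $r_l$ and labels $\X_l$. The rules of $\simeq$ then give compatibility of the extensions. Well-formedness of the new entries holds as well. The check is true by the defining filter of $\mathit{FS}^+$. For the $\gets$ entries, $v$ succeeding means the redex $d_l(F(\X),F(r_{\mathit{key}}))$ reduces under $R$ to a constructive term.

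\emph{Accuracy.} This is the main obstacle. Two points need care.

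First, when the decomposition step marks $\X$ with $\checkmark$, we must show $\X$ is analyzed in every frame of $\mathit{FS}^+$. Take any constructive key recipe $r_k$ and destructor $d$ that applies to $(F(\X),F(r_k))$. I would use that the key position of each rule in $R$ is determined by $F(\X)$ up to $=_E$; this is the remark that "at most one such $t$ exists". Then $F(r_k)\algeq F(r_{\mathit{key}})$ for the chosen $r_{\mathit{key}}$, since both equal this unique key. So the entry $\X_l\gets d(\X,r_{\mathit{key}})$ and the check $v(\X,r_{\mathit{key}})\doteq\top$ witness analyzedness. There is one subtlety. The decomposition is triggered by a single $F_i$, so for other frames in $\mathit{FS}^+$ the verifier also succeeds, and then the same uniqueness argument applies. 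The frames where it fails go to $\mathit{FS}^-$, where $\X$ stays $\star$ and is revisited.

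Second, adding entries may make previously non-decomposable labels decomposable. This is why labels marked $+$ are reset to $\star$ in $M'$. Labels already marked $\checkmark$ remain analyzed, because analyzedness of $\X'$ only asks for the existence of entries. Extending the frame keeps those entries, and it keeps $F(r_k)$ unchanged for recipes over the old domain. New constructive key recipes over the enlarged domain need an argument: I will argue that any key such a recipe produces is still $E$-equal to the unique key determined by $F(\X')$, so the existing entry still suffices.

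For the first marking case (no verifier applies to any $F_i(\X)$), $\X$ is vacuously analyzed. For the $+$ case, accuracy is unaffected since $+$ is not $\checkmark$.

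Finally, at the terminal clause no label is $\star$. Labels marked $+$ are those for which no frame had a composable key at the time. I must show they are analyzed in the final frames; this is where the $+$-to-$\star$ reset is essential. Suppose a key for such an $\X$ became composable. Then Lemma~\ref{lemma:compose-complete} says compose would find it once the frame grew, and every growth resets $+$ to $\star$. Hence at termination a $+$ label has no constructive key recipe at all in the final frame, so it is vacuously analyzed. Together with accuracy for $\checkmark$, every label is analyzed, so all returned frames are fully analyzed.
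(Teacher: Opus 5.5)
Your overall route is the paper's: show that every clause of $\Ana$ preserves compatibility and accuracy, then use the reset of $+$ to $\star$ together with Lemma~\ref{lemma:compose-complete} to conclude that no $+$ label is decomposable at termination. The gap is that accuracy is not an inductive invariant, and two of your steps silently use more than it gives.

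\emph{The growth step.} You say ``the existing entry still suffices''. But accuracy allows a $\checkmark$ label that is analyzed only vacuously: it has no entry, because no key was composable yet. Take $F=[\X_1\mapsto\scrypt(m,k),\X_2\mapsto\scrypt(k,k'),\X_3\mapsto k']$ with non-public constants $m,k,k'$, and set $M(\X_1)=\checkmark$, $M(\X_2)=M(\X_3)=\star$. Then $M$ is accurate. Decomposing $\X_2$ adds $\X_4\gets\dscrypt(\X_2,\X_3)$, whose value is $\algeq k$. Meanwhile $\X_1$ keeps $\checkmark$, so the returned marking is not accurate and the returned frame is not fully analyzed.

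\emph{The final step.} Your claim that ``labels marked $+$ are those for which no frame had a composable key at the time'' holds only for $+$ marks set by the procedure itself. With $F=[\X_1\mapsto\scrypt(m,k),\X_2\mapsto k]$, $M(\X_1)=+$ and $M(\X_2)=\star$, $\Ana$ returns $F$ unchanged. Yet $\X_1$ is still decomposable with key $\X_2$.

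So the statement needs a stronger hypothesis, and your induction needs a stronger invariant:
\begin{itemize}
\item every $\checkmark$ label either has a term admitting no verifier, or already carries the verifier check and the destructor entries for its key;
\item every $+$ label has no composable key in any frame of the current set.
\end{itemize}
This holds for the all-$\star$ initial marking, which is how the paper uses $\Ana$. Each clause preserves it: the $+$ clause by its guard plus Lemma~\ref{lemma:compose-complete}, $\mathit{FS}^-$ because it only drops frames, and $\mathit{FS}^+$ because it resets $+$ to $\star$. At termination it directly yields full analyzedness. The paper's ``easy to see'' passes over the same point.

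\emph{Key uniqueness.} The uniqueness you invoke (``both equal this unique key'') holds per rewrite rule, not per term. So it does not show that every applicable destructor is among the $d_1,\ldots,d_k$ associated with the chosen $v$. A term $\sign(m,\inv(k))$ matches both the $\vsign$ rule (key $k$, no destructor) and the $\mathit{vopen}$ rule (key $\top$, destructor $\open$). If the decomposition clause picks $v=\vsign$, the label becomes $\checkmark$ although $\open$ applies with key $\top$ and no $\open$ entry exists. You must decompose with every applicable verifier before marking $\checkmark$, or fix the choice $\mathit{vopen}$ for $\sign$.

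Similarly, modulo $B$ the term $\cexp(\cexp(g,a),b)$ has both $a$ and $b$ as $\expinv$-keys. Here uniqueness is rescued only by the paper's convention that the procedure aborts whenever $\vexp$ would be used, which you should invoke explicitly.
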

\begin{proof}
  It is easy to see that the procedure preserves compatibility and accuracy of markings.

  When the procedure is done, no label will be marked $*$.
  If one of $F_{ij}$ is then not fully analyzed, there must be a label $\X$, a constructive term $t$, and (by Lemma~\ref{lemma:compose-complete}) a recipe $r_k \in \mathit{compose}(F,t)$,
  such that the \textit{(decomposition)} case is applicable for $\X$ (which must mean, by accuracy, that $\X$ is marked $+$). 
  However, this cannot be the case, as if this $r_k$ existed when the marking of $\X$ was changed from $\star$ to $+$ then the wrong case was applied,
  and if it did not exist then the frame must have changed since this, which would remark $\X$ as $\star$. 
\end{proof}

After distinguishing and extending a set of frames with this procedure, defining procedures for solving the two 
remaining problems stated at the beginning of this section becomes easier. 

We start by showing that in an analyzed frame, all checks can be done in a constructive way: 
\begin{lemma}\label{lemma:constructive-checks}
  If $F$ is a fully analyzed frame, then for any check $r_a \doteq r_b$ where $F(r_a) \algeq F(r_b)$ there 
  exists a set of constructive checks $\phi_c$
    such that 
  $\bigwedge \mathit{checks}(F) \land \bigwedge \phi_c \models_E r_a \doteq r_b$.

  If $r_a$ is of the form $f(r_a',r_k)$ where $f \in \Sigma_v \cup \Sigma_d$ and $r_a'$, $r_k$ and $r_b$ are constructive, 
  then there is such a $\phi_c$ where for each check $r_1 \doteq r_2 \in \phi_c$ 
  we have that $|r_1 \doteq r_2| \leq \mathit{max}(2*|r_k|,|r_a'| + |r_b|)$.
\end{lemma}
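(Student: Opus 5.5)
The plan is to eliminate destructors and verifiers from $r_a$ and $r_b$ one innermost occurrence at a time. Each elimination uses an equation from $\mathit{checks}(F)$, so under $\mathit{checks}(F)$ every non-constructive recipe becomes equivalent to a constructive one. The remaining equation between constructive recipes is the check we put into $\phi_c$. The induction is on the number of destructor/verifier occurrences in $r_a\doteq r_b$, with $|r_a\doteq r_b|$ as a secondary measure.

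\emph{Reduction step.} Pick an innermost subterm $f(r_0,r_k)$ with $f\in\Sigma_d\cup\Sigma_v$ and $r_0,r_k$ constructive. Split on whether $F(f(r_0,r_k))$ is an $R/B$-redex (using Theorem~\ref{thm:word:problem} and Lemma~\ref{lemma:convergence} to reason about normal forms).
\begin{enumerate}
\item \emph{It is a redex.} Since $r_0$ is constructive, the top symbol of $r_0$ either is a constructor matching $f$ or is a label.
\begin{enumerate}
\item \emph{Constructor case}, e.g.\ $r_0=\scrypt(r_m,r_k')$. Then $f(r_0,r_k)$ $E$-equals $r_m$ (or $\top$) provided the key matches. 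So we add the constructive check $r_k\doteq r_k'$ (or $r_k\doteq\inv(\cdot)$, and similarly for the other rules) to $\phi_c$ and rewrite to $r_m$.
\item \emph{Label case}, $r_0=\X$. Because $F$ is fully analyzed, there is an entry $\X'\leftarrow d(\X,r_k')\mapsto t$ together with $v(\X,r_k')\doteq\top$ in $\mathit{checks}(F)$, where $F(r_k)\algeq F(r_k')$. We add the constructive check $r_k\doteq r_k'$ to $\phi_c$; then $\mathit{checks}(F)\wedge\phi_c$ implies $f(\X,r_k)\doteq\X'$ (or $\doteq\top$).
\end{enumerate}
\item \emph{It is not a redex.} Then $F(r_a)$ contains an irreducible destructor application. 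Since $F(r_a)\algeq F(r_b)$, Lemma~\ref{lemma:algeq-split} forces $F(r_b)$ to contain a $B$-equal irreducible application as well. By the innermost-choice argument, both sides then have the form $f(\cdot,\cdot)$ with the same $f$, so we decompose $r_a\doteq r_b$ into argumentwise checks, which are smaller.
\end{enumerate}
This non-redex case, together with the redundant $\vexp$ situation, is where I expect the real fiddliness. My first attempt is to argue that wellformedness of frames (every entry is $E$-equal to a constructive term) rules out irreducible destructors on the side that becomes constructive, so only the symmetric decomposition remains.

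\emph{Soundness.} Each step replaces a subterm by one that is $\mathit{checks}(F)\wedge\phi_c$-equivalent, and every added check $c$ holds in $F$ because $F(r_k)\algeq F(r_k')$. Congruence of $\models_E$ then lifts the replacements to the whole equation, so at the end $r_a\doteq r_b$ follows from the resulting constructive equation plus the collected checks.

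\emph{Size bound.} For $r_a=f(r_a',r_k)$ with $r_a',r_k,r_b$ constructive, only one reduction step is needed. The checks generated are:
\begin{itemize}
\item the key check $r_k\doteq r_k'$. Here $r_k'$ can be chosen via $\mathit{compose}$ to have size at most $|r_k|$: if necessary, replace $r_k'$ by the constructive recipe $r_k$ itself, which is legitimate by Lemma~\ref{lemma:compose-complete} and the key-independence remark. This gives size at most $2|r_k|$;
\item the residual check, either $\X'\doteq r_b$ or $r_m\doteq r_b$, where $r_m$ is a proper subterm of $r_a'$. Its size is at most $|r_a'|+|r_b|$.
\end{itemize}
Hence every check in $\phi_c$ has size at most $\max(2|r_k|,|r_a'|+|r_b|)$. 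The main obstacle is precisely the choice of $r_k'$ in the label case: the analysis stores one particular key recipe, possibly larger than $r_k$. I would handle this by bounding via the equation $v(\X,r_k')\doteq\top$ already present in $\mathit{checks}(F)$ and checking $r_k\doteq r_k'$ only up to $\mathit{compose}$-equivalence.
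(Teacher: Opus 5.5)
Your treatment of the bounded case follows the paper's own case split: for a label argument you use the analysis entry plus a key check, and for a composed argument you rewrite. Your label case in the reduction step is sound. However, two steps fail as written, and the second cannot be repaired without changing the statement.

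\emph{Case~2 of your reduction.} With a strictly innermost strategy, a stuck application can sit inside an argument that an outer rule discards. So the conclusion you draw from Lemma~\ref{lemma:algeq-split} is false. Take $r_a=\fst(\pair(\X_5,\dscrypt(\X_1,\X_2)),\top)$ and $r_b=\X_5$, where $F(\X_1)$ is not a symmetric encryption under $F(\X_2)$. The only innermost application, $\dscrypt(\X_1,\X_2)$, is not an $F$-redex. Yet $F(r_b)$ contains no stuck application, and $r_b$ has no destructor at all. Well-formedness of $F$ does not help, because the garbage is produced by the recipe rather than stored in the frame, and most rules of $R$ discard arguments.

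You must give up strict innermost order. Rewrite any $F$-redex $f(r_0,r_k)$ whose principal argument $r_0$ is either syntactically a matching constructor or a label. In the constructor case the discarded subrecipes are never inspected; only the key needs a check, obtained recursively. In the label case, first reduce $r_k$ to a constructive recipe. Compare stuck applications only once no $F$-redex remains, and then modulo $B$ rather than argumentwise. For example, $\cexp(\cexp(\X_g,D),\X_x)\doteq\cexp(\X_9,D)$ with $D$ stuck needs $\cexp(\X_g,\X_x)\doteq\X_9$.

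Also, $r_k\doteq\inv(\cdot)$ is not a check, since $\inv\notin\Sigma_p$. For $\dcrypt$, $\vcrypt$ and $\vsign$, the private-key recipe is necessarily a label, and you must go through its $\pubk$-entry from the analysis. The paper's assertion that $f(r_a',r_k)\to_R r_a''$ holds syntactically has the same defect.

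\emph{The size bound.} Replacing the stored key recipe $r_k'$ by $r_k$ is not legitimate, for three reasons. $\mathit{checks}(F)$ mentions only the recipe the analysis actually chose. Lemma~\ref{lemma:compose-complete} merely says $r_k$ would have been an admissible choice. The key-independence remark concerns runtime values, not $\models_E$. So you need $r_k\doteq r_k'$, of size $|r_k|+|r_k'|$, and no argument can do better, because the bound is false.

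Here is a counterexample. Let $a,m$ be private constants, $h$ a public function, and $r=h(h(h(\X_2)))$. Let $F$ consist of:
\begin{itemize}
\item the entries $\X_1\mapsto\scrypt(m,h(h(h(a))))$ and $\X_2\mapsto a$;
\item the analysis entries $\vscrypt(\X_1,r)\doteq\top$ and $\X_3\leftarrow\dscrypt(\X_1,r)$;
\item then the entry $\X_4\mapsto h(h(h(a)))$.
\end{itemize}
This frame is fully analyzed, and $\dscrypt(\X_1,\X_4)\doteq\X_3$ is a valid check whose bound is $\max(2,2)=2$. The only valid constructive checks of size at most $2$ are reflexive. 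Consider the interpretation that agrees with $F$ on $\X_1,\X_2,\X_3$ but sends $\X_4$ to $a$. It satisfies $\mathit{checks}(F)$ but not the goal, since $\dscrypt(\scrypt(m,h(h(h(a)))),a)$ is stuck.

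The paper's proof has exactly this hole: it puts $r_k\doteq r_k'$ into $\phi_c$ without bounding $|r_k'|$. You were right to flag it. The remedy, however, must be a weaker statement, for example allowing size $|r_k|+|r_k'|$ for the finitely many stored key recipes. The induction measure in Theorem~\ref{thm:complete-checks} would then need to be adjusted accordingly.
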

\begin{proof}
  We prove the first part by induction on $|r_a \doteq r_b|$. 

  The second part of the lemma can be proven by considering the cases of $r_a'$. 
  First, we note that the proof is immediate if it is not the case that $f$ is a destructor or verifier 
  that applies to $(F(r_a'),F(r_k))$, and we therefore only consider this case.\\
  If $r_a' = \X$ where $\X$ is a label, then we know that for the applicable verifier $v$ and destructors $d_1,\ldots,d_n$ 
  we have a constructive recipe $r_k'$ and labels $\X_1,\ldots,\X_n$ such that $\{v(\X,r_k') \doteq \top,d_1(\X,r_k') \doteq \X_1 \ldots d_n(\X,r_k') \doteq \X_n\} \subseteq \mathit{checks}(F)$.
  If $f = v$ we set $\phi_c = \{r_k \doteq r_k',r_b \doteq \top\}$ and are done. 
  If $f = d_i$ we set $\phi_c = \{r_k \doteq r_k',r_b \doteq \X_i\}$ and are done. \\
  If $r_a'$ is a composed recipe (i.e. of the form $g(r_{a1}',\ldots,r_{an}')$) then it must be the case that $f(r_a',r_k) \rightarrow_R r_a''$ for some constructive recipe $r_a''$. 
  We then set $\phi_c = \{r_a'' \doteq r_b\}$ and are done. 
\end{proof}

\begin{definition}
  For a term $F$ we define the set of all checks that hold in the frame: \\
  $\phi(F) = \{ r_1 \doteq r_2 \mid r_1\in\T(\Sigma_p,\dom(F)) \land r_2\in\T(\Sigma_p,\dom(F))
    \land F(r_1) \algeq F(r_2) \}$
  
  We also define the set of all checks between labels and recipes we can compose:\\
  $\phi_\mathit{lc}(F) = \{ \X\doteq r \mid  \X\in\dom(F) \land r\in\mathit{compose}(F,F(\X))\}$
\end{definition}

Since a frame is finite, and for any $t$ we have that $\mathit{compose}(F,t)$ is finite, it must be that $\phi_\mathit{lc}(F)$ is finite. 

We can now solve the \textit{complete set of checks} problem:

\begin{thm}\label{thm:complete-checks}
  If $F$ is fully analyzed, then for any $r_1 \doteq r_2 \in \phi(F)$ we have 
  $\bigwedge \mathit{checks}(F) \land \bigwedge \phi_\mathit{lc}(F) \models_E r_1 \doteq r_2$
\end{thm}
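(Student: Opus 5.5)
Write $\Gamma = \bigwedge \mathit{checks}(F) \land \bigwedge \phi_\mathit{lc}(F)$. The plan has three stages. First, the first part of Lemma~\ref{lemma:constructive-checks} reduces an arbitrary check $r_1\doteq r_2\in\phi(F)$ to constructive checks. Second, every constructive check is derived from $\phi_\mathit{lc}(F)$. Third, the derived consequences are combined. Since $\models_E$ is closed under conjunction and transitivity, it suffices to show that every \emph{constructive} check $r_a\doteq r_b$ with $F(r_a)\algeq F(r_b)$ is $E$-implied by $\Gamma$. I would make sure that the $\phi_c$ supplied by the lemma consists of valid checks, i.e.\ members of $\phi(F)$; the construction in that proof yields exactly such checks.

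For the constructive case I would argue by induction on $|r_a\doteq r_b|$. Fix an $E$-model $\I$ of $\Gamma$. The key step is to normalize one side to a canonical recipe. Let $t$ be the $R/B$ normal form of $F(r_a)$. Since $r_a$ is constructive and $F$ maps to constructive (or $E$-equivalent constructive) terms, $t$ is constructive. By Lemma~\ref{lemma:compose-complete}, both $r_a$ and $r_b$ lie in $\mathit{compose}(F,t)$. The plan is then to show, by induction on the structure of $\mathit{compose}$, that any two recipes in $\mathit{compose}(F,t)$ are $\Gamma$-equivalent. There are three cases.
\begin{itemize}
\item If both recipes are labels $\X,\X'$, then $\X' \in \mathit{compose}(F,F(\X))$, since $F(\X')$ lies in $[F(\X)]_{=B}$ up to the analyzed-entry representation. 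Hence $\X\doteq\X'\in\phi_\mathit{lc}(F)$.
\item If one recipe is a label $\X$ and the other is $f(r_1',\dots)$, then $\X\doteq f(r_1',\dots)\in\phi_\mathit{lc}(F)$ directly.
\item If both are composed, $f(s_1',\ldots)$ and $g(u_1',\ldots)$, arising from $B$-variants $s,s'$ of $t$: when $f=g$ and the arguments correspond, the induction hypothesis applies to the arguments and congruence finishes the case.
\end{itemize}

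The main obstacle is the $\cexp$ case, where the two recipes decompose $t$ via different $B$-representatives. For example, take $\cexp(\cexp(r,r_x),r_y)$ versus $\cexp(\cexp(r',r_y),r_x)$. Here the immediate subterms are not $E$-equal, so the induction does not apply componentwise. I would handle this by noting that $B$ itself is an axiom of $E$, so $\cexp(\cexp(r,r_x),r_y)\doteq\cexp(\cexp(r,r_y),r_x)$ holds in every $E$-model. This swaps the outer exponents and brings both recipes to a common shape, after which the induction hypothesis applies to strictly smaller subrecipes, e.g.\ $\cexp(r,r_y)$ versus $\cexp(r',r_y)$. A permutation argument on the finite $B$-class makes this rigorous.

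Another subtlety is labels introduced by analysis ($\X\leftarrow r\mapsto t$) whose stored term is only $E$-equal to a constructive one. For these I would use the corresponding entry $\X\doteq r$ in $\mathit{checks}(F)$ to replace $\X$ by its derivation where necessary. I would also rely on full analysis so that such labels never need destructor reasoning beyond what Lemma~\ref{lemma:constructive-checks} already covers.
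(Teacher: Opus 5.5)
Your reduction to constructive checks via Lemma~\ref{lemma:constructive-checks}, the label cases via $\phi_\mathit{lc}(F)$, and the same-head congruence case all coincide with the paper. The $\cexp$ case, however, has a genuine gap. A $B$-swap can only permute exponents that appear syntactically in a recipe, but the mismatched exponent may be hidden inside the value of a label.

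Take $\X_a\mapsto\cexp(z,y)$, $\X_b\mapsto\cexp(z,x)$, $\X_1\mapsto x$, $\X_2\mapsto y$ with $x,y,z$ private, and the valid check $\cexp(\X_a,\X_1)\doteq\cexp(\X_b,\X_2)$. The components are pairwise not $E$-equal, and neither recipe contains a nested $\cexp$. No permutation of the $B$-class brings the two recipes to a common shape. In fact, your constructive stage never uses that $F$ is fully analyzed, and without that hypothesis the claim is false. For this four-entry frame, $\mathit{checks}(F)=\emptyset$ and $\phi_\mathit{lc}(F)$ contains only trivial equations $\X\doteq\X$, because $\mathit{compose}(F,z)=\emptyset$. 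Yet the interpretation that agrees with $F$ except $\I(\X_b)=\cexp(c,x)$, for a fresh constant $c$, violates the check. So no purely permutation-based argument can be completed.

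The missing idea is exactly the paper's ``only hard part''. In the hard case you have $F(r_{1a})\algeq\cexp(s,F(r_{2b}))$ and $F(r_{2a})\algeq\cexp(s,F(r_{1b}))$. Apply Lemma~\ref{lemma:compose-in-analyzed} to the non-constructive recipe $\expinv(r_{1a},r_{2b})$ to obtain a constructive $r'$ with $F(r')\algeq s$. This is where full analysis is indispensable: in the example, $\X_a$ must have been decomposed with the composable key $\X_2$. Then replace $r_{1a}$ by $\cexp(r',r_{2b})$ and $r_{2a}$ by $\cexp(r',r_{1b})$, and a single $B$-swap closes the case. The paper justifies the first replacement via $\expinv(r_{1a},r_{2b})\doteq r'\land\vexp(r_{1a},r_{2b})\doteq\top$, reduced using the size bound of Lemma~\ref{lemma:constructive-checks}, and the second via the induction hypothesis.

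These replacements enlarge recipes, so induction on $|r_a\doteq r_b|$ does not directly license them. If you instead induct on the size of the normal form $F(r_a)\downarrow_{R/B}$, both replacements follow directly from the induction hypothesis. This is presumably what your ``induction on the structure of $\mathit{compose}$'' should amount to. It works because $F(r_{1a})$ and $F(r_{2a})$ are strictly smaller terms, and $\cexp(r',r_{2b})$ and $\cexp(r',r_{1b})$ are constructive recipes for them.
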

\begin{proof}
  Because of Lemma~\ref{lemma:constructive-checks} we can assume that $r_1$ and $r_2$ are constructive. 
  We do induction on $|r_1 \doteq r_2|$. 

  If either $r_1$ or $r_2$ is a label, we are done. 

  If $r_1 = f(r_{11},\ldots,r_{1n})$ and $r_2 = g(r_{21},\ldots,r_{2n})$, we 
  must have $f = g$. Furthermore, if $r_{11} \doteq r_{21},\ldots,r_{1n} \doteq r_{2n}$ are 
  valid checks in $F$, and we apply the induction hypothesis to get checks $\phi_1,\ldots,\phi_n$. 
  We have $\bigwedge \mathit{checks}(F) \land \bigwedge \phi_1 \land \ldots \land \bigwedge \phi_n \models_E r_1 \doteq r_2$ 
  and are done. 

  The only hard part is if $r_1 = \exp(r_{1a},r_{1b})$ and $r_2 = \exp(r_{2a},r_{2b})$ when 
  $F(r_{1a}) \not\algeq F(r_{2a})$ or $F(r_{1b}) \not\algeq F(r_{2b})$. 
  We then have either that $F(r_{1a}) \algeq \exp(s,t)$ and $t \algeq F(r_{2b})$ or 
  $F(r_{2a}) \algeq \exp(s,t)$ and $t \algeq F(r_{1b})$, for some constructive $s$ and $t$. 
  We consider the former and the proof for the latter is symmetric. \\
  We get $\expinv(F(r_{1a}),F(r_{2b})) = s$ and $\vexp(F(r_{1a}),F(r_{2b})) = \top$.
  From $\expinv(F(r_{1a}),F(r_{2b})) = s$ and Lemma~\ref{lemma:compose-in-analyzed}, 
  we get a constructive recipe $r'$ such that $F(r') = s$. 
  Observe that $|r'| < |r_{1a}|$. \\
  By Lemma~\ref{lemma:constructive-checks} we get that there must be constructive 
  checks $\phi^{r'}$ such that $\bigwedge \mathit{checks}(F) \land \bigwedge \phi^{r'} \models_E \expinv(r_{1a},r_{2b}) \doteq r' \land \vexp(r_{1a},r_{2b}) = \top$
  and for all $r_{\phi} \doteq r_\phi' \in \phi^{r'}$ we have $|r_{\phi} \doteq r_\phi'| \leq \mathit{max}(2*|r_{2b}|,|r_{1a}|+|r'|)$.
  Since $|r_{2b}| < |r_{1a}|$ we further have that for all $r_{\phi} \doteq r_\phi' \in \phi^{r'}$ we have $|r_{\phi} \doteq r_\phi'| < |r_1 \doteq r_2|$.
  The induction hypothesis then gives a set of checks $\phi^{r'}_{lc} \subseteq \phi_\mathit{lc}(F)$ 
  such that $\bigwedge \mathit{checks}(F) \land \bigwedge \phi^{r'}_{lc} \models_E \bigwedge \phi^{r'}$.\\
  Furthermore, we have that $F(\exp(r',r_{1b})) \algeq F(r_{2a})$, 
  which together with the induction hypothesis gives a set of checks $\phi^{r_{2a}}_{lc} \subseteq \phi_\mathit{lc}(F)$ 
  such that $\bigwedge \mathit{checks}(F) \land \bigwedge \phi^{r_{2a}}_{lc} \models_E \exp(r',r_{1b}) \doteq r_{2a}$.

  \noindent
  Combining the above with \\
  $\expinv(r_{1a},r_{2b}) \doteq r' \land \vexp(r_{1a},r_{2b}) = \top \models_E r_{1a}\doteq \exp(r',r_{2b})$ \\
  and \\
  $r_{1a}\doteq \exp(r',r_{2b}) \land \exp(r',r_{1b}) \doteq r_{2a} \models_E \exp(r_{1a},r_{1b}) \doteq \exp(r_{2a},r_{2b})$ \\ 
  yields \\
  $\bigwedge \mathit{checks}(F) \land \bigwedge \phi^{r'}_{lc} \land \bigwedge \phi^{r_{2a}}_{lc} \models_E r_1 \doteq r_2$ \\
  and we are done. 
\end{proof}

It still remains to solve \textit{recipe composition}. We have already shown that $\mathit{compose}$ will compose a recipe 
for a term in a single fully analyzed frame if it exists, but it remains to be shown that we can use it to compose a 
recipe that will work for all given frames. To prove this, we will leverage the complete set of checks. 

\begin{definition}
    A frame $F$ is \emph{fully checked} if for any $r_1 \doteq r_2 \in \phi(F)$ we have $\mathit{checks}(F) \models_E r_1 \doteq r_2$.
\end{definition}

The following checking procedure simply mimics the one from Section~\ref{sec:projection}:
\begin{definition}[Checking Procedure]
  We want to calculate the \emph{checked extensions} of a set of frames, $\{F_1,\ldots,F_n\}$ where the frames are pairwise compatible. 
  This is given by $\Chck(\phi_\mathit{lc}(F_1) \cup \ldots \cup \phi_\mathit{lc}(F_n),\{F_1,\ldots,F_n\})$ where $\Chck$ is the \emph{check-extension function}.
  $\Chck(\phi,\{F_1,\ldots,F_n\})$ is defined by three cases: \\
  If $\phi = \{\}$, return $\{\{F_1,\ldots,F_n\}\}$. \\ 
  If $n = 0$, return $\{\}$. \\ 
  If $r_1 \doteq r_2 \in \phi$, let 
    $\phi' = \phi \setminus r_1 \doteq r_2$,
    $\mathit{FS}^+ = \{F.r_1 \doteq r_2 \mid F \in \{F_1,\ldots,F_n\} \land F(r_1) \algeq F(r_2)\}$\\ and 
    $\mathit{FS}^- = \{F  \mid  F \in \{F_1,\ldots,F_n\} \land F(r_1) \not\algeq F(r_2)\}$,\\
    and then return $\Chck(\phi',\mathit{FS}^+) \cup \Chck(\phi',\mathit{FS}^-)$
\end{definition}
 
This procedure is sound:
\begin{lemma}
    If $\{F_1,\ldots,F_n\}$ is a set of compatible and analyzed frames, 
    then every $\mathit{FS} \in \Chck(\phi_\mathit{lc}(F_1) \cup \ldots \cup \phi_\mathit{lc}(F_n),\{F_1,\ldots,F_n\})$
    will contain compatible, analyzed and checked frames.  
\end{lemma}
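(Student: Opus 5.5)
Three properties must be shown for every output set $\mathit{FS}$: its frames are pairwise compatible, fully analyzed, and fully checked. The first two are invariants that $\Chck$ preserves. The third follows from Theorem~\ref{thm:complete-checks}.

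\textbf{Invariants.} I would first prove, by induction on the recursion of $\Chck(\phi,\cdot)$ (measured by $|\phi|$, which strictly decreases in the third case), a slightly stronger claim. Let $\phi$ be any finite set of checks with $\phi\supseteq\phi'$, and let $\mathit{FS}_0$ be a set of compatible, fully analyzed frames. Then every output set of $\Chck(\phi,\mathit{FS}_0)$ consists of compatible, fully analyzed frames. Moreover, each output frame has the form $F.e_1.\cdots.e_k$ with $F\in\mathit{FS}_0$, where the $e_j$ are exactly the checks of $\phi$ that were taken in the positive branch on the path to that output.

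Compatibility is preserved because in the $\mathit{FS}^+$ branch every frame is extended by the \emph{same} entry $r_1\doteq r_2$, matching the last rule of $\simeq$. In the $\mathit{FS}^-$ branch the frames are unchanged. Well-formedness of $F.r_1\doteq r_2$ holds since we only add it when $F(r_1)\algeq F(r_2)$.

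Full analyzedness is preserved because appending a check entry changes neither $\dom(F)$ nor $F(r)$ for any recipe. The definition of ``analyzed'' only requires the existence of certain entries, and extending the frame keeps all old entries. The base cases ($\phi=\emptyset$, or $n=0$) are trivial.

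\textbf{Fully checked.} Fix an output set $\mathit{FS}$ and some $F'\in\mathit{FS}$, say $F'=F_i.e_1\cdots e_k$. I need $\mathit{checks}(F')\models_E r_1\doteq r_2$ for every $r_1\doteq r_2\in\phi(F')$. Since $\phi(F')=\phi(F_i)$ and $F_i$ is fully analyzed, Theorem~\ref{thm:complete-checks} gives
\[\textstyle\bigwedge\mathit{checks}(F_i)\wedge\bigwedge\phi_\mathit{lc}(F_i)\models_E r_1\doteq r_2.\]
It therefore suffices to show $\phi_\mathit{lc}(F_i)\subseteq\mathit{checks}(F')$, using that $\mathit{checks}(F_i)\subseteq\mathit{checks}(F')$.

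Every $c\in\phi_\mathit{lc}(F_i)$ lies in the initial set $\bigcup_j\phi_\mathit{lc}(F_j)$, so $c$ was processed at some point on the path leading to $F'$. Since $c\in\phi(F_i)$, i.e.\ $F_i$ satisfies $c$, and adding checks never changes the value of a recipe, the descendant of $F_i$ at that point lay in $\mathit{FS}^+$. It was therefore extended by $c$, so $c\in\mathit{checks}(F')$.

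\textbf{Main obstacle.} The delicate point is exactly this path argument: that each check in the union is processed exactly once along every branch, and that a frame satisfying it is never routed to $\mathit{FS}^-$. This needs the stronger induction statement tracking the prefix structure $F_i.e_1\cdots e_k$.

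A second subtlety is the case where the recursion stops early because the frame set became empty ($n=0$). That case produces no output sets, so the claim holds vacuously there.
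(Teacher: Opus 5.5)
Your proof is correct and follows the same route as the paper. Compatibility and full analyzedness are preserved because $\Chck$ only appends the same check entry to every frame in $\mathit{FS}^+$. Full checkedness then follows from Theorem~\ref{thm:complete-checks} once $\phi_\mathit{lc}(F_i)\subseteq\mathit{checks}(F')$, which the paper simply asserts and you justify via the path argument. (The stray condition ``$\phi\supseteq\phi'$'' in your strengthened induction is undefined but harmless, since the invariant holds for arbitrary finite $\phi$.)
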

\begin{proof}
  It is easy to see that compatibility is preserved by the procedure, and that they are still analyzed (no terms are added to the frames).

  If $F \in \mathit{FS}$ then $\phi_\mathit{lc}(F) \subseteq \mathit{checks}(F)$, and we are done by Theorem~\ref{thm:complete-checks}.
\end{proof}

As an immediate result, we get that there is no way to distinguish the frames that are grouped together by $\Chck$:
\begin{lemma}\label{lemma:check-in-one-check-in-all}
    If $F_1,\ldots,F_n$ are compatible and fully checked frames, for any $F_i$ and any check $r_1 \doteq r_2$ 
    we have that $F_i(r_1) \algeq F_i(r_2)$ if and only if $F_1(r_1) \algeq F_1(r_2) \land \ldots \land F_n(r_1) \algeq F_n(r_2)$.
\end{lemma}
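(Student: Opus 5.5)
The direction from right to left is trivial: if $F_j(r_1)\algeq F_j(r_2)$ for all $j$, it holds in particular for $j=i$. The substance is the left-to-right direction. The plan is to transfer a single valid check from $F_i$ to every other frame through the shared set $\mathit{checks}(\cdot)$, which is identical for all compatible frames.

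\emph{Step 1: compatible frames have the same checks.} By induction on the derivation of $F_i\simeq F_j$, I show that compatible frames have the same domain and literally the same entries of the form $r\doteq r'$ and $\X\leftarrow r$. Hence $\mathit{checks}(F_i)=\mathit{checks}(F_j)$ as sets of recipe equations. Since $\simeq$ is only defined by the rules, I also note that it is symmetric and transitive, again by induction. This lets me treat ``pairwise compatible'' and ``all compatible with $F_1$'' interchangeably.

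\emph{Step 2: every frame is a model of its checks.} For any frame $F_j$, the frame itself, extended arbitrarily to an interpretation $\I_j$ on labels outside $\dom(F_j)$, is an $E$-model of $\bigwedge\mathit{checks}(F_j)$. This follows from well-formedness.
\begin{itemize}
\item Each entry $F_0.r\doteq r'$ satisfies $F_0(r)\algeq F_0(r')$, and $F_j$ agrees with $F_0$ on $\dom(F_0)$.
\item Each entry $\X\leftarrow r\mapsto t$ satisfies $F_0(r)=t=F_j(\X)$.
\end{itemize}
Only labels in $\dom(F_j)$ occur in these recipes, so the choice of extension is irrelevant.

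\emph{Step 3: the main argument.} Suppose $F_i(r_1)\algeq F_i(r_2)$. I take $\fv(r_1)\cup\fv(r_2)\subseteq\dom(F_i)$, since otherwise $F_i(r_k)$ is undefined. Then $r_1\doteq r_2\in\phi(F_i)$, so because $F_i$ is fully checked, $\mathit{checks}(F_i)\models_E r_1\doteq r_2$. By Step 1 this gives $\mathit{checks}(F_j)\models_E r_1\doteq r_2$. By Step 2, $\I_j$ is an $E$-model of $\mathit{checks}(F_j)$, so $\I_j(r_1)\algeq\I_j(r_2)$, that is, $F_j(r_1)\algeq F_j(r_2)$. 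This holds for every $j$.

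\emph{Main obstacle.} I expect the only delicate point to be the bookkeeping in Steps 1 and 2. One part is reading the definition of \emph{fully checked}, which uses $\models$ on a set, as the conjunction. The other is ensuring that domains coincide, so that the recipes $r_1,r_2$ are defined in every $F_j$. If $r_1,r_2$ contain labels outside the common domain, I would treat the statement as vacuous or restrict it to recipes over $\dom(F_1)$.
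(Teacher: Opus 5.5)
Your proof is correct and follows the paper's argument: compatibility gives $\mathit{checks}(F_1)=\cdots=\mathit{checks}(F_n)$, and the fully-checked property, combined with the fact that each frame satisfies its own checks, transfers a valid check from $F_i$ to every $F_j$. Your Step 2 simply spells out the well-formedness reasoning behind the paper's one-line claim that $F_i(r_1)\algeq F_i(r_2)$ holds iff $\mathit{checks}(F_i)\models_E r_1\doteq r_2$.
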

\begin{proof}
  We have $F_i(r_1) \algeq F_i(r_2)$ if and only if $\mathit{checks}(F_i) \models_E r_1 \doteq r_2$.
  Furthermore, by compatibility, we have $\mathit{checks}(F_1) = \ldots = \mathit{checks}(F_n)$.
\end{proof}

We can also show a version where the frames are not necessarily compatible, but one frame fulfills all the checks in the complete set of checks of another. 
This implies that when we have done all the checks induced by the frame during the translation procedure, it does not matter which 
recipe is picked for constructing a term during actual execution (they will all refer to the same term anyways).
\begin{lemma}\label{lemma:check-in-alternative}
  If $\phi$ is a complete set of checks for $F$ and $F'$ fulfills all the checks in $\phi$, then 
  for any two recipes $r_1$ and $r_2$ such that $F(r_1) =_E F(r_2)$ we will have $F'(r_1) =_E F'(r_2)$.
\end{lemma}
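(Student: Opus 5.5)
The plan is to unfold the definitions of a complete set of checks and of $E$-implication, and treat $F'$ as an interpretation. No algebra-specific reasoning is needed. The argument is a short chain of implications.

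First, fix the setting. The statement presupposes that $\phi$ is a complete set of checks for $F$ in the sense of the complete check problem. Concretely, $\phi$ is a finite set of equations between recipes over $\dom(F)$ such that $F(r_0)=_E F(r_0')$ implies $\phi\models_E r_0\doteq r_0'$ for every pair $r_0,r_0'$. We read the hypothesis ``$F'$ fulfills all the checks in $\phi$'' as: $F'(r)=_E F'(r')$ for each $r\doteq r'\in\phi$. We also take $F'$ to be defined on the labels occurring in $\phi$ and in $r_1,r_2$; it suffices that $\dom(F)\subseteq\dom(F')$, as is the case for the compatible frames arising in the translation.

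Second, extend $F'$ to an interpretation $\I$. On labels in $\dom(F')$ we set $\I(\X)=F'(\X)$, and we map all remaining labels to an arbitrary ground term. The terms in the frames at hand are concrete, so this gives ground terms; if $F'$ contains variables, we first apply an arbitrary ground substitution, or we observe that $=_E$ is closed under substitution. Then $\I(r)=F'(r)$ for every recipe $r$ over $\dom(F')$. By hypothesis $\I$ is an $E$-model of every equation in $\phi$, hence of $\phi$.

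Third, take $r_1,r_2$ with $F(r_1)=_E F(r_2)$. Completeness of $\phi$ gives $\phi\models_E r_1\doteq r_2$. By definition of $E$-implication, every $E$-model of $\phi$ is an $E$-model of $r_1\doteq r_2$. Applying this to $\I$ yields $\I(r_1)=_E\I(r_2)$, that is, $F'(r_1)=_E F'(r_2)$.

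The only delicate point is the second step. Interpretations are total maps to ground terms, while frames are finite and possibly non-ground. We must ensure that the labels of $r_1,r_2$ and $\phi$ lie in $\dom(F')$ and that the grounding does not affect the conclusion. I would handle this by stating the domain assumption explicitly and, for non-concrete frames, by noting that $=_E$ is stable under instantiation of variables by fresh constants. Everything else is immediate from the definitions.
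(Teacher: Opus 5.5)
Your proof is correct and uses the same argument as the paper: completeness turns $F(r_1)=_E F(r_2)$ into $\phi\models_E r_1\doteq r_2$, and $F'$, read as an $E$-model of $\phi$, must then satisfy $r_1\doteq r_2$. You also spell out the domain and grounding details that the paper leaves implicit. One small precision: if $F'$ is non-ground, an arbitrary ground substitution does not suffice, because you need the reflection direction. That is, you must replace variables by fresh constants via some $\theta$ and use that $\theta(s)=_E\theta(t)$ implies $s=_E t$ when the constants do not occur in $E$, which is what your final paragraph appeals to.
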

\begin{proof}
  Since $r_1 \doteq r_2$ is a valid check for $F$ it must be logically implied by $\phi$, 
  which, since $F'$ satisfies $\phi$, means that it is also a valid check in $F'$.
\end{proof}

For a set of compatible, fully analyzed, and fully checked frames, the \textit{recipe composition} problem is simply solved by $\mathit{compose}$:
\begin{thm}\label{thm:compose:correct}
    If $F_1,\ldots,F_n$ are compatible, fully checked, and fully analyzed frames, $t_1,\ldots,t_n$ are constructive terms, 
    and there exists an $r$ such that $F_1(r) \algeq t_1 \land \ldots \land F_n(r) \algeq t_n$,
    then $\mathit{compose}(F_1,t_1)$ is non-empty and for any $r' \in \mathit{compose}(F_1,t_1)$ we have $F_1(r') \algeq t_1 \land \ldots \land F_n(r') \algeq t_n$.
\end{thm}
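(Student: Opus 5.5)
Non-emptiness follows from Lemma~\ref{lemma:compose-in-analyzed}, and the transfer of the recipe from $F_1$ to the other frames follows from Lemma~\ref{lemma:check-in-one-check-in-all}.

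First, since $F_1$ is fully analyzed, $t_1$ is constructive and $F_1(r)\algeq t_1$, Lemma~\ref{lemma:compose-in-analyzed}(1) gives that $\mathit{compose}(F_1,t_1)$ is non-empty. Part (2) of the same lemma gives $F_1(r')\algeq t_1$ for every $r'\in\mathit{compose}(F_1,t_1)$. This settles the first conjunct.

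For the remaining frames, fix $r'\in\mathit{compose}(F_1,t_1)$. We have $F_1(r')\algeq t_1\algeq F_1(r)$, so $r'\doteq r$ is a valid check in $F_1$, i.e.\ $r'\doteq r\in\phi(F_1)$. The frames are compatible and fully checked, so Lemma~\ref{lemma:check-in-one-check-in-all} (with $i=1$) yields $F_j(r')\algeq F_j(r)$ for every $j$. Combining this with the hypothesis $F_j(r)\algeq t_j$ gives $F_j(r')\algeq t_j$ for all $j\in\{1,\ldots,n\}$, by transitivity of $\algeq$.

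The one point needing care is whether Lemma~\ref{lemma:check-in-one-check-in-all} applies. It requires the check to be over $\dom(F_1)$. This holds because $r$ is over the common domain of the compatible frames (each $F_j(r)$ is defined), and $r'$ is over $\dom(F_1)$ by construction of $\mathit{compose}$. If one prefers not to rely on that lemma, the same step follows directly. Full checkedness of $F_1$ gives $\mathit{checks}(F_1)\models_E r'\doteq r$. Compatibility gives $\mathit{checks}(F_j)=\mathit{checks}(F_1)$. Well-formedness means every entry of $\mathit{checks}(F_j)$ holds in $F_j$, so $F_j$, read as an interpretation, is an $E$-model of $\mathit{checks}(F_1)$ and hence of $r'\doteq r$.
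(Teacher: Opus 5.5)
Your proof is correct and follows the paper's own argument. Lemma~\ref{lemma:compose-in-analyzed} gives non-emptiness and $F_1(r')\algeq t_1$, Lemma~\ref{lemma:check-in-one-check-in-all} transfers $F_1(r)\algeq F_1(r')$ to every $F_j$, and transitivity finishes; your remarks on domains and your inlined version of the transfer step are sound, though not needed beyond what the paper does.
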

\begin{proof}
  We get the non-emptiness of $\mathit{compose}(F_1,t_1)$ and that for each $r'$ we have $F_1(r') \algeq t_1$ by Lemma~\ref{lemma:compose-in-analyzed}. 
  Since $F_1(r) \algeq F_1(r')$, we have by Lemma~\ref{lemma:check-in-one-check-in-all} that 
  $F_2(r) \algeq F_2(r') \land \ldots \land F_n(r) \algeq F_n(r')$, and are done. 
\end{proof}}

\diffver{\color{magenta}}
\subsection{A Theory with no Finite Set of Checks}\label{sec:infinite:checks}

Let $E$ be the following set of equations:

\noindent
$\{~\ord(0) \doteq \top,$ $\ord(s(x)) \doteq  \ord(x),$ $\ord(\omega) \doteq \top,$

$\lt(0,x) \doteq  \ord(x),~\lt(s(x),s(y)) \doteq  \lt(x,y),$

$\lt(s(x),\omega) \doteq 
\lt(x,\omega),~ g(x,h(y)) \doteq  \lt(x,y)~\}$.

\noindent where $g(\cdot,\cdot),0,s(\cdot)$, and $\top$ are public, and the others are
private (note that $\omega$ is also private). Let $F=[\X_1\mapsto h(\omega)]$.
Note that the corresponding term rewriting system (i.e., with
  $\ord(0)\rightarrow\top$ etc.) is convergent as can be shown with
  the critical pair methods (there are no critical pairs).

\begin{thm}
  No finite set of equations is a complete set of checks for $F$.
\end{thm}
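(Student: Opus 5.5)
The plan is to exhibit an infinite family of valid checks for $F$ and to show that any finite set $\phi$ of valid checks has an $E$-model that violates one member of that family. The natural family is
\[
\gamma_n \;:=\; g(s^n(0),\X_1)\doteq\top \qquad (n\in\mathbb{N}).
\]
Each $\gamma_n$ holds in $F$, because $g(s^n(0),h(\omega))\to\lt(s^n(0),\omega)\to^n\lt(0,\omega)\to\ord(\omega)\to\top$. The rewriting system obtained by orienting $E$ is convergent (no critical pairs). So, exactly as in Theorem~\ref{thm:word:problem}, $=_E$ is decided by comparing normal forms, and every computation below is a normal-form calculation.

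Suppose for contradiction that a finite $\phi=\{r_1\doteq r_1',\ldots,r_k\doteq r_k'\}$ is complete. Let $N$ exceed the sizes of all recipes in $\phi$. I want an interpretation $\I_N$ with $\I_N\models_E\phi$ but $\I_N\not\models_E\gamma_{M}$ for some $M$. Since $\gamma_M$ is valid in $F$, this contradicts completeness. The first candidate is $\I_N(\X_1)=h(s^N(0))$. Then $g(s^n(0),h(s^N(0)))$ normalizes to $\top$ for $n\le N$. For $n=N+1$ it normalizes to $\lt(s(0),0)$, which is irreducible and different from $\top$, so $\gamma_{N+1}$ fails.

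The core lemma is then: for public recipes $r,r'$ of size $<N$ with $F(r)=_E F(r')$, also $\I_N(r)=_E\I_N(r')$. I would prove it by induction on recipes, computing the normal form of $F(r)$. Public recipes are built from $g,0,s,\top$ and $\X_1$, and the only redex-producing pattern is $g(u,h(\cdot))$ with $u$ normal. I would classify $u$ into three cases:
\begin{itemize}
\item $u=s^j(0)$, which gives $\top$ in both $F$ and $\I_N$ (using $j<N$);
\item $u=s^j(x)$ with $x$ not $s$-headed and $x\neq0$, which gives $\lt(x,\omega)$ under $F$;
\item $u$ is some other term.
\end{itemize}
I would then relate normal forms under $F$ and under $\I_N$ through a translation $\theta$ and show that equality under $F$ transfers along $\theta$ when sizes stay below $N$.

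The main obstacle is the second case. Under $F$, the rule $\lt(s(x),\omega)\to\lt(x,\omega)$ collapses all $j$: for instance $g(s(\top),\X_1)\doteq g(\top,\X_1)$ is valid in $F$. Under $\I_N$, however, these become $\lt(\top,s^{N-1}(0))$ and $\lt(\top,s^N(0))$, which are distinct. So the naive candidate $\I_N$ does not satisfy all small valid checks, and $\phi$ might contain exactly such a check.

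The first repair I would try is to choose $\I_N(\X_1)$ so that it behaves like $h(\omega)$ on non-numeral arguments but is ``finite'' on numerals. Since $\omega$ is the only symbol with the collapsing rule, one option is to use a private $h$-argument built from $\omega$ and the private symbols. Another is to weaken the target: instead of transferring \emph{every} small valid check, show directly that $\phi\not\models_E\gamma_M$ for $M$ large. This can be done by choosing, for each of the finitely many checks in $\phi$, a witness value that respects it, using the fact that each check involves only numerals up to size $N$. Making this choice uniform is where I expect most of the work to lie.
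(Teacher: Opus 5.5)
\textbf{The gap cannot be closed.} The obstacle you isolate is not a technical hurdle; it is a counterexample to the statement. Neither of your repairs can be completed.

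The single valid check $g(s(\top),\X_1)\doteq g(\top,\X_1)$ already fixes $\X_1$ up to $E$. Take any $E$-model $\I$ of it and let $u=\I(\X_1)\downarrow$; normal forms exist and are unique because the oriented system is convergent.
\begin{itemize}
\item If $u$ is not of the form $h(y)$, the two sides have the distinct normal forms $g(s(\top),u)$ and $g(\top,u)$.
\item If $u=h(y)$, the sides reduce to $\lt(s(\top),y)\downarrow$ and $\lt(\top,y)$, and the latter is already normal. For $y=s(y')$ the former is $\lt(\top,y')\neq\lt(\top,s(y'))$. For $y=\omega$ both are $\lt(\top,\omega)$. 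Otherwise $\lt(s(\top),y)$ is normal and differs from $\lt(\top,y)$.
\end{itemize}
Hence $\I(\X_1)=_E h(\omega)=F(\X_1)$, so $\I(r)=_E F(r)$ for every recipe $r$. Therefore $\{g(s(\top),\X_1)\doteq g(\top,\X_1)\}$ is a finite complete set of checks for $F$, and the theorem is false as stated.

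This is why your two repairs fail:
\begin{itemize}
\item Any $\I_N(\X_1)$ that ``behaves like $h(\omega)$'' on the non-numeral argument $\top$ is $E$-equal to $h(\omega)$, and so satisfies every $\gamma_M$.
\item Per-check witnesses cannot help, because $\phi\not\models_E\gamma_M$ requires a single model of all of $\phi$ at once.
\end{itemize}

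\textbf{The paper's proof fails at the same point.} In the case $r=g(r_n,\X_1)$, $r'=g(r_n',\X_1)$ with $F(r_n)\neq_E F(r_n')$, it concludes $\ord(F(r_n))=_E\top=_E\ord(F(r_n'))$. The choice $r_n=s(\top)$, $r_n'=\top$ refutes this. It also has an off-by-one: $\I(\X_1)=h(s^{n_0+1}(0))$ does satisfy $g(s^{n_0+1}(0),\X_1)\doteq\top$, since $\lt(s^{n_0+1}(0),s^{n_0+1}(0))\to^*\ord(0)\to\top$. Your pairing of $h(s^N(0))$ with $\gamma_{N+1}$ is the correct one.

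What is needed is a different theory, not a cleverer argument. For instance, add a fresh private constant $c$ and equations $\lt(f(\vec x),y)\doteq c$ for every symbol $f\notin\{0,s\}$. This keeps the system convergent and removes the leak. Your plan then goes through as written. For recipes of size at most $N$, $\I_N(r)\downarrow$ is obtained from $F(r)\downarrow$ by replacing $h(\omega)$ with $h(s^N(0))$, and this replacement is injective on these normal forms. So $\I_N\models_E\phi$ while $\gamma_{N+1}$ fails.
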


\begin{proof}
  We assume a finite set $\phi$ of checks that are correct for $F$ (i.e.,
  $F(r)=_E F(r')$ for every $r\doteq r'\in\phi$). Without loss of
  generality, assume all checks are normalized terms.

  Let $n_0$ be the maximum $n$ such that $s^n(0)$ occurs in $\phi$, or
  $0$ if $s^n(0)$ does not occur in $\phi$. (Such an $n_0$ must exist
  since $\phi$ is finite.)

  Let $\I$ be any interpretation such that
  $\I(X_1)=h(s^{n_0+1}(0))$. Now it suffices to show that
  $\I\models\phi$, because
  $\I\not\models g(s^{n_0+1}(0),\X_1)\doteq\top$, which, since $F(g(s^{n_0+1}(0),\X_1))=_E\top$, shows that $\phi$ is not
  complete.

  Let $(r,r')\in\phi$ and we can rule out $r=r'$, and note that
  $F(r)=_EF(r')$. We give a case distinction on $r$:
  \begin{itemize}
  \item $r=X_1$ and $r=0$ is excluded because that can only be a sound
    check if $r=r'$.
  \item $r=s(r_0)$ then necessarily $r'=s(r_0')$ with
    $F(r_0)=_E F(r_0')$; thus we can just reduce this to the case of
    $r_0\doteq r_0'$.
  \item $r=g(r_n,r_m)$. Note that $r_m$ cannot be of the form
    $h(\cdot)$ because $h(\cdot)$ is private. Now case distinction on
    $r'$:
    \begin{itemize}
    \item $r'=\top$. Thus $F(r)=_E\top$, thus $F(r_m)=_E h(m)$ for
      some ground term $m$ and also $lt(F(r_n),m)=_E \top$. This
      forces $r_m=\X_1$ and $\ord(F(r_n))=_E\top$. That means $r_n$ can
      only be of the form $s^n(0)$ for some $n\leq n_0$. Thus,
      $\I\models_E r\doteq r'$.
    \item $r'=g(r_n',r_m')$. If both $F(r_n)=_E F(r_n')$ and
      $F(r_m)=F(r_m')$, then we can again reduce to the simpler checks
      $r_n\doteq r_n'$ and $r_m\doteq r_m'$.

      Otherwise, both $F(r)=_E \lt(x,y)$ and $F(r')=_E \lt(x',y')$ for
      some ground terms $x,x',y,y'$ and $\lt(x,y)=_E \lt(x',y')$.  In
      this case both $F(r_m)=_E h(m)$ and $F(r_m')=_E h(m')$ for some
      terms $m$ and $m'$, thus $r_m=r_m'=\X_1$. Since
      $F(r_n)\neq_E F(r_n')$ we have $\ord(F(r_n))=_E\top$ and
      $\ord(F(r_n'))=_E\top$. Thus, both $r_n=s^n(0)$ and $r_m=s^m(0)$
      for some $n,m\leq n_0$. Therefore, $\I\models_E r\doteq r'$.
    \item $r'$ cannot be of any other form while $F(r)=_E F(r')$.
    \end{itemize}
  \item $r=\top$. Then reduce to the case $r'\doteq r$.
  \end{itemize}
\end{proof}

\end{document}